\newlength{\figwidth}
\pgfplotsset{every non boxed x axis/.append style={x axis line style=->,>=latex},every non boxed y axis/.append style={y axis line style=->,>=latex}}
\newcommand{\safemath}[2]{\newcommand{#1}{\ensuremath{#2}\xspace}}
\newcommand{\ssa}{\mathsf{a}}
\newcommand{\ssb}{\mathsf{b}}
\newcommand{\ssc}{\mathsf{c}}
\newcommand{\ssd}{\mathsf{d}}
\newcommand{\sse}{\mathsf{e}}
\newcommand{\ssf}{\mathsf{f}}
\newcommand{\ssg}{\mathsf{g}}
\newcommand{\ssh}{\mathsf{h}}
\newcommand{\ssi}{\mathsf{i}}
\newcommand{\ssj}{\mathsf{j}}
\newcommand{\ssk}{\mathsf{k}}
\newcommand{\ssl}{\mathsf{l}}
\newcommand{\ssm}{\mathsf{m}}
\newcommand{\ssn}{\mathsf{n}}
\newcommand{\sso}{\mathsf{o}}
\newcommand{\ssp}{\mathsf{p}}
\newcommand{\ssq}{\mathsf{q}}
\newcommand{\ssr}{\mathsf{r}}
\newcommand{\sss}{\mathsf{s}}
\newcommand{\sst}{\mathsf{t}}
\newcommand{\ssu}{\mathsf{u}}
\newcommand{\ssv}{\mathsf{v}}
\newcommand{\ssw}{\mathsf{w}}
\newcommand{\ssx}{\mathsf{x}}
\newcommand{\ssy}{\mathsf{y}}
\newcommand{\ssz}{\mathsf{z}}
\safemath{\bmsa}{\bm{\ssa}}
\safemath{\bmsb}{\bm{\ssb}}
\safemath{\bmsc}{\bm{\ssc}}
\safemath{\bmsd}{\bm{\ssd}}
\safemath{\bmse}{\bm{\sse}}
\safemath{\bmsf}{\bm{\ssf}}
\safemath{\bmsg}{\bm{\ssg}}
\safemath{\bmsh}{\bm{\ssh}}
\safemath{\bmsi}{\bm{\ssi}}
\safemath{\bmsj}{\bm{\ssj}}
\safemath{\bmsk}{\bm{\ssk}}
\safemath{\bmsl}{\bm{\ssl}}
\safemath{\bmsm}{\bm{\ssm}}
\safemath{\bmsn}{\bm{\ssn}}
\safemath{\bmso}{\bm{\sso}}
\safemath{\bmsp}{\bm{\ssp}}
\safemath{\bmsq}{\bm{\ssq}}
\safemath{\bmsr}{\bm{\ssr}}
\safemath{\bmss}{\bm{\sss}}
\safemath{\bmst}{\bm{\sst}}
\safemath{\bmsu}{\bm{\ssu}}
\safemath{\bmsv}{\bm{\ssv}}
\safemath{\bmsw}{\bm{\ssw}}
\safemath{\bmsx}{\bm{\ssx}}
\safemath{\bmsy}{\bm{\ssy}}
\safemath{\bmsz}{\bm{\ssz}}
\bmdefine{\bmualphad}{\upalpha}
\bmdefine{\bmubetad}{\upbeta}
\bmdefine{\bmuchid}{\upchi}
\bmdefine{\bmudeltad}{\updelta}
\bmdefine{\bmuepsilond}{\upepsilon}
\bmdefine{\bmuvarepsilond}{\upvarepsilon}
\bmdefine{\bmuetad}{\upeta}
\bmdefine{\bmugammad}{\upgamma}
\bmdefine{\bmuiotad}{\upiota}
\bmdefine{\bmukappad}{\upkappa}
\bmdefine{\bmulambdad}{\uplambda}
\bmdefine{\bmumud}{\upmu}
\bmdefine{\bmunud}{\upnu}
\bmdefine{\bmuomegad}{\upomega}
\bmdefine{\bmuphid}{\upphi}
\bmdefine{\bmuvarphid}{\upvarphi}
\bmdefine{\bmupid}{\uppi}
\bmdefine{\bmuvarpid}{\upvarpi}
\bmdefine{\bmupsid}{\uppsi}
\bmdefine{\bmurhod}{\uprho}
\bmdefine{\bmuvarrhod}{\upvarrho}
\bmdefine{\bmusigmad}{\upsigma}
\bmdefine{\bmuvarsigmad}{\upvarsigma}
\bmdefine{\bmutaud}{\uptau}
\bmdefine{\bmuthetad}{\uptheta}
\bmdefine{\bmuvarthetad}{\upvartheta}
\bmdefine{\bmuupsilond}{\upupsilon}
\bmdefine{\bmuxid}{\upxi}
\bmdefine{\bmuzetad}{\upzeta}
\safemath{\bmua}{\mathbf{a}}
\safemath{\bmub}{\mathbf{b}}
\safemath{\bmuc}{\mathbf{c}}
\safemath{\bmud}{\mathbf{d}}
\safemath{\bmue}{\mathbf{e}}
\safemath{\bmuf}{\mathbf{f}}
\safemath{\bmug}{\mathbf{g}}
\safemath{\bmuh}{\mathbf{h}}
\safemath{\bmui}{\mathbf{i}}
\safemath{\bmuj}{\mathbf{j}}
\safemath{\bmuk}{\mathbf{k}}
\safemath{\bmul}{\mathbf{l}}
\safemath{\bmum}{\mathbf{m}}
\safemath{\bmun}{\mathbf{n}}
\safemath{\bmuo}{\mathbf{o}}
\safemath{\bmup}{\mathbf{p}}
\safemath{\bmuq}{\mathbf{q}}
\safemath{\bmur}{\mathbf{r}}
\safemath{\bmus}{\mathbf{s}}
\safemath{\bmut}{\mathbf{t}}
\safemath{\bmuu}{\mathbf{u}}
\safemath{\bmuv}{\mathbf{v}}
\safemath{\bmuw}{\mathbf{w}}
\safemath{\bmux}{\mathbf{x}}
\safemath{\bmuy}{\mathbf{y}}
\safemath{\bmuz}{\mathbf{z}}
\safemath{\bmualpha}{\bmualphad}
\safemath{\bmubeta}{\bmubetad}
\safemath{\bmuchi}{\bumchid}
\safemath{\bmudelta}{\bmudeltad}
\safemath{\bmuepsilon}{\bmuepsilond}
\safemath{\bmuvarepsilon}{\bmuvarepsilond}
\safemath{\bmueta}{\bmuetad}
\safemath{\bmugamma}{\bmugammad}
\safemath{\bmuiota}{\bmuiotad}
\safemath{\bmukappa}{\bmukappad}
\safemath{\bmulambda}{\bmulambdad}
\safemath{\bmumu}{\bmumud}
\safemath{\bmunu}{\bmunud}
\safemath{\bmuomega}{\bmuomegad}
\safemath{\bmuphi}{\bmuphid}
\safemath{\bmuvarphi}{\bmuvarphid}
\safemath{\bmupi}{\bmupid}
\safemath{\bmuvarpi}{\bmuvarpid}
\safemath{\bmupsi}{\bmupsid}
\safemath{\bmurho}{\bmurhod}
\safemath{\bmuvarrho}{\bmuvarrhod}
\safemath{\bmusigma}{\bmusigmad}
\safemath{\bmuvarsigma}{\bmuvarsigmad}
\safemath{\bmutau}{\bmutaud}
\safemath{\bmutheta}{\bmuthetad}
\safemath{\bmuvartheta}{\bmuvarthetad}
\safemath{\bmuupsilon}{\bmuupsilond}
\safemath{\bmuxi}{\bmuxid}
\safemath{\bmuzeta}{\bmuzetad}
\bmdefine{\bmiad}{a}
\bmdefine{\bmibd}{b}
\bmdefine{\bmicd}{c}
\bmdefine{\bmidd}{d}
\bmdefine{\bmied}{e}
\bmdefine{\bmifd}{f}
\bmdefine{\bmigd}{g}
\bmdefine{\bmihd}{h}
\bmdefine{\bmiid}{i}
\bmdefine{\bmijd}{j}
\bmdefine{\bmikd}{k}
\bmdefine{\bmild}{l}
\bmdefine{\bmimd}{m}
\bmdefine{\bmind}{n}
\bmdefine{\bmiod}{o}
\bmdefine{\bmipd}{p}
\bmdefine{\bmiqd}{q}
\bmdefine{\bmird}{r}
\bmdefine{\bmisd}{s}
\bmdefine{\bmitd}{t}
\bmdefine{\bmiud}{u}
\bmdefine{\bmivd}{v}
\bmdefine{\bmiwd}{w}
\bmdefine{\bmixd}{x}
\bmdefine{\bmiyd}{y}
\bmdefine{\bmizd}{z}
\bmdefine{\bmialphad}{\alpha}
\bmdefine{\bmibetad}{\beta}
\bmdefine{\bmichid}{\chi}
\bmdefine{\bmideltad}{\delta}
\bmdefine{\bmiepsilond}{\epsilon}
\bmdefine{\bmivarepsilond}{\varepsilon}
\bmdefine{\bmietad}{\eta}
\bmdefine{\bmigammad}{\gamma}
\bmdefine{\bmiiotad}{\iota}
\bmdefine{\bmikappad}{\kappa}
\bmdefine{\bmivarkappad}{\varkappa}
\bmdefine{\bmilambdad}{\lambda}
\bmdefine{\bmimud}{\mu}
\bmdefine{\bminud}{\nu}
\bmdefine{\bmiomegad}{\omega}
\bmdefine{\bmiphid}{\phi}
\bmdefine{\bmivarphid}{\varphi}
\bmdefine{\bmipid}{\pi}
\bmdefine{\bmivarpid}{\varpi}
\bmdefine{\bmipsid}{\psi}
\bmdefine{\bmirhod}{\rho}
\bmdefine{\bmivarrhod}{\varrho}
\bmdefine{\bmisigmad}{\sigma}
\bmdefine{\bmivarsigmad}{\varsigma}
\bmdefine{\bmitaud}{\tau}
\bmdefine{\bmithetad}{\theta}
\bmdefine{\bmivarthetad}{\vartheta}
\bmdefine{\bmiupsilond}{\upsilon}
\bmdefine{\bmixid}{\xi}
\bmdefine{\bmizetad}{\zeta}
\safemath{\bmia}{\bmiad}
\safemath{\bmib}{\bmibd}
\safemath{\bmic}{\bmicd}
\safemath{\bmid}{\bmidd}
\safemath{\bmie}{\bmied}
\safemath{\bmif}{\bmifd}
\safemath{\bmig}{\bmigd}
\safemath{\bmih}{\bmihd}
\safemath{\bmii}{\bmiid}
\safemath{\bmij}{\bmijd}
\safemath{\bmik}{\bmikd}
\safemath{\bmil}{\bmild}
\safemath{\bmim}{\bmimd}
\safemath{\bmin}{\bmind}
\safemath{\bmio}{\bmiod}
\safemath{\bmip}{\bmipd}
\safemath{\bmiq}{\bmiqd}
\safemath{\bmir}{\bmird}
\safemath{\bmis}{\bmisd}
\safemath{\bmit}{\bmitd}
\safemath{\bmiu}{\bmiud}
\safemath{\bmiv}{\bmivd}
\safemath{\bmiw}{\bmiwd}
\safemath{\bmix}{\bmixd}
\safemath{\bmiy}{\bmiyd}
\safemath{\bmiz}{\bmizd}
\safemath{\bmialpha}{\bmialphad}
\safemath{\bmibeta}{\bmibetad}
\safemath{\bmichi}{\bmichid}
\safemath{\bmidelta}{\bmideltad}
\safemath{\bmiepsilon}{\bmiepsilond}
\safemath{\bmivarepsilon}{\bmivarepsilond}
\safemath{\bmieta}{\bmietad}
\safemath{\bmigamma}{\bmigammad}
\safemath{\bmiiota}{\bmiiotad}
\safemath{\bmikappa}{\bmikappad}
\safemath{\bmivarkappa}{\bmivarkappad}
\safemath{\bmilambda}{\bmilambdad}
\safemath{\bmimu}{\bmimud}
\safemath{\bminu}{\bminud}
\safemath{\bmiomega}{\bmiomegad}
\safemath{\bmiphi}{\bmiphid}
\safemath{\bmivarphi}{\bmivarphid}
\safemath{\bmipi}{\bmipid}
\safemath{\bmivarpi}{\bmivarpid}
\safemath{\bmipsi}{\bmipsid}
\safemath{\bmirho}{\bmirhod}
\safemath{\bmivarrho}{\bmivarrhod}
\safemath{\bmisigma}{\bmisigmad}
\safemath{\bmivarsigma}{\bmivarsigmad}
\safemath{\bmitau}{\bmitaud}
\safemath{\bmitheta}{\bmithetad}
\safemath{\bmivartheta}{\bmivarthetad}
\safemath{\bmiupsilon}{\bmiupsilond}
\safemath{\bmixi}{\bmixid}
\safemath{\bmizeta}{\bmizetad}
\bmdefine{\bmuDeltad}{\Updelta}
\bmdefine{\bmuGammad}{\Upgamma}
\bmdefine{\bmuLambdad}{\Uplambda}
\bmdefine{\bmuOmegad}{\Upomega}
\bmdefine{\bmuPhid}{\Upphi}
\bmdefine{\bmuPid}{\Uppi}
\bmdefine{\bmuPsid}{\Uppsi}
\bmdefine{\bmuSigmad}{\Upsigma}
\bmdefine{\bmuThetad}{\Uptheta}
\bmdefine{\bmuUpsilond}{\Upupsilon}
\bmdefine{\bmuXid}{\Upxi}
\safemath{\bmuA}{\mathbf{A}}
\safemath{\bmuB}{\mathbf{B}}
\safemath{\bmuC}{\mathbf{C}}
\safemath{\bmuD}{\mathbf{D}}
\safemath{\bmuE}{\mathbf{E}}
\safemath{\bmuF}{\mathbf{F}}
\safemath{\bmuG}{\mathbf{G}}
\safemath{\bmuH}{\mathbf{H}}
\safemath{\bmuI}{\mathbf{I}}
\safemath{\bmuJ}{\mathbf{J}}
\safemath{\bmuK}{\mathbf{K}}
\safemath{\bmuL}{\mathbf{L}}
\safemath{\bmuM}{\mathbf{M}}
\safemath{\bmuN}{\mathbf{N}}
\safemath{\bmuO}{\mathbf{O}}
\safemath{\bmuP}{\mathbf{P}}
\safemath{\bmuQ}{\mathbf{Q}}
\safemath{\bmuR}{\mathbf{R}}
\safemath{\bmuS}{\mathbf{S}}
\safemath{\bmuT}{\mathbf{T}}
\safemath{\bmuU}{\mathbf{U}}
\safemath{\bmuV}{\mathbf{V}}
\safemath{\bmuW}{\mathbf{W}}
\safemath{\bmuX}{\mathbf{X}}
\safemath{\bmuY}{\mathbf{Y}}
\safemath{\bmuZ}{\mathbf{Z}}
\safemath{\bmuZero}{\mathbf{0}}
\safemath{\bmuOne}{\mathbf{1}}
\safemath{\bmuDelta}{\bmuDeltad}
\safemath{\bmuGamma}{\bmuGammad}
\safemath{\bmuLambda}{\bmuLambdad}
\safemath{\bmuOmega}{\bmuOmegad}
\safemath{\bmuPhi}{\bmuPhid}
\safemath{\bmuPi}{\bmuPid}
\safemath{\bmuPsi}{\bmuPsid}
\safemath{\bmuSigma}{\bmuSigmad}
\safemath{\bmuTheta}{\bmuThetad}
\safemath{\bmuUpsilon}{\bmuUpsilond}
\safemath{\bmuXi}{\bmuXid}
\bmdefine{\bmiAd}{A}
\bmdefine{\bmiBd}{B}
\bmdefine{\bmiCd}{C}
\bmdefine{\bmiDd}{D}
\bmdefine{\bmiEd}{E}
\bmdefine{\bmiFd}{F}
\bmdefine{\bmiGd}{G}
\bmdefine{\bmiHd}{H}
\bmdefine{\bmiId}{I}
\bmdefine{\bmiJd}{J}
\bmdefine{\bmiKd}{K}
\bmdefine{\bmiLd}{L}
\bmdefine{\bmiMd}{M}
\bmdefine{\bmiOd}{N}
\bmdefine{\bmiPd}{O}
\bmdefine{\bmiQd}{P}
\bmdefine{\bmiRd}{R}
\bmdefine{\bmiSd}{S}
\bmdefine{\bmiTd}{T}
\bmdefine{\bmiUd}{U}
\bmdefine{\bmiVd}{V}
\bmdefine{\bmiWd}{W}
\bmdefine{\bmiXd}{X}
\bmdefine{\bmiYd}{Y}
\bmdefine{\bmiZd}{Z}
\bmdefine{\bmiDeltad}{\Delta}
\bmdefine{\bmiGammad}{\Gamma}
\bmdefine{\bmiLambdad}{\Lambda}
\bmdefine{\bmiOmegad}{\Omega}
\bmdefine{\bmiPhid}{\Phi}
\bmdefine{\bmiPid}{\Pi}
\bmdefine{\bmiPsid}{\Psi}
\bmdefine{\bmiSigmad}{\Sigma}
\bmdefine{\bmiThetad}{\Theta}
\bmdefine{\bmiUpsilond}{\Upsilon}
\bmdefine{\bmiXid}{\Xi}
\safemath{\bmiA}{\bmiAd}
\safemath{\bmiB}{\bmiBd}
\safemath{\bmiC}{\bmiCd}
\safemath{\bmiD}{\bmiDd}
\safemath{\bmiE}{\bmiEd}
\safemath{\bmiF}{\bmiFd}
\safemath{\bmiG}{\bmiGd}
\safemath{\bmiH}{\bmiHd}
\safemath{\bmiI}{\bmiId}
\safemath{\bmiJ}{\bmiJd}
\safemath{\bmiK}{\bmiKd}
\safemath{\bmiL}{\bmiLd}
\safemath{\bmiM}{\bmiMd}
\safemath{\bmiN}{\bmiNd}
\safemath{\bmiO}{\bmiOd}
\safemath{\bmiP}{\bmiPd}
\safemath{\bmiQ}{\bmiQd}
\safemath{\bmiR}{\bmiRd}
\safemath{\bmiS}{\bmiSd}
\safemath{\bmiT}{\bmiTd}
\safemath{\bmiU}{\bmiUd}
\safemath{\bmiV}{\bmiVd}
\safemath{\bmiW}{\bmiWd}
\safemath{\bmiX}{\bmiXd}
\safemath{\bmiY}{\bmiYd}
\safemath{\bmiZ}{\bmiZd}
\safemath{\bmiDelta}{\bmiDeltad}
\safemath{\bmiGamma}{\bmiGammad}
\safemath{\bmiLambda}{\bmiLambdad}
\safemath{\bmiOmega}{\bmiOmegad}
\safemath{\bmiPhi}{\bmiPhid}
\safemath{\bmiPi}{\bmiPid}
\safemath{\bmiPsi}{\bmiPsid}
\safemath{\bmiSigma}{\bmiSigmad}
\safemath{\bmiTheta}{\bmiThetad}
\safemath{\bmiUpsilon}{\bmiUpsilond}
\safemath{\bmiXi}{\bmiXid}
\safemath{\evA}{\mathcal{A}}
\safemath{\evB}{\mathcal{B}}
\safemath{\evC}{\mathcal{C}}
\safemath{\evD}{\mathcal{D}}
\safemath{\evE}{\mathcal{E}}
\safemath{\evF}{\mathcal{F}}
\safemath{\evG}{\mathcal{G}}
\safemath{\evH}{\mathcal{H}}
\safemath{\evI}{\mathcal{I}}
\safemath{\evJ}{\mathcal{J}}
\safemath{\evK}{\mathcal{K}}
\safemath{\evL}{\mathcal{L}}
\safemath{\evM}{\mathcal{M}}
\safemath{\evN}{\mathcal{N}}
\safemath{\evO}{\mathcal{O}}
\safemath{\evP}{\mathcal{P}}
\safemath{\evQ}{\mathcal{Q}}
\safemath{\evR}{\mathcal{R}}
\safemath{\evS}{\mathcal{S}}
\safemath{\evT}{\mathcal{T}}
\safemath{\evU}{\mathcal{U}}
\safemath{\evV}{\mathcal{V}}
\safemath{\evW}{\mathcal{W}}
\safemath{\evX}{\mathcal{X}}
\safemath{\evY}{\mathcal{Y}}
\safemath{\evZ}{\mathcal{Z}}
\safemath{\setA}{\mathcal{A}}
\safemath{\setB}{\mathcal{B}}
\safemath{\setC}{\mathcal{C}}
\safemath{\setD}{\mathcal{D}}
\safemath{\setE}{\mathcal{E}}
\safemath{\setF}{\mathcal{F}}
\safemath{\setG}{\mathcal{G}}
\safemath{\setH}{\mathcal{H}}
\safemath{\setI}{\mathcal{I}}
\safemath{\setJ}{\mathcal{J}}
\safemath{\setK}{\mathcal{K}}
\safemath{\setL}{\mathcal{L}}
\safemath{\setM}{\mathcal{M}}
\safemath{\setN}{\mathcal{N}}
\safemath{\setO}{\mathcal{O}}
\safemath{\setP}{\mathcal{P}}
\safemath{\setQ}{\mathcal{Q}}
\safemath{\setR}{\mathcal{R}}
\safemath{\setS}{\mathcal{S}}
\safemath{\setT}{\mathcal{T}}
\safemath{\setU}{\mathcal{U}}
\safemath{\setV}{\mathcal{V}}
\safemath{\setW}{\mathcal{W}}
\safemath{\setX}{\mathcal{X}}
\safemath{\setY}{\mathcal{Y}}
\safemath{\setZ}{\mathcal{Z}}
\safemath{\emptySet}{\varnothing}
\safemath{\colA}{\mathscr{A}}
\safemath{\colB}{\mathscr{B}}
\safemath{\colC}{\mathscr{C}}
\safemath{\colD}{\mathscr{D}}
\safemath{\colE}{\mathscr{E}}
\safemath{\colF}{\mathscr{F}}
\safemath{\colG}{\mathscr{G}}
\safemath{\colH}{\mathscr{H}}
\safemath{\colI}{\mathscr{I}}
\safemath{\colJ}{\mathscr{J}}
\safemath{\colK}{\mathscr{K}}
\safemath{\colL}{\mathscr{L}}
\safemath{\colM}{\mathscr{M}}
\safemath{\colN}{\mathscr{N}}
\safemath{\colO}{\mathscr{O}}
\safemath{\colP}{\mathscr{P}}
\safemath{\colQ}{\mathscr{Q}}
\safemath{\colR}{\mathscr{R}}
\safemath{\colS}{\mathscr{S}}
\safemath{\colT}{\mathscr{T}}
\safemath{\colU}{\mathscr{U}}
\safemath{\colV}{\mathscr{V}}
\safemath{\colW}{\mathscr{W}}
\safemath{\colX}{\mathscr{X}}
\safemath{\colY}{\mathscr{Y}}
\safemath{\colZ}{\mathscr{Z}}
\safemath{\opA}{\mathbb{A}}
\safemath{\opB}{\mathbb{B}}
\safemath{\opC}{\mathbb{C}}
\safemath{\opD}{\mathbb{D}}
\safemath{\opE}{\mathbb{E}}
\safemath{\opF}{\mathbb{F}}
\safemath{\opG}{\mathbb{G}}
\safemath{\opH}{\mathbb{H}}
\safemath{\opI}{\mathbb{I}}
\safemath{\opJ}{\mathbb{J}}
\safemath{\opK}{\mathbb{K}}
\safemath{\opL}{\mathbb{L}}
\safemath{\opM}{\mathbb{M}}
\safemath{\opN}{\mathbb{N}}
\safemath{\opO}{\mathbb{O}}
\safemath{\opP}{\mathbb{P}}
\safemath{\opQ}{\mathbb{Q}}
\safemath{\opR}{\mathbb{R}}
\safemath{\opS}{\mathbb{S}}
\safemath{\opT}{\mathbb{T}}
\safemath{\opU}{\mathbb{U}}
\safemath{\opV}{\mathbb{V}}
\safemath{\opW}{\mathbb{W}}
\safemath{\opX}{\mathbb{X}}
\safemath{\opY}{\mathbb{Y}}
\safemath{\opZ}{\mathbb{Z}}
\safemath{\opZero}{\mathbb{O}}
\safemath{\identityop}{\opI}
\safemath{\sca}{a}
\safemath{\scb}{b}
\safemath{\scc}{c}
\safemath{\scd}{d}
\safemath{\sce}{e}
\safemath{\scf}{f}
\safemath{\scg}{g}
\safemath{\sch}{h}
\safemath{\sci}{i}
\safemath{\scj}{j}
\safemath{\sck}{k}
\safemath{\scl}{l}
\safemath{\scm}{m}
\safemath{\scn}{n}
\safemath{\sco}{o}
\safemath{\scp}{p}
\safemath{\scq}{q}
\safemath{\scr}{r}
\safemath{\scs}{s}
\safemath{\sct}{t}
\safemath{\scu}{u}
\safemath{\scv}{v}
\safemath{\scw}{w}
\safemath{\scx}{x}
\safemath{\scy}{y}
\safemath{\scz}{z}
\safemath{\scA}{A}
\safemath{\scB}{B}
\safemath{\scC}{C}
\safemath{\scD}{D}
\safemath{\scE}{E}
\safemath{\scF}{F}
\safemath{\scG}{G}
\safemath{\scH}{H}
\safemath{\scI}{I}
\safemath{\scJ}{J}
\safemath{\scK}{K}
\safemath{\scL}{L}
\safemath{\scM}{M}
\safemath{\scN}{N}
\safemath{\scO}{O}
\safemath{\scP}{P}
\safemath{\scQ}{Q}
\safemath{\scR}{R}
\safemath{\scS}{S}
\safemath{\scT}{T}
\safemath{\scU}{U}
\safemath{\scV}{V}
\safemath{\scW}{W}
\safemath{\scX}{X}
\safemath{\scY}{Y}
\safemath{\scZ}{Z}
\safemath{\scalpha}{\alpha}
\safemath{\scbeta}{\beta}
\safemath{\scchi}{\chi}
\safemath{\scdelta}{\delta}
\safemath{\scepsilon}{\epsilon}
\safemath{\scvarepsilon}{\varepsilon}
\safemath{\sceta}{\eta}
\safemath{\scgamma}{\gamma}
\safemath{\sciota}{\iota}
\safemath{\sckappa}{\kappa}
\safemath{\scvarkappa}{\varkappa}
\safemath{\sclambda}{\lambda}
\safemath{\scmu}{\mu}
\safemath{\scnu}{\nu}
\safemath{\scomega}{\omega}
\safemath{\scphi}{\phi}
\safemath{\scvarphi}{\varphi}
\safemath{\scpi}{\pi}
\safemath{\scvarpi}{\varpi}
\safemath{\scpsi}{\psi}
\safemath{\scrho}{\rho}
\safemath{\scvarrho}{\varrho}
\safemath{\scsigma}{\sigma}
\safemath{\scvarsigma}{\varsigma}
\safemath{\sctau}{\tau}
\safemath{\sctheta}{\theta}
\safemath{\scvartheta}{\vartheta}
\safemath{\scupsilon}{\upsilon}
\safemath{\scxi}{\xi}
\safemath{\sczeta}{\zeta}
\safemath{\veca}{\mathbf{a}}
\safemath{\vecb}{\mathbf{b}}
\safemath{\vecc}{\mathbf{c}}
\safemath{\vecd}{\mathbf{d}}
\safemath{\vece}{\mathbf{e}}
\safemath{\vecf}{\mathbf{f}}
\safemath{\vecg}{\mathbf{g}}
\safemath{\vech}{\mathbf{h}}
\safemath{\veci}{\mathbf{i}}
\safemath{\vecj}{\mathbf{j}}
\safemath{\veck}{\mathbf{k}}
\safemath{\vecl}{\mathbf{l}}
\safemath{\vecm}{\mathbf{m}}
\safemath{\vecn}{\mathbf{n}}
\safemath{\veco}{\mathbf{o}}
\safemath{\vecp}{\mathbf{p}}
\safemath{\vecq}{\mathbf{q}}
\safemath{\vecr}{\mathbf{r}}
\safemath{\vecs}{\mathbf{s}}
\safemath{\vect}{\mathbf{t}}
\safemath{\vecu}{\mathbf{u}}
\safemath{\vecv}{\mathbf{v}}
\safemath{\vecw}{\mathbf{w}}
\safemath{\vecx}{\mathbf{x}}
\safemath{\vecy}{\mathbf{y}}
\safemath{\vecz}{\mathbf{z}}
\safemath{\veczero}{\mathbf{0}}
\safemath{\vecone}{\mathbf{1}}
\safemath{\vecalpha}{\upalpha}
\safemath{\vecbeta}{\upbeta}
\safemath{\vecchi}{\upchi}
\safemath{\vecdelta}{\updelta}
\safemath{\vecepsilon}{\upepsilon}
\safemath{\vecvarepsilon}{\upvarepsilon}
\safemath{\veceta}{\upeta}
\safemath{\vecgamma}{\upgamma}
\safemath{\veciota}{\upiota}
\safemath{\veckappa}{\upkappa}
\safemath{\veclambda}{\uplambda}
\safemath{\vecmu}{\text{\textmu}}
\safemath{\vecnu}{\upnu}
\safemath{\vecomega}{\upomega}
\safemath{\vecphi}{\upphi}
\safemath{\vecvarphi}{\upvarphi}
\safemath{\vecpi}{\uppi}
\safemath{\vecvarpi}{\upvarpi}
\safemath{\vecpsi}{\uppsi}
\safemath{\vecrho}{\uprho}
\safemath{\vecvarrho}{\upvarrho}
\safemath{\vecsigma}{\upsigma}
\safemath{\vecvarsigma}{\upvarsigma}
\safemath{\vectau}{\uptau}
\safemath{\vectheta}{\uptheta}
\safemath{\vecvartheta}{\upvartheta}
\safemath{\vecupsilon}{\upupsilon}
\safemath{\vecxi}{\upxi}
\safemath{\veczeta}{\upzeta}
\safemath{\vecac}{a}
\safemath{\vecbc}{b}
\safemath{\veccc}{c}
\safemath{\vecdc}{d}
\safemath{\vecec}{e}
\safemath{\vecfc}{f}
\safemath{\vecgc}{g}
\safemath{\vechc}{h}
\safemath{\vecic}{i}
\safemath{\vecjc}{j}
\safemath{\veckc}{k}
\safemath{\veclc}{l}
\safemath{\vecmc}{m}
\safemath{\vecnc}{n}
\safemath{\vecoc}{o}
\safemath{\vecpc}{p}
\safemath{\vecqc}{q}
\safemath{\vecrc}{r}
\safemath{\vecsc}{s}
\safemath{\vectc}{t}
\safemath{\vecuc}{u}
\safemath{\vecvc}{v}
\safemath{\vecwc}{w}
\safemath{\vecxc}{x}
\safemath{\vecyc}{y}
\safemath{\veczc}{z}
\safemath{\matA}{\mathbf{A}}
\safemath{\matB}{\mathbf{B}}
\safemath{\matC}{\mathbf{C}}
\safemath{\matD}{\mathbf{D}}
\safemath{\matE}{\mathbf{E}}
\safemath{\matF}{\mathbf{F}}
\safemath{\matG}{\mathbf{G}}
\safemath{\matH}{\mathbf{H}}
\safemath{\matI}{\mathbf{I}}
\safemath{\matJ}{\mathbf{J}}
\safemath{\matK}{\mathbf{K}}
\safemath{\matL}{\mathbf{L}}
\safemath{\matM}{\mathbf{M}}
\safemath{\matN}{\mathbf{N}}
\safemath{\matO}{\mathbf{O}}
\safemath{\matP}{\mathbf{P}}
\safemath{\matQ}{\mathbf{Q}}
\safemath{\matR}{\mathbf{R}}
\safemath{\matS}{\mathbf{S}}
\safemath{\matT}{\mathbf{T}}
\safemath{\matU}{\mathbf{U}}
\safemath{\matV}{\mathbf{V}}
\safemath{\matW}{\mathbf{W}}
\safemath{\matX}{\mathbf{X}}
\safemath{\matY}{\mathbf{Y}}
\safemath{\matZ}{\mathbf{Z}}
\safemath{\matzero}{\mathbf{0}}
\safemath{\matDelta}{\Updelta}
\safemath{\matGamma}{\Upgammma}
\safemath{\matLambda}{\Uplambda}
\safemath{\matOmega}{\Upomega}
\safemath{\matPhi}{\Upphi}
\safemath{\matPi}{\Uppi}
\safemath{\matPsi}{\Uppsi}
\safemath{\matSigma}{\Upsigma}
\safemath{\matTheta}{\Uptheta}
\safemath{\matUpsilon}{\Upupsilon}
\safemath{\matXi}{\Upxi}
\safemath{\matidentity}{\matI}
\safemath{\vecunit}{\vece} 
\safemath{\matone}{\matO}
\safemath{\matAc}{a}
\safemath{\matBc}{b}
\safemath{\matCc}{c}
\safemath{\matDc}{d}
\safemath{\matEc}{e}
\safemath{\matFc}{f}
\safemath{\matGc}{g}
\safemath{\matHc}{h}
\safemath{\matIc}{i}
\safemath{\matJc}{j}
\safemath{\matKc}{k}
\safemath{\matLc}{l}
\safemath{\matMc}{m}
\safemath{\matNc}{n}
\safemath{\matOc}{o}
\safemath{\matPc}{p}
\safemath{\matQc}{q}
\safemath{\matRc}{r}
\safemath{\matSc}{s}
\safemath{\matTc}{t}
\safemath{\matUc}{u}
\safemath{\matVc}{v}
\safemath{\matWc}{w}
\safemath{\matXc}{x}
\safemath{\matYc}{y}
\safemath{\matZc}{z}
\safemath{\rnda}{\bmia}
\safemath{\rndb}{\bmib}
\safemath{\rndc}{\bmic}
\safemath{\rndd}{\bmid}
\safemath{\rnde}{\bmie}
\safemath{\rndf}{\bmif}
\safemath{\rndg}{\bmig}
\safemath{\rndh}{\bmih}
\safemath{\rndi}{\bmii}
\safemath{\rndj}{\bmij}
\safemath{\rndk}{\bmik}
\safemath{\rndl}{\bmil}
\safemath{\rndm}{\bmim}
\safemath{\rndn}{\bmin}
\safemath{\rndo}{\bmio}
\safemath{\rndp}{\bmip}
\safemath{\rndq}{\bmiq}
\safemath{\rndr}{\bmir}
\safemath{\rnds}{\bmis}
\safemath{\rndt}{\bmit}
\safemath{\rndu}{\bmiu}
\safemath{\rndv}{\bmiv}
\safemath{\rndw}{\bmiw}
\safemath{\rndx}{\bmix}
\safemath{\rndy}{\bmiy}
\safemath{\rndz}{\bmiz}
\safemath{\rndA}{\bmiA}
\safemath{\rndB}{\bmiB}
\safemath{\rndC}{\bmiC}
\safemath{\rndD}{\bmiD}
\safemath{\rndE}{\bmiE}
\safemath{\rndF}{\bmiF}
\safemath{\rndG}{\bmiG}
\safemath{\rndH}{\bmiH}
\safemath{\rndI}{\bmiI}
\safemath{\rndJ}{\bmiJ}
\safemath{\rndK}{\bmiK}
\safemath{\rndL}{\bmiL}
\safemath{\rndM}{\bmiM}
\safemath{\rndN}{\bmiN}
\safemath{\rndO}{\bmiO}
\safemath{\rndP}{\bmiP}
\safemath{\rndQ}{\bmiQ}
\safemath{\rndR}{\bmiR}
\safemath{\rndS}{\bmiS}
\safemath{\rndT}{\bmiT}
\safemath{\rndU}{\bmiU}
\safemath{\rndV}{\bmiV}
\safemath{\rndW}{\bmiW}
\safemath{\rndX}{\bmiX}
\safemath{\rndY}{\bmiY}
\safemath{\rndZ}{\bmiZ}
\safemath{\rndalpha}{\bmialpha}
\safemath{\rndbeta}{\bmibeta}
\safemath{\rndchi}{\bmichi}
\safemath{\rnddelta}{\bmidelta}
\safemath{\rndepsilon}{\bmiepsilon}
\safemath{\rndvarepsilon}{\bmivarepsilon}
\safemath{\rndeta}{\bmieta}
\safemath{\rndgamma}{\bmigamma}
\safemath{\rndiota}{\bmiiota}
\safemath{\rndkappa}{\bmikappa}
\safemath{\rndlambda}{\bmilambda}
\safemath{\rndmu}{\bmimu}
\safemath{\rndnu}{\bminu}
\safemath{\rndomega}{\bmiomega}
\safemath{\rndphi}{\bmiphi}
\safemath{\rndvarphi}{\bmivarphi}
\safemath{\rndpi}{\bmipi}
\safemath{\rndvarpi}{\bmivarpi}
\safemath{\rndpsi}{\bmipsi}
\safemath{\rndrho}{\bmirho}
\safemath{\rndvarrho}{\bmivarrho}
\safemath{\rndsigma}{\bmisigma}
\safemath{\rndvarsigma}{\bmivarsigma}
\safemath{\rndtau}{\bmitau}
\safemath{\rndtheta}{\bmitheta}
\safemath{\rndvartheta}{\bmivartheta}
\safemath{\rndupsilon}{\bmiupsilon}
\safemath{\rndxi}{\bmixi}
\safemath{\rndzeta}{\bmizeta}
\safemath{\rveca}{\bmua}
\safemath{\rvecb}{\bmub}
\safemath{\rvecc}{\bmuc}
\safemath{\rvecd}{\bmud}
\safemath{\rvece}{\bmue}
\safemath{\rvecf}{\bmuf}
\safemath{\rvecg}{\bmug}
\safemath{\rvech}{\bmuh}
\safemath{\rveci}{\bmui}
\safemath{\rvecj}{\bmuj}
\safemath{\rveck}{\bmuk}
\safemath{\rvecl}{\bmul}
\safemath{\rvecm}{\bmum}
\safemath{\rvecn}{\bmun}
\safemath{\rveco}{\bmuo}
\safemath{\rvecp}{\bmup}
\safemath{\rvecq}{\bmuq}
\safemath{\rvecr}{\bmur}
\safemath{\rvecs}{\bmus}
\safemath{\rvect}{\bmut}
\safemath{\rvecu}{\bmuu}
\safemath{\rvecv}{\bmuv}
\safemath{\rvecw}{\bmuw}
\safemath{\rvecx}{\bmux}
\safemath{\rvecy}{\bmuy}
\safemath{\rvecz}{\bmuz}
\safemath{\rvecalpha}{\bmualpha}
\safemath{\rvecbeta}{\bmubeta}
\safemath{\rvecchi}{\bmuchi}
\safemath{\rvecdelta}{\bmudelta}
\safemath{\rvecepsilon}{\bmuepsilon}
\safemath{\rvecvarepsilon}{\bmuvarepsilon}
\safemath{\rveceta}{\bmueta}
\safemath{\rvecgamma}{\bmugamma}
\safemath{\rveciota}{\bmuiota}
\safemath{\rveckappa}{\bmukappa}
\safemath{\rveclambda}{\bmulambda}
\safemath{\rvecmu}{\bmumu}
\safemath{\rvecnu}{\bmunu}
\safemath{\rvecomega}{\bmuomega}
\safemath{\rvecphi}{\bmuphi}
\safemath{\rvecvarphi}{\bmuvarphi}
\safemath{\rvecpi}{\bmupi}
\safemath{\rvecvarpi}{\bmuvarpi}
\safemath{\rvecpsi}{\bmupsi}
\safemath{\rvecrho}{\bmurho}
\safemath{\rvecvarrho}{\bmuvarrho}
\safemath{\rvecsigma}{\bmusigma}
\safemath{\rvecvarsigma}{\bmuvarsigma}
\safemath{\rvectau}{\bmutau}
\safemath{\rvectheta}{\bmutheta}
\safemath{\rvecvartheta}{\bmuvartheta}
\safemath{\rvecupsilon}{\bmuupsilon}
\safemath{\rvecxi}{\bmuxi}
\safemath{\rveczeta}{\bmuzeta}
\safemath{\rmatA}{\bmuA}
\safemath{\rmatB}{\bmuB}
\safemath{\rmatC}{\bmuC}
\safemath{\rmatD}{\bmuD}
\safemath{\rmatE}{\bmuE}
\safemath{\rmatF}{\bmuF}
\safemath{\rmatG}{\bmuG}
\safemath{\rmatH}{\bmuH}
\safemath{\rmatI}{\bmuI}
\safemath{\rmatJ}{\bmuJ}
\safemath{\rmatK}{\bmuK}
\safemath{\rmatL}{\bmuL}
\safemath{\rmatM}{\bmuM}
\safemath{\rmatN}{\bmuN}
\safemath{\rmatO}{\bmuO}
\safemath{\rmatP}{\bmuP}
\safemath{\rmatQ}{\bmuQ}
\safemath{\rmatR}{\bmuR}
\safemath{\rmatS}{\bmuS}
\safemath{\rmatT}{\bmuT}
\safemath{\rmatU}{\bmuU}
\safemath{\rmatV}{\bmuV}
\safemath{\rmatW}{\bmuW}
\safemath{\rmatX}{\bmuX}
\safemath{\rmatY}{\bmuY}
\safemath{\rmatZ}{\bmuZ}
\safemath{\rmatDelta}{\bmuDelta}
\safemath{\rmatGamma}{\bmuGamma}
\safemath{\rmatLambda}{\bmuLambda}
\safemath{\rmatOmega}{\bmuOmega}
\safemath{\rmatPhi}{\bmuPhi}
\safemath{\rmatPi}{\bmuPi}
\safemath{\rmatPsi}{\bmuPsi}
\safemath{\rmatSigma}{\bmuSigma}
\safemath{\rmatTheta}{\bmuTheta}
\safemath{\rmatUpsilon}{\bmuUpsilon}
\safemath{\rmatXi}{\bmuXi}
\newenvironment{textbmatrix}{	\setlength{\arraycolsep}{2.5pt}%
								\big[\begin{matrix}}{\end{matrix}\big]%
								\raisebox{0.08ex}{\vphantom{M}}}
 \def\btm{\begin{textbmatrix}}
 \def\etm{\end{textbmatrix}}
\newcommand{\lefto}{\mathopen{}\left}
\DeclareMathOperator{\diag}{diag}			
\DeclareMathOperator{\adj}{adj}				
\newcommand{\vectorize}[1]{\mathrm{vec}(#1)} 
\DeclareMathOperator{\sign}{sign}			
\DeclareMathOperator*{\argmin}{arg\;min}		
\DeclareMathOperator{\kron}{\otimes}			
\DeclareMathOperator{\Exop}{\opE}			
\DeclareMathOperator{\conv}{\star}			
\DeclareMathOperator{\landauO}{\mathcal{O}}
\safemath{\fun}{\scf}						
\safemath{\altfun}{\scg}						
\safemath{\vectr}{\veca}	 	
\safemath{\vectrc}{\vecac}	 	
\safemath{\vrbl}{t}						
\safemath{\altvrbl}{y}						
\safemath{\aaltvrbl}{z}						
\safemath{\vvrbl}{\vecx}						
\safemath{\altvvrbl}{\vecy}						
\safemath{\aaltvvrbl}{\vecz}						
\safemath{\aaltfun}{\sch}
\safemath{\bel}{\sce}					
\safemath{\altbel}{\sce}					
\safemath{\frel}{g}					
\safemath{\altfrel}{g}					
\safemath{\dfrel}{\tilde{g}}					
\safemath{\altdfrel}{\tilde{g}}					
\safemath{\mat}{\matA}						
\safemath{\matc}{\matAc}						
\safemath{\altmat}{\matB}						
\safemath{\altmatc}{\matBc}						
\safemath{\altvectr}{\vecv}						
\safemath{\altvectrc}{\vecvc}						
\newcommand{\nullspace}{\setN}	 			
\newcommand{\simplifiedmathchoice}[2]{\mathchoice{#1}{#2}{#2}{#2}}
\newcommand{\Ex}[2]{\simplifiedmathchoice{\ensuremath{\Exop_{#1}\lefto[#2\right]}}{\ensuremath{\Exop_{#1}\bigl[#2\bigr]}}} 	
\newcommand{\abs}[1]{\mathchoice{{\left\lvert#1\right\rvert}}{{\bigl\lvert#1\bigr\rvert}}{{\left\lvert#1\right\rvert}}{{\left\lvert#1\right\rvert}}}		
\newcommand{\union}{\cup}					
\newcommand{\intersect}{\cap}				
\newcommand{\twonorm}[1]{\lVert#1\rVert_2}		
\newcommand{\onenorm}[1]{\lVert#1\rVert_1}		
\newcommand{\infnorm}[1]{\lVert#1\rVert_{\infty}}		
\newcommand{\zeronorm}[1]{\lVert#1\rVert_0}		
\newcommand{\vecnorm}[1]{\lVert#1\rVert}		
\newcommand{\conj}[1]{\ensuremath{#1^{*}}} 	
\newcommand{\tp}[1]{\ensuremath{#1^{\mathsf{T}}}} 		
\newcommand{\herm}[1]{\ensuremath{#1^{\mathsf{H}}}} 	
\newcommand{\inv}[1]{\ensuremath{#1^{-1}}} 	
\safemath{\dirac}{\delta}					
\safemath{\diracp}{\dirac(\time)}			
\safemath{\krond}{\dirac}					
\safemath{\indfun}{I}						
\safemath{\stepfun}{u}						
\safemath{\upto}{\uparrow}
\safemath{\downto}{\downarrow}
\safemath{\iu}{\mathrm{i}}							
\safemath{\maj}{\succ}
\newcommand{\dftmat}[1]{\matF_{#1}}			
\safemath{\mdft}{\dftmat{}}					
\safemath{\runity}{\beta}					
\safemath{\eval}{\lambda}					
\safemath{\veval}{\veclambda}				
\safemath{\littleo}{\sco}					
\let\im\undefined
\safemath{\re}{\Re}				
\safemath{\im}{\Im}				
\safemath{\euclidspace}{\complexset}			
\safemath{\confunspace}{\setC}				
\newcommand{\banachseqspace}[1]{l^{#1}}		
\safemath{\hilseqspace}{\banachseqspace{2}}	
\newcommand{\banachfunspace}[1]{\setL^{#1}}	
\safemath{\hilfunspace}{\banachfunspace{2}}	
\safemath{\hilfunspacep}{\hilfunspace(\complexset)}	
\safemath{\schwarzspace}{\setS}				
\newcommand{\hadj}[1]{#1^{\star}}			
\safemath{\SNR}{\rho} 				
\safemath{\SINR}{\text{\sc sinr}} 				
\safemath{\No}{N_0}							
\safemath{\Es}{E_s}							
\safemath{\Eb}{E_b}							
\safemath{\EbNo}{\frac{\Eb}{\No}}
\safemath{\EsNo}{\frac{\Es}{\No}}
\safemath{\NoVar}{\variance}                 
\let\time\undefined
\safemath{\time}{\sct}						
\safemath{\dtime}{\sck}						
\safemath{\delay}{\sctau}					
\safemath{\ddelay}{\scl}						
\safemath{\doppler}{\scnu}					
\safemath{\ddoppler}{\scm}					
\safemath{\freq}{\scf}						
\safemath{\dfreq}{\scn}						
\safemath{\Dtime}{\Delta\time}
\safemath{\Dfreq}{\Delta\freq}
\safemath{\Ddtime}{\dtime}
\safemath{\Ddfreq}{\dfreq}
\safemath{\bandwidth}{\scB}
\safemath{\maxdoppler}{\doppler_{0}}			
\safemath{\maxdelay}{\delay_{0}}				
\safemath{\spread}{\Delta_{\CHop}}			
\DeclareMathOperator{\CHop}{\ensuremath{\opH}} 
\safemath{\kernel}{\rndk_{\CHop}}			
\safemath{\kernelp}{\kernel(\time,\time')}	
\safemath{\tvir}{\rndh_{\CHop}}				
\safemath{\tvirp}{\tvir(\time,\delay)}		
\safemath{\tvirc}{\conj{\rndh}_{\CHop}}		
\safemath{\tvtf}{\rndl_{\CHop}}				
\safemath{\tvtfp}{\tvtf(\time,\freq)}			
\safemath{\tvtfc}{\conj{\rndl}_{\CHop}}		
\safemath{\spf}{\rnds_{\CHop}}				
\safemath{\spfp}{\spf(\doppler,\delay)}		
\safemath{\spfc}{\conj{\rnds}_{\CHop}}		
\safemath{\bff}{\rndb_{\CHop}}				
\safemath{\bffp}{\bff(\doppler,\freq)}		
\safemath{\irc}{\scr_{\rndh}}				
\safemath{\tfc}{\scr_{\rndl}}				
\safemath{\spc}{\scr_{\rnds}}				
\safemath{\bfc}{\scr_{\rndb}}				
\safemath{\scaf}{\scc_{\rnds}}				
\safemath{\scafp}{\scaf(\doppler,\delay)}		
\safemath{\ccf}{\scc_{\rndl}}				
\safemath{\ccfp}{\ccf(\Dtime,\Dfreq)}			
\safemath{\cic}{\scc_{\rndh}}				
\safemath{\cicp}{\cic(\Dtime,\delay)}			
\safemath{\mi}{\scI}							
\safemath{\capacity}{\scC}					
\DeclareMathOperator{\Prob}{\opP}		
\safemath{\normal}{\mathcal{N}}			
\safemath{\jpg}{\mathcal{CN}}			
\safemath{\uniform}{\mathcal{U}}				
\safemath{\mchain}{\leftrightarrow}		
\safemath{\dB}{\,\mathrm{dB}}
\safemath{\dBm}{\,\mathrm{dBm}}
\safemath{\Hz}{\,\mathrm{Hz}}
\safemath{\kHz}{\,\mathrm{kHz}}
\safemath{\MHz}{\,\mathrm{MHz}}
\safemath{\GHz}{\,\mathrm{GHz}}
\safemath{\THz}{\,\mathrm{THz}}
\safemath{\s}{\,\mathrm{s}}
\safemath{\ms}{\,\mathrm{ms}}
\safemath{\mus}{\,\mathrm{\text{\textmu}s}}
\safemath{\ns}{\,\mathrm{ns}}
\safemath{\ps}{\,\mathrm{ps}}
\safemath{\meter}{\,\mathrm{m}}
\safemath{\mm}{\,\mathrm{mm}}
\safemath{\cm}{\,\mathrm{cm}}
\safemath{\nm}{\,\mathrm{nm}}
\safemath{\m}{\,\mathrm{m}}
\safemath{\W}{\,\mathrm{W}}
\safemath{\mW}{\, \mathrm{mW}}
\safemath{\J}{\,\mathrm{J}}
\safemath{\K}{\,\mathrm{K}}
\safemath{\bit}{\,\mathrm{bit}}
\safemath{\nat}{\,\mathrm{nat}}
\safemath{\define}{\triangleq}					
\providecommand{\inner}[2]{\ensuremath{\left\langle#1,#2\right\rangle}}
\safemath{\equivalent}{\sim}
\safemath{\distas}{\sim}					
\safemath{\sdiff}{\Delta}				
\safemath{\setdiff}{\setminus}				
\safemath{\reals}{\mathbb R}
\safemath{\positivereals}{\reals^{+}}
\safemath{\integers}{\mathbb Z}
\safemath{\posint}{\integers^{+}}
\safemath{\naturals}{\mathbb N}
\safemath{\posnaturals}{\naturals^{+}}
\safemath{\complexset}{\mathbb C}
\safemath{\rationals}{\mathbb Q}
\newcommand{\natseg}[2]{[#1 \text{\phantom{\tiny{.}}:\phantom{\tiny{.}}} #2]} 
\DeclarePairedDelimiter\floor{\lfloor}{\rfloor}
\safemath{\iSet}{\setI}
\safemath{\rel}{\bowtie}					
\safemath{\eqrel}{\sim}					
\safemath{\rlord}{\leq}					
\safemath{\slord}{<}						
\safemath{\rpord}{\preceq}				
\safemath{\rrpord}{\succeq}				
\safemath{\spord}{\prec}					
\safemath{\sig}{\sigma}					
\safemath{\metric}{d}					
\safemath{\setfun}{\Phi}					
\safemath{\measure}{\mu}					
\safemath{\altmeasure}{\lambda}					
\newcommand{\outerm}[1]{#1^{\star}}		
\newcommand{\innerm}[1]{#1_{\star}}		
\safemath{\omeasure}{\outerm{\measure}}		
\safemath{\imeasure}{\innerm{\measure}}		
\safemath{\aecol}{\colS^{\star}_{\measure}} 
\safemath{\emeasure}{\bar{\measure}_{0}}	
\safemath{\rmeasure}{\tilde{\measure}}	
\safemath{\bmeasure}{\measure_{0}}		
\safemath{\glength}{\measure_{\altfun}}	
\safemath{\lebmea}{\lambda}				
\safemath{\blebmea}{\lebmea_{0}}			
\safemath{\sfun}{s}						
\safemath{\absintspace}{\colL^{1}}		
\safemath{\sqintspace}{\colL^{2}}		
\safemath{\abssumspace}{l^{1}}		
\safemath{\sqsumspace}{l^{2}}		
\safemath{\sfield}{\setF}				
\safemath{\vectors}{\setV}				
\safemath{\vecspace}{(\vectors,\sfield)}	
\safemath{\linspace}{\setV}				
\safemath{\clinspace}{(\linspace,\sfield)} 
\safemath{\nspace}{\setU}				
\safemath{\metspace}{\setM}				
\safemath{\bspace}{\setB}				
\safemath{\ipspace}{\setG}				
\safemath{\hilspace}{\setH}				
\safemath{\blospace}{\setG}				
\safemath{\lop}{\opT}					
\safemath{\altlop}{\opS}					
\safemath{\nullsp}{\nullspace(\lop)}		
\safemath{\lfun}{l}						
\safemath{\altlfun}{g}					
\newcommand{\dual}[1]{#1^{'}}			
\safemath{\dsum}{\oplus}					
\safemath{\funspace}{\colL}				
\renewcommand{\adj}[1]{#1^{\dagger}}		
\safemath{\adjlop}{\adj{\lop}}			
\safemath{\hadjlop}{\hadj{\lop}}			
\safemath{\tow}{\xrightarrow{w}}			
\safemath{\tows}{\xrightarrow{w^{*}}}		
\safemath{\cparam}{\lambda}
\safemath{\lopl}{\lop_{\cparam}}		
\safemath{\iop}{\opI}					
\safemath{\resolop}{\opR}				
\safemath{\resolvent}{\resolop_{\cparam}(\lop)}	
\safemath{\reset}{\setQ}
\safemath{\spectrum}{\setS}
\safemath{\resolset}{\reset(\lop)}		
\safemath{\lopspec}{\spectrum(\lop)}		
\safemath{\pspec}{\spectrum_{p}(\lop)}	
\safemath{\cspec}{\spectrum_{c}(\lop)}	
\safemath{\rspec}{\spectrum_{r}(\lop)}	
\safemath{\ev}{\cparam}					
\newcommand{\specrad}[1]{r_{#1}}			
\safemath{\lopsrad}{\specrad{\lop}}		
\safemath{\pop}{\opP}					
\safemath{\specfam}{\colE}				
\safemath{\specop}{\opE_{\cparam}}		
\safemath{\altspecop}{\opE_{\mu}}		
\safemath{\vmulti}{\vecone}				
\safemath{\unitaryop}{\opU}				
\safemath{\sval}{\sigma}					
\safemath{\corrcoef}{\rho}				
\safemath{\sangle}{\theta}				
\let\time\undefined
\safemath{\iset}{\setI}				
\safemath{\shift}{\nu}
\safemath{\scale}{\alpha}
\safemath{\time}{t}
\safemath{\specfreq}{\theta}	
\newcommand{\transopgen}[1]{\opT_{#1}} 
\safemath{\transop}{\transopgen{\delay}}
\newcommand{\modopgen}[1]{\opM_{#1}}	
\safemath{\modop}{\modopgen{\shift}}
\newcommand{\dilopgen}[1]{\opD_{#1}}	
\safemath{\dilop}{\dilopgen{\scale}}
\safemath{\fram}{\setF}				
\safemath{\dfram}{\dual{\fram}}		
\safemath{\ufb}{B}					
\safemath{\lfb}{A}					
\safemath{\sop}{\hadj{\aop}}				
\safemath{\aop}{\opT}			
\safemath{\fop}{\opS}				
\safemath{\daop}{\tilde\opT}			
\safemath{\dsop}{\hadj{\tilde\opT}}				
\safemath{\ifop}{\inv{\fop}}			
\safemath{\rifop}{\fop^{-1/2}}			
\safemath{\transeq}{\setT}			
\safemath{\nfun}{\Phi}				
\safemath{\funvec}{\vecf}			
\safemath{\altfunvec}{\vecg}
\safemath{\samplespace}{\Omega}
\safemath{\probspace}{(\samplespace,\sfield,\Prob)}	
\safemath{\ccoef}{\rho}			
\safemath{\infstate}{\vecpi}				
\safemath{\typset}{\setA_{\epsilon}^{(N)}}	
\safemath{\expequal}{\doteq}				
\safemath{\code}{C}						
\safemath{\dstringset}{\setD^{\star}}		
\safemath{\cwlength}{l}					
\safemath{\elength}{L}					
\safemath{\extension}{C^{\star}}			
\safemath{\approaches}{\rightarrow}		
\safemath{\evnt}{\setA}					
\safemath{\altevnt}{\setB}					
\safemath{\rv}{\rndx}					
\safemath{\altrv}{\rndy}					
\safemath{\complexrv}{\rndu}					
\safemath{\altcrv}{\rndv}				
\safemath{\rvec}{\rvecx}					
\safemath{\altrvec}{\rvecy}				
\safemath{\crvec}{\rvecu}				
\safemath{\altcrvec}{\rvecv}				
\safemath{\variance}{\sigma^{2}}			
\safemath{\map}{T}						
\safemath{\jacobian}{\matJ}					
\safemath{\wvec}{\rvecw}					
\safemath{\wrv}{\rndw}					
\safemath{\orthmat}{\matQ}				
\safemath{\evmat}{\matLambda}			
\safemath{\identity}{\matidentity}		
\safemath{\innovec}{\vecv}				
\safemath{\convas}{\xrightarrow{\text{a.s.}}}	
\safemath{\convr}{\xrightarrow{\text{r}}}	
\safemath{\convp}{\xrightarrow{\text{P}}}	
\safemath{\convd}{\xrightarrow{\text{D}}}	
\safemath{\ltis}{\opL}				
\safemath{\ir}{h}					
\safemath{\tf}{\MakeUppercase{\ir}}	
\newcommand*{\fancyrefparlabelprefix}{par}		
\newcommand*{\fancyrefchalabelprefix}{cha}		
\newcommand*{\fancyrefapplabelprefix}{app}		
\newcommand*{\fancyrefthmlabelprefix}{thm}		
\newcommand*{\fancyreflemlabelprefix}{lem}		
\newcommand*{\fancyrefcorlabelprefix}{cor}		
\newcommand*{\fancyrefdeflabelprefix}{def}		
\newcommand*{\fancyrefproplabelprefix}{prop}		
\newtheorem{thm}{Theorem}
\newtheorem{lem}{Lemma}
\theoremstyle{definition}
\newtheorem{dfn}{Definition}
\newtheorem*{rem}{Remark}
\newcommand{\xseq}{\hat u}	 	
\newcommand{\yseq}{\hat v}	 	
\newcommand{\zseq}{\hat w}	 	
\newcommand{\xseqf}{u}	 	
\newcommand{\yseqf}{v}	 	
\newcommand{\zseqf}{w}	 	
\newcommand{\otime}{t}	 	
\newcommand{\osp}{\vecv}	 	
\newcommand{\isp}{\vecw}	 	
\newcommand{\ospc}{v}	 	
\newcommand{\gfun}[2]{\chi(#1,#2)}	 	
\newcommand{\cnst}[1]{C(#1)}	 	
\newcommand{\cnstl}[1]{C_L(#1)}	 	
\newcommand{\cnstnum}{C}	 	
\newcommand{\cutf}{{\alpha}}	 	
\newcommand{\cutfn}{{\alpha}}	 	
\newcommand{\calpha}{C(\cutf)}	 	
\newcommand{\calphan}{C(\cutf)}	 	
\newcommand{\shld}{\beta}	 	
\newcommand{\rsp}{r}	 	
\newcommand{\rspt}{\omega}	 	
\newcommand{\vinp}{\vecx}	 	
\newcommand{\vinpdd}{{\vecx_\mathrm{2D}}}	 	
\newcommand{\vinpc}{x}	 		
\newcommand{\vinpfilt}{\vecs}	 	
\newcommand{\vinpfiltc}{s}	 		
\newcommand{\vinpfiltnn}{\tilde\vecs}	 	
\newcommand{\vinpfiltnndd}{\tilde\vecs_\mathrm{2D}}	 	
\newcommand{\vinpest}{\hat\vecx}
\newcommand{\vinper}{\vech}
\newcommand{\vinperc}{h}	
\newcommand{\idim}{N}			
\newcommand{\sparsity}{S}       
\newcommand{\ninterv}{d}       
\newcommand{\voutdd}{\vecy_\mathrm{2D}}	 	
\newcommand{\odim}{n}			
\newcommand{\odimn}{n}			
\newcommand{\fc}{f_c}			
\newcommand{\fcn}{f_c}			
\newcommand{\fccoh}{\bar{f}_c}			
\newcommand{\lambdac}{\lambda_c}			
\newcommand{\tlambdac}{{\tilde\lambda}_c}			
\newcommand{\lambdacn}{\lambda_c}			
\newcommand{\support}{\mathrm{supp}}			
\newcommand{\projnoise}{\vecz}			
\newcommand{\dualp}{\vecq}			
\newcommand{\dualpc}{q}			
\newcommand{\Fdd}{\matF_{\mathrm{2D}}}			
\newcommand{\Q}{\matQ}			
\newcommand{\hQdd}{{\hat\matQ}_{\mathrm{2D}}}			
\newcommand{\rclass}[2]{\setR_1(#1,#2)}		
\newcommand{\rclasstwo}[2]{\setR_2(#1,#2)}		
\newcommand{\rclassidxf}[4]{\setR_1(#1,#2;#3,#4)}			
\newcommand{\rclassp}[2]{\setR_1^+(#1,#2)}			
\newcommand{\rclasspdd}[2]{\setR^+_2(#1,#2)}			
\newcommand{\SRF}{\mathrm{SRF}}			
\newcommand{\SRFn}{\mathrm{SRF}}			
\newcommand{\NAF}{\mathrm{NAF}}			
\newcommand{\MC}{\mathrm{MC}}			
\newcommand{\glow}{g_{\mathrm{low}}}			
\newcommand{\flow}{f_{\mathrm{low}}}			
\newcommand{\FDO}[1]{\Delta^{\rspt}_{#1}[\glow](t)}			
\newcommand{\FDOarg}[2]{\Delta^{\rspt}_{#1}[\glow]\left(#2\right)}			
\newcommand{\ucirc}{\mathbb{T}}			
\newcommand{\inpnorm}[1]{\vecnorm{#1}}			
\newcommand{\outnorm}[1]{\vecnorm{#1}}			
\begin{document}
	\acrodef{NAF}{noise amplification factor}
	\acrodef{MC}{modulus of continuity}
	\acrodef{SNR}{signal-to-noise ratio}
	\acrodef{SRF}{super-resolution factor}
	\acrodef{JPG}{jointly proper Gaussian}
	\acrodef{PDF}{probability density function}
	\acrodef{DFT}{discrete Fourier transform}
	\acrodef{FFT}{fast Fourier transform}
	\acrodef{DoF}{degree-of-freedom}
	\acrodef{RHS}{right-hand side}
	\acrodef{LHS}{left-hand side}
	\acrodef{IO}{input-output}
	\acrodef{RV}{random variable}
	\acrodef{SVD}{singular-value decomposition}
	\acrodef{MS}{mass spectrometry}
	\acrodef{MCR}{mass-to-charge ratio}
	\acrodef{IOS}{incoherent optical system}
	\acrodef{A/D}{analog-to-digital}
	\acrodef{2D}{two-dimensional}
	\acrodef{1D}{one-dimensional}
	\acrodef{LP}{linear program}

\title{Super-Resolution of Positive Sources:\\ the Discrete Setup}

\pdfinfo{
 	/Title		(Stable Super-Resolution of Positive Sources: the Discrete Setup)
 	/Author		(Veniamin~I.~Morgenshtern and Emmanuel J. Cand{\`e}s)
 	/Keywords	()
}

\author{
\parbox{\linewidth}{\centering
Veniamin~I.~Morgenshtern$^{1}$ and Emmanuel J. Cand{\`e}s$^{1,2}$ \\\vspace{0.3cm}
$^{1}$Dept.~of Statistics, Stanford University, CA \\
$^{2}$Dept.~of Mathematics, Stanford University, CA 
}
}

\newcommand{\ejc}[1]{\textcolor{blue}{[EJC: #1]}}
\newcommand{\ven}[1]{\textcolor{orange}{[VM: #1]}}

\maketitle
\begin{abstract} {In single-molecule microscopy it is necessary to
    locate with high precision point sources from noisy observations
    of the spectrum of the signal at frequencies capped by $\fc$,
    which is just about the frequency of natural light. This paper
    rigorously establishes that this super-resolution problem can be
    solved via linear programming in a stable manner. We prove that
    the quality of the reconstruction crucially depends on the
    Rayleigh regularity of the support of the signal; that is, on the
    maximum number of sources that can occur within a square of side
    length about $1/\fc$. The theoretical performance guarantee is
    complemented with a converse result showing that our simple convex
    program convex is nearly optimal. Finally, numerical experiments
    illustrate our methods.}
\end{abstract}

\section{Introduction}

The problem of super-resolution arises in many areas of science and
engineering including mass-spectrometry, radar imaging, and wireless
communication. In optics, for example, the natural resolution of
microscopes is inversely proportional to the wavelength of light used
for observation. This happens because of the diffraction of light, and
makes it fundamentally difficult to study sub-wavelength features of
the object; e.g.~to resolve nearby sources located at distances
smaller than the diffraction limit.  This paper is about this problem:
namely, the super-resolution of positive sources, e.g.~fluorescing
molecules as in single-molecule imaging.

Formally, consider a high-frequency signal
\begin{equation}
	\label{eq:xcont}
	x(\isp)=\sum_i x_i \delta(\isp-\isp_i)
\end{equation}
consisting of positive point sources located at unknown positions
$\isp_i$ and of unknown intensity $x_i>0$. The signal is observed
through a convolution of the form
\begin{equation}
\label{eq:firstconv}
s(\osp)=\int \flow(\osp-\isp) x(\isp) d\isp+z(\osp),
\end{equation}
where $\flow(\cdot)$ is a low-frequency kernel that erases the
high-frequency components of the signal and $z(\cdot)$ is noise. The
goal of super-resolution is to accurately estimate $x(\cdot)$,
i.e.,~the source locations and intensities.

\subsection{Super-resolution microscopy}
\label{sec:microscopy}

Since our mathematical models and theoretical results are motivated by
very concrete contemporary problems in single-molecule imaging, we
find it best to pause and introduce some background material; for
details beyond those we provide below, please check the wonderful book
by J.~Goodman~\cite{goodman88}.

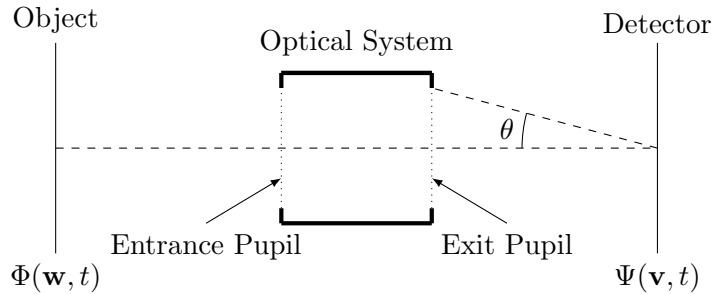
\begin{figure}[ht]
        \centering
\begin{tikzpicture}[scale=2,auto]
	\begin{scope}[ultra thick]
  		\draw (0,0) -- (0,0.1);
  	  	\draw (0,0.9) -- (0,1);
		\draw[dotted,thin] (0,0.1) -- (0,0.9);
		\draw[dotted,thin] (1,0.1) -- (1,0.9);
  		\draw (1,0) -- (1,0.1);
  	  	\draw (1,0.9) -- (1,1);
  	  	\draw (0,0) -- (1,0);
  	  	\draw (0,1) -- node[above=2pt] {Optical System} (1,1);
  	  	\draw (0,1) -- (1,1);
  	  	
  \end{scope}
  
  \begin{scope}[thin]
	  \draw[dashed] (-1.5,0.5) -- (2.5,0.5);
	  \draw[-,>=latex] (-1.5,-0.2) node[anchor=north] {$\Phi(\isp,\otime)$} -- (-1.5,1.2) node[anchor=south] {Object};
	  \draw[-,>=latex] (2.5,-0.2) node[anchor=north] {$\Psi(\osp,\otime)$} -- (2.5,1.2) node[anchor=south] {Detector};
	  \draw[dashed] (2.5,0.5) -- (1,0.9);
	  \draw[<-,>=latex] (0,0.3) -- (-0.5,0) node[anchor=north] {Entrance Pupil};
	  \draw[<-,>=latex] (1,0.3) -- (1.5,0) node[anchor=north] {Exit Pupil};
	  \draw(1.6,0.5) arc(180:165:0.9);
	  \draw(1.5,0.63) node {$\theta$};
	  
  \end{scope} 
\end{tikzpicture}
\caption{Model of an optical system.}
\label{fig:optics}
\end{figure}

To understand where \eqref{eq:firstconv} comes from, we derive the
input-output relation of a simple imaging system as shown
in~\fref{fig:optics}. While the laws of optics are governed by
Maxwell's equations, which are linear, the vectorial nature of the
electric and magnetic fields can be neglected in Fourier optics and
the physics fully described via the time-varying
phasor~\cite[Sec.~3.2]{goodman88}, a term assigned to any of the three
components of these two fields. Assume that a narrow-band (not
necessarily monochromatic) light is used for illumination, and let
$\Phi(\isp,\otime)$ and $\Psi(\osp,\otime)$ respectively denote the
input/output phasors describing the field emitted by the object being
imaged and the field generated at the receiver of the system.  Here,
$\isp,\osp\in\reals^2$ are indexing spatial coordinates in the object
plane and in the detector plane, respectively, and $\otime\in\reals$
is indexing time. We assume, for convenience, that the phasors
$\Phi(\isp,\otime),\Psi(\osp,\otime)$ have been frequency-shifted (as
a function of $\otime$) to be centered around the mean frequency of
the optical wave~\cite[p.~132]{goodman88}, so that, for example,
$E(\isp,\otime)=\re[\Phi(\isp,\otime) e^{2\pi\iu\bar\nu \otime}]$,
where $E$ is one of the components of the electric field and $\bar\nu$
is the average frequency of emitted light.  The diffraction of light
in the optical system can be described by the Fraunhofer approximation
leading to~\cite[Eq (6-6)]{goodman88}
\begin{equation}
	\label{eq:opticel1}
	\Psi(\osp,\otime)= \int h(\osp-\isp) \Phi(\isp,\otime) d\isp,
\end{equation}
where $h(\osp)$ is the point-spread function (PSF) of the optical
system. In general, the Fourier transform of $h(\cdot)$ is
proportional to the indicator function of the aperture and because the
aperture is finite, $h(\cdot)$ is band-limited. To be concrete, assume
that the entrance and the exit pupils in \fref{fig:optics} are
square. In this case~\cite[Sec. 6.2.2]{goodman88}
\begin{equation}
	\label{eq:kerdirichlet}
        h(\osp) \propto \frac{1}{\sqrt{2 \fccoh}}\frac{\sin(2 \pi \fccoh \ospc_1)}{\pi \ospc_1}	\frac{1}{\sqrt{2 \fccoh}}\frac{\sin(2 \pi \fccoh \ospc_2)}{\pi \ospc_2},\quad \osp=\tp{[\ospc_1,  \ospc_2]}.
\end{equation} 
The spatial frequency cut-off of the optical system is given by
\begin{equation}
	\fccoh=\frac{\sin(\theta)}{\lambda},
\end{equation}
where $\lambda$ is the wavelength of emitted light (average wavelength
in the narrow-band illumination case) and $\theta$ is half of the
angle spanned by the exit pupil as seen from the center of the image
plane (see~\fref{fig:optics}). Note that due to the narrow-band
illumination assumption, $h(\cdot)$ depends upon the average
wavelength of the optical wave, but not upon the specific frequencies
in the illuminating spectrum so that the system model is described by
the simple convolution equation~\fref{eq:opticel1}.

In optics, the carrier frequency $\bar\nu\sim 500 \THz$ is much higher
than the frequency $f_\mathrm{HET}\sim 10 \GHz$, which electronic
components can respond to, e.~g.~the frequency of heterodyne used to
down-convert the signal. Consequently, in optics only the time-average
of the instantaneous intensity of received light (called received
intensity) is directly observable \cite[Eq (6-8)]{goodman88}:
\begin{equation}
	\label{eq:quadratic1}
	\tilde s(\osp)\define\left<\Psi(\osp,\otime)\conj{\Psi}(\osp,\otime)\right>,
\end{equation}	
where $\conj{\Psi}(\cdot)$ denotes the complex conjugate of
$\Psi(\cdot)$ and $\left<\cdot\right>$ stands for time averaging:
\begin{equation}
	\left<g(\otime)\right>=f_\mathrm{HET} \int_{0}^{1/f_\mathrm{HET}} g(\otime) d\otime.
\end{equation}
In a majority of microscopy applications, the object emits incoherent
light. Mathematically, this situation is described by assuming that
frequencies of spatially separated emitters vary in statistically
independent fashions. This idealized property may be represented by
the equation~\cite[Eq (6-14)]{goodman88}
\begin{equation}
	\label{eq:noncohdelta1}
	\left<\Phi(\isp,\otime)\conj{\Phi}(\isp',\otime)\right> =   \delta(\isp-\isp') x(\isp).
\end{equation}
The quantity $x(\isp)$ is
the time-average of the instantaneous intensity of light emitted by
the object and is called {\em emitted intensity}. Substituting
\fref{eq:opticel1} into \fref{eq:quadratic1} and then using
\fref{eq:noncohdelta1} we obtain the following input-output relation
\begin{equation}
	\label{eq:IOnoncohcont1}
	\tilde s(\osp) = \int \flow(\osp-\isp) x(\isp) d\isp,\quad \flow(\osp-\isp)\define\abs{h(\osp-\isp)}^2. 
\end{equation}
Observe that \fref{eq:IOnoncohcont1} is a linear convolution equation
with respect to emitted intensity; compare to~\fref{eq:opticel1},
which is a linear convolution equation with respect to the components
of the emitted field. The low-frequency kernel $\flow(\cdot)$ is the
square of the \ac{2D} sinc kernel \fref{eq:kerdirichlet} and has a
spatial frequency cut-off at $\fc=2\fccoh$ (twice that of the kernel
$h(\cdot)$).  The emitted intensity $x(\cdot)$ is a nonnegative
function, a property that is crucially important for all results in
this paper. Finally, the $\ell_1$ norm of the signal, 
\begin{equation}
	\label{eq:wenergy}
	\onenorm{x(\cdot)}=\int \abs{x(\isp)} d\isp, 
\end{equation}
has the meaning of cumulative emitted intensity or total energy of
light emitted per second. As a side remark, note that when the sample
is illuminated by coherent light, as in X-ray crystallography, the
resulting input-output relations is no longer linear, in stark
contrast to \fref{eq:IOnoncohcont1}, and the phase retrieval problem
needs to be solved. For the interested reader, this point is explained
in~\fref{app:cohopt}. 

Our goal is to reconstruct the signal $x(\cdot)$ from the observations
$\tilde s(\cdot)$ in \fref{eq:IOnoncohcont1}. Without additional
structural assumptions on $x(\cdot)$, this is clearly not possible,
because the high-frequency components of $x(\cdot)$ are lost. The
details of $x(\cdot)$ that are smaller than the Rayleigh diffraction
limit,\footnote{The specific value of the constant, $1.22$, is largely
  a historical convention; the point here is that the details of the
  image that are much smaller than $1/\fccoh$ are blurred.}
$1.22/\fccoh$, cannot be distinguished~\cite[Sec 6.5.2]{goodman88}.
In single-molecule
microscopy~\cite{betzig06-06,dickson97on,klar00fluor}, a modern
imaging technique, the signal $x(\cdot)$ consists of several disjoint
molecules emitting light. Here, the size of each molecule is about
$4\nm$, which is much smaller than $1/\fccoh\approx 200\nm$, and yet
it is absolutely necessary to estimate the locations of these molecules
with precision that is significantly higher than the Rayleigh
diffraction limit. 

The main contribution of this paper is to show that under the
structural model \fref{eq:xcont}, it is possible to estimate $x(\isp)$
via linear programming stably from noisy data---all imaging systems
are fundamentally noisy---with resolution beyond the diffraction limit. 
Further, the quality of estimation fundamentally depends on how
regularly (in the sense explained in~\fref{sec:mainres}) the
sources/molecules are distributed in the image domain.

\subsection{Mathematical models and methods}

The super-resolution theory developed in this paper is discrete, which
means that the input signal $x(\cdot)$ is assumed to be supported on a
fine grid. The nonzero elements of this discrete signal are 
suggestively called ``spikes''.  In optics, there is no grid, of
course; the spikes in~\fref{eq:xcont} can be in arbitrary (continuous)
locations, and the companion paper~\cite{candes-14-2} shows how to
generalize our key result to the continuous setting. In truth, the
analysis of the continuous-space problem is far more technical than
that presented here; however, the final result---the stability
estimate in~\fref{eq:mainbnd}---is essentially the same. For now, the
advantage of working with a discrete model is that we can explain the
key concepts without bothering with heavy mathematical machinery.

\subsubsection{Discrete setup}

A noiseless discrete model is of the form
\begin{equation}
		\label{eq:IO3}
		\vinpfiltnn=\matQ\vinp,
\end{equation}
where $\vinp$ is either a one- or two-dimensional discrete array of
intensities, $\matQ$ models the (discrete) convolution equation and
$\vinpfiltnn$ is the output data, assumed to be of the same
dimension(s) as the input vector $\vinp$. We have already seen
examples of PSFs or convolutions; for instance, $\flow(\cdot)$
in~\fref{eq:IOnoncohcont1} is the square of the sinc kernel (in each
direction), the sinc kernel being an ideal low-pass filter whose
frequency response is a box function. Therefore, the frequency
response of $\flow(\cdot)$ is a triangle function in \acs{1D} and a
pyramid in 2D. In
\fref{eq:Q1dflat},~\fref{eq:Q1dtri},~\fref{eq:Q2dflat} and
\fref{eq:Q2dtri} below, we consider natural PSFs in one and two
dimensions so that in the remainder, $\matQ$ in \fref{eq:IO3} or
\fref{eq:IO3altnoise1d} may be given by any of these.

\subsubsection{Noise} 

In modern microscopy
applications, the intensities of emitted/received light are very low
and in such regimes, the main source of noise is due to
quantum-mechanical effects. We have argued that a component of
$\vinpfiltnn$ represents the expected number of photons to be recorded
per unit time at a given pixel on the detector. The actual number of
photons detected may be modeled as a Poisson-distributed random
variable so that $\vinpfilt\sim\mathrm{Pois}\left(\vinpfiltnn\right)$,
meaning that we have independent Poisson variables with means given by
\fref{eq:IO3}. In this paper, we shall work with a slightly more
general signal-dependent additive noise $\projnoise=\vinpfilt-\Q\vinp$
so that the \ac{IO} relation becomes
\begin{equation}
	\label{eq:IO3altnoise1d}
	\vinpfilt=\Q\vinp+\projnoise.
\end{equation}

\subsubsection{Recovery}

Our recovery method from the observations $\vinpfilt$ in
\fref{eq:IO3altnoise1d} is extremely simple: solve
\begin{align}
	\label{eq:find0}\tag{$\mathrm{CVX}$}
	\text{min}_{\hat\vinp} \quad \vecnorm{\vinpfilt-\matQ\hat\vinp}_1 \quad \text{s.t.}\quad \hat\vinp\ge \veczero.
\end{align}
In other words, we are looking for a superposition of positive sources
such that the mismatch in received intensities is minimum. Note that
this method does not make any assumption about the signal and does not
make use of any knowledge other than the received data $\vinpfilt$ and
the PSF $\matQ$. Furthermore, (CVX) is a simple convex optimization
program, which can be recast as a linear program since both
$\vinpfilt$ and $\matQ$ are real valued.

\subsubsection{Examples of PSFs}

We now discuss various models for the discrete convolution equation
\eqref{eq:IO3altnoise1d}.

\paragraph{1D model with flat spectrum.} In our first example,
$\vinp=\tp{[\vinpc_0\cdots\vinpc_{\idim-1}]}\in \reals^\idim$ is a
one-dimensional array, and $\matQ$ is an ideal low-pass filter in the
sense that it has a flat spectrum with a sharp cut-off at
$\fc$. Formally,
\begin{equation}
	\label{eq:Q1dflat}
	\matQ= 
        \matQ_{\mathrm{flat,1D}}=\herm\matF\hat\matQ_{\mathrm{flat,1D}}\matF, 
\end{equation}
where
\[ [\matF]_{k,l} = \frac{1}{\sqrt{N}} \, e^{-\iu 2\pi k l/N}, \quad
{-\idim/2+1\le k\le \idim/2}, \,\,  {0\le l\le \idim-1,}
\]
is the $N \times N$ \ac{DFT} and
$\hat\matQ_{\mathrm{flat,1D}}=\diag(\tp{[\hat p_{-\idim/2+1}\cdots
  \hat p_{\idim/2}]})$ with
\begin{equation}
	\label{eq:filter3p}
	\hat p_k=
	\begin{cases}
		1, & k=-\fc,\ldots, \fc,\\
		0, & \text{otherwise}.
	\end{cases}
\end{equation}
The wavelength $\lambdacn\define 1/\fcn$ gives the width of the
convolution kernel represented by~$\matQ$. We assume throughout the
paper that $\idim$ is even for simplicity.

\paragraph{1D model with triangular spectrum.}
The discrete one-dimensional analog of our imaging system with
incoherent light \fref{eq:IOnoncohcont1} is given by~\fref{eq:IO3},
where $\matQ$ is as follows:
\begin{equation}
	\label{eq:Q1dtri}
	\matQ= \matQ_{\mathrm{tri,1D}}=\herm\matF\hat\matQ_{\mathrm{tri,1D}}\matF,
\end{equation}
$\hat\matQ_{\mathrm{tri,1D}}=\diag(\hat\vecq)$ with $\hat\vecq=\tp{[\hat q_{-\idim/2+1}\cdots \hat q_{\idim/2}]}$ and
\begin{equation}
	\label{eq:filter2}
	\hat q_k=
	\begin{cases}
		1-\frac{\abs{k}}{\fcn+1}, & k=-\fcn,\ldots, \fcn\\
		0, & \text{otherwise}.
	\end{cases}
\end{equation}
In this model, the nonzero elements of $\vinp$ represent the molecules
at the corresponding locations (on the grid) whereas the components of
$\vinpfiltnn$ represent the intensity of light measured at the
corresponding pixel on the detector.

\paragraph{2D model with flat spectrum.}
Similarly, the \ac{2D} model with a flat spectrum reads 
\begin{equation}
	\label{eq:IO3alt1dd}
	\vinpfiltnndd=\herm{\Fdd}\hQdd\Fdd \vinpdd,
\end{equation}
where
$\Fdd:\complexset^\idim\times\complexset^\idim\to\complexset^\idim\times\complexset^\idim$ is the linear operator that implements the \ac{2D} Fourier transform and acts according to
\begin{equation}
	[\Fdd\vinpdd]_{k_1,k_2}=\frac{1}{\idim}\sum_{l_1=0}^{\idim-1} \sum_{l_2=0}^{\idim-1} \vinpc_{l_1,l_2} e^{-\iu 2\pi (k_1 l_1+k_2 l_2)/\idim}
\end{equation}
and
$\hQdd:\complexset^\idim\times\complexset^\idim\to\complexset^\idim\times\complexset^\idim$
is the diagonal operator in the Fourier domain,
\begin{equation}
	\label{eq:hQdd}
	[\hQdd\voutdd]_{k_1,k_2}=\hat p_{k_1-\idim/2} \hat p_{k_2-\idim/2} \,\, [\voutdd]_{k_1,k_2}.
\end{equation}
To keep the same notation, define $\vinp=\vectorize{\vinpdd}$ and
$\vinpfiltnn=\vectorize{\vinpfiltnndd}$, where the $\vectorize{\cdot}$
operation stacks the columns of a matrix into a tall vector. Using the
properties of the Kronecker product,~\fref{eq:IO3alt1dd} can be
written as~\fref{eq:IO3} with
\begin{equation}
	\label{eq:Q2dflat}
	\matQ= \matQ_{\mathrm{flat,2D}}=(\herm\matF\kron\herm\matF)(\hat\matQ_{\mathrm{flat,1D}}\kron \hat\matQ_{\mathrm{flat,1D}})(\matF\kron\matF).
\end{equation}

\paragraph{2D model with triangular spectrum.}
With the vectorized notation, the 2D model with triangular spectrum
can be written as~\fref{eq:IO3} with
\begin{equation}
	\label{eq:Q2dtri}
	\matQ= \matQ_{\mathrm{tri,2D}}=(\herm\matF\kron\herm\matF)(\hat\matQ_{\mathrm{tri,1D}}\kron \hat\matQ_{\mathrm{tri,1D}})(\matF\kron\matF).
\end{equation}

\subsubsection{Intensity normalization}

It follows from our earlier discussion that for incoherent light
(models with triangular spectra), we may interpret $\onenorm{\vinp}$
as the total intensity of light emitted by the object. Similarly,
$\onenorm{\vinpfiltnn}$ is the total intensity of light observed at
the receiver. Letting $[\vecq_0\cdots \vecq_{\idim-1}]$ denote the
columns of $\matQ$, \fref{eq:filter2} guarantees that
$\onenorm{\vecq_l}=1$ for all $l$. To see this, first note that $\vecq_l$ is a shifted version of $\frac{1}{\sqrt{N}}\herm\matF\hat\vecq$ so that $\onenorm{\vecq_l}=\onenorm{\frac{1}{\sqrt{N}}\herm\matF\hat\vecq}$. Next, write $\hat\vecq=\hat\vecd\conv \hat\vecd$ where $\hat\vecd=\tp{[\hat
  d_{-\idim/2+1}\cdots \hat d_{\idim/2}]}$ and
\begin{equation}
	\hat d_k=
	\begin{cases}
		\sqrt{\frac{1}{\fc+1}}, & k=-\fcn/2,\ldots, \fcn/2,\\
		0, & \text{otherwise},
	\end{cases}
\end{equation}
and $\conv$ denotes the discrete convolution.  Finally, use the
convolution theorem to conclude
\begin{equation}
	\onenorm{\vecq_l}=\left\lVert\frac{1}{\sqrt{N}}\herm\matF\hat\vecq\right\lVert_1=\left\lVert\frac{1}{\sqrt{N}}\herm\matF(\hat\vecd\conv \hat\vecd)\right\lVert_1 = 
	\left\lVert\overline{\herm\matF\hat\vecd}\odot\herm\matF\hat\vecd\right\lVert_1= 
	\twonorm{\herm\matF\hat\vecd}^2=1,
\end{equation}
where $\odot$ denotes the element-wise product and $\overline{\mathbf{a}}$
takes conjugate element-wise.  Therefore, using that $x_l\ge 0$ and
$\vecq_l\ge\veczero$ for all $l$,
\begin{equation}
	\label{eq:norm}
	\onenorm{\vinpfiltnn}=\Bigl\|\sum_{l=0}^{\idim-1} x_l \vecq_l \Bigr\|_1  = \sum_{l=0}^{\idim-1} x_l \onenorm{ \vecq_l} = \sum_{l=0}^{\idim-1}  x_l= \onenorm{\vinp}. 
\end{equation}
Hence, our normalization is such that the intensity of light (emitted
energy per second) is conserved in the system. In the models
\fref{eq:filter3p} and \fref{eq:hQdd} with a flat spectrum the
$\ell_1$ norm of the signal is not conserved.

\subsection{Notation}
Sets are denoted by calligraphic letters $\setA, \setB$, and so on.
Boldface letters $\matA,\matB,\ldots$ and $\veca,\vecb,\ldots$ denote
matrices (or linear operators) and vectors, respectively.  The element
in the $i$-th row and $j$-th column of a matrix $\matA$ is
$\matAc_{ij}$ or $[\matA]_{i,j}$, and the $i$-th element of the
vector~$\vectr$ is $\vectrc_i$ or $[\vectr]_i$.  For a vector
$\vectr$, $\diag(\vectr)$ stands for the diagonal matrix that has the
entries of $\vectr$ on its main diagonal.  The superscripts~$\tp{}$
and~$\herm{}$ stand for transposition and Hermitian transposition,
respectively. 
For a finite set $\setI$, we write $\abs{\setI}$ for the cardinality.
For two functions~$\fun(\cdot)$ and~$\altfun(\cdot)$, the
notation~$\fun(\cdot)=\landauO(\altfun(\cdot))$ means
that~$\limsup_{\vrbl\to \infty} \abs{\fun(\vrbl)/\altfun(\vrbl)}$ is
bounded.  For $x\in\reals$, $\lceil x\rceil\define
\min\{m\in\integers\mid m\geq x\}$.  We use $\natseg{l}{k}$ to
designate the set of natural numbers $\left\{l, l+1,\ldots,k\right\}$.
The expectation operator is $\Ex{}{\cdot}$.  For a vector
$\vectr\in\complexset^n$, $\onenorm{\vectr}=\sum_{j=0}^{n-1}
\abs{\vectrc_j}$ and $\twonorm{\vectr}=\sqrt{\sum_{j=0}^{n-1}
  \abs{\vectrc_j}^2}$ denote the $\ell_1$ and $\ell_2$ norms,
respectively; $\vecnorm{\vectr}$ means either $\onenorm{\vectr}$ or
$\twonorm{\vectr}$. The number of nonzero elements of a vector
$\vectr$ is $\zeronorm{\vectr}$.  For a matrix
$\matA\in\complexset^{n\times n}$, the operator norm is defined as
$\vecnorm{\matA}_{1,op}=\max_{i} \sum_{j=0}^{n-1} \abs{a_{ij}}$ and
$\vectorize{\matA}$ denotes the $n^2$-dimensional vector obtained by
stacking the columns of $\matA$. For vectors
$\veca$ and $\vecb$, $\veca\odot\vecb$ denotes the element-wise product; $\veca\conv\vecb$ denotes the discrete convolution; the Kronecker
product of matrices $\matA$ and $\matB$ is denoted as
$\matA\kron\matB$.

\section{Main results} 
\label{sec:mainres}
Consider the 1D model for concreteness. From~\fref{eq:filter3p},
\fref{eq:filter2} we see that we have access to $\odimn=2\fcn+1$
low-frequency observations while the total number of
degrees-of-freedom in $\vinp$ is $\idim$. The ratio $\SRFn\define
\idim/\odimn$ is called the {\em super-resolution factor} (SRF); this
is the ratio between $1/n$ and $1/N$, the scale at which we have data
and that at which we wish to see details.

As we will review below, the sparsity condition
$\zeronorm{\vinp}<\odimn/2$ is sufficient for recovery of $\vinp$ when
there is no noise. If there is noise, it turns out that sparsity is
not sufficient as our ability to estimate $\vinp$ from $\vinpfilt$ in
a stable way fundamentally depends on how regular the positions of the
spikes are, i.e., how many spikes may be clustered close together.

\subsection{Rayleigh regularity}
\label{sec:reyleigh}

Suppose we are in $D$ dimensions and think of our discrete signal
$\vinp \in \complexset^{N^D}$ as samples on the $D$-dimensional grid
$\{0,1/\idim,\ldots,1-1/\idim\}^D\subset\ucirc^D$, where $\ucirc^D$ is
the $D$-dimensional (periodic) torus---the circle in 1D. In this
paper, we can think of the ambient dimension $D$ as being either one
or two.  We introduce a definition of Rayleigh regularity inspired
by~\cite[Def.~1]{donoho92-09}.

\begin{dfn}[\bf Rayleigh regularity]
\label{def:rreg}
Fix $\idim, \odim$ and set $\lambda_c = 1/f_c = 2/(n-1)$. We say that
the set of points $\setT\subset\{0,1/\idim,\ldots,1-1/\idim\}^D
\subset\ucirc^D$ is Rayleigh regular with parameters $(\ninterv,\rsp)$
and write $\setT\in \setR_{D}(d,r; N, n)$ if it may be partitioned as
$\setT=\setT_1\union\ldots\union\setT_r$ where the $\setT_i$'s are
disjoint, and each obeys a minimum separation constraint:
	\begin{enumerate}
		\item for all $1\le i<j\le \rsp$, $\setT_i\intersect\setT_j=\emptyset$;
		\item for all square subsets $\setD\subset\ucirc^D$ of
                  sidelength $d\lambdac/2$ and all $i$,
		\begin{equation}
			\abs{\setT_i\intersect \setD}\le 1.
		\end{equation}
	\end{enumerate}
        When no ambiguity arises, we will shortly write
        $\setR_{D}(d,r)$ instead of $\setR_{D}(d,r; N, n)$. 

        With a slight abuse of notation, it is also convenient to
        define a set of Rayleigh regular signals (and nonnegative
        Rayleigh regular signals) with parameters $(\ninterv,\rsp)$:
\begin{align}
  \setR_D(d,r) &=\{\vinp\in\complexset^{\idim^D}: \support(\vinp)\in \setR_{D}(d,r)\},\\
  \setR^+_D(d,r) &=\{\vinp\in\reals^{\idim^D}_+:
  \support(\vinp)\in\setR_{D}(d,r)\},
\end{align}
where $\support(\vinp)$ is the support of $\vinp$ (the locations on
grid where $\vinp$ does not vanish). 
\end{dfn}

\begin{rem}
  Intuitively, in 1D, $\vecx\in\rclass{\ninterv}{\rsp}$ simply means
  that the signal $\vecx$ contains no more than $\rsp$ spikes in any
  $\ninterv$ consecutive Nyquist intervals; a Nyquist interval being
  of length $\lambdac/2$, which corresponds to the Nyquist-Shannon
  sampling rate of a signal that is band-limited to $[-\fc,\fc]$.
  \fref{fig:rclass} illustrates these concepts for different parameter
  values.\footnote{Clearly,
    $\rclass{d}{\rsp_1}\subset\rclass{d}{\rsp_2}$ for $\rsp_1\le
    \rsp_2$ and $\rclass{d_1}{\rsp}\subset\rclass{d_2}{\rsp}$ for
    $d_1\ge d_2$.}
\end{rem}

We discuss some examples of Rayleigh regular signals and first
consider $\vinp\in\rclass{1}{1}$. This signal may contain one spike
per Nyquist interval. Each spike is associated with two unknown
parameters: location and amplitude. Since there are $n$ Nyquist
intervals, we may have as many as $2\odim$ unknown parameters in
total, which is more than the number $n$ of observations
(cf. \fref{eq:filter2}, \fref{eq:filter3p}). Hence, recovery of
$\vinp\in\rclass{1}{1}$ is in general not possible even in the
noiseless case. If we however knew the locations of the spikes but not
the amplitudes, we could recover the signal $\vinp\in\rclass{1}{1}$ by
solving a system of linear equations.

Next take $\vinp\in\rclass{2}{1}$.  Such a signal may only contain one
spike per two Nyquist intervals. Hence, the total number of unknown
parameters is at most equal to the number of observations and recovery
of $\vinp\in\rclass{2}{1}$ is barely possible in the noiseless case.
For example, as discussed in~\fref{sec:lit}, $\vinp$ can be recovered
by Prony's method. In general, $\vinp\in\rclass{2r}{r}$ is the
absolute limit for recovery of complex-valued signals in the noiseless
case in the sense that $\vinp\in\rclass{2r-\epsilon}{r}$,
$\epsilon>0$, is in general not recoverable.

Strictly speaking, the general dimension-counting considerations above
do not hold for positive signals $\vinp\in\reals_+^\idim$ because the
positivity of $\vinp$ supplies extra information. On the one hand, it
is nevertheless possible to construct adversarial signals
$\vinp\in\rclassp{2r-\epsilon}{r}$ that will not be recoverable by any
method whatsoever. On the other hand, this paper shows that
$\vinp\in\rclassp{3.74r}{r}$ can be recovered stably in the presence
of (small) noise via the linear program (CVX).

\newcommand\cxi{0.1196}
\newcommand\cxil{0.8261}
\newcommand\cyi{0.3370}
\newcommand\cxii{0.1196}
\newcommand\cyii{0.5000}
\newcommand\cxiii{0.1196}
\newcommand\cyiii{0.7283}
\newcommand\cxiiii{0.0217}
\newcommand\cyiiii{0.1087}
\begin{figure}[ht]
        \centering
\begin{tikzpicture}

\begin{axis}[width=12cm,scale only axis, height=1.4cm,
    name= plot1,
    ytick=\empty,
	xtick={0,1},
	title={$\rclass{4}{1}$},
	every axis title/.style={at={(0.5,1)},above,yshift=-0.2cm},
	axis x line = middle,
	hide y axis,
	ymin=-0.3,
	ymax=1,
	xmin=-0.1,
	xmax=1.1]
	\addplot[domain=0:1,samples=201,line width=0.7pt,blue]{0.2*sin(deg(2*pi*(23/2)*x))};
	\addplot[ycomb,range=0:1,line width=0.1pt,violet,mark options={fill=violet,scale=1},mark=*, mark size=1pt,only marks] 
		table[x index=0,y index=1]{./figures/figrclass.txt};
	\addplot[ycomb,range=0:1,line width=1.2pt,violet] 
	 	table[x index=0,y index=1]{./figures/figrclass.txt};
	\draw[help lines] (axis cs:0,0) -- (axis cs:1,0);
	\draw[<->,>=latex] (axis cs: \cxi,0.3) -- (axis cs: \cyi,0.3) node[above,midway] {\small $\ge 2\lambdac$};
	\draw[->,>=latex] (axis cs: 0,0.3) -- (axis cs: \cxi,0.3) node[above,midway] {};
	\draw[<-,>=latex] (axis cs: \cxil,0.3) -- (axis cs: 1,0.3) node[above,midway] {\small $\ge 2\lambdac$};
\end{axis}

\begin{axis}[
    at={(plot1.below south west)},yshift=-1.6cm,
    anchor=north west,
    width=12cm,scale only axis,height=1.4cm,
    ytick=\empty,
	xtick={0,1},
	axis x line = middle,
	name= plot2,
	title={$\rclass{8}{2}$},
	every axis title/.style={at={(0.5,1)},above,yshift=-0.2cm},
	hide y axis,
	ymin=-0.3,
	ymax=1.1,
	xmin=-0.1,
	xmax=1.1]
	\addplot[domain=0:1,samples=201,line width=0.7pt,blue]{0.2*sin(deg(2*pi*(23/2)*x))};
	\draw[<->,>=latex]  (axis cs:\cxiiii,0.3) -- (axis cs:  \cyiiii,0.3) node[above,midway] {\small $\lambdac$};
	\addplot[ycomb,range=0:1,line width=0.1pt,violet,mark options={fill=violet,scale=1},mark=*, mark size=1pt,only marks] 
		table[x index=0,y index=2]{./figures/figrclass.txt};
	\addplot[ycomb,range=0:1,line width=1.2pt,violet] 
 		table[x index=0,y index=2]{./figures/figrclass.txt};
  \draw[help lines] (axis cs:0,0) -- (axis cs:1,0);
  \draw[<->,>=latex] (axis cs:\cxii,0.3) -- (axis cs: \cyii,0.3) node[above,midway] {\small $\ge 4\lambdac$};
\end{axis}

\begin{axis}[
    at={(plot2.below south west)},yshift=-1.6cm,
    anchor=north west,
    width=12cm,scale only axis,height=1.4cm,
    axis on top=false,
	ytick=\empty,
	xtick={0,1},
	axis x line = middle,
	name= plot3,
	title={$\rclass{12}{3}$},
    every axis title/.style={at={(0.5,1)},above,yshift=-0.2cm},
	hide y axis,
	ymin=-0.3,
	ymax=1.1,
	xmin=-0.1,
	xmax=1.1	]
	\addplot[domain=0:1,samples=201,line width=0.7pt,blue]{0.2*sin(deg(2*pi*(23/2)*x))};
	\draw[<->,>=latex]  (axis cs:\cxiiii,0.3) -- (axis cs:  \cyiiii,0.3) node[above,midway] {\small $\lambdac$};
	\addplot[ycomb,range=0:1,line width=0.1pt,violet,mark options={fill=violet,scale=1},mark=*, mark size=1pt,only marks] 
		table[x index=0,y index=3]{./figures/figrclass.txt};
	\addplot[ycomb,range=0:1,line width=1.2pt,violet] 
	 	table[x index=0,y index=3]{./figures/figrclass.txt};
		
  \draw[<->,>=latex]  (axis cs:\cxiii,0.3) -- (axis cs: \cyiii,0.3) node[above,midway] {\small$\ge 6\lambdac$};
\end{axis}
\end{tikzpicture}
\caption{ Examples of discrete $\idim$ dimensional signals from the
  Rayleigh classes $\rclass{4}{1}$, $\rclass{8}{2}$, $\rclass{12}{3}$
  depicted on the grid $\{0,1/\idim,\ldots,1-1/\idim\}\subset \ucirc$.
  The sine wave $\sin(2\pi \fc t)$ at the highest visible frequency is
  shown in blue for reference. Here, $\idim=92$ and $\odim=23$, so
  that $\SRF=4$ and $\lambdac=1/11$. By periodicity, the endpoints are
  identified.}
\label{fig:rclass}
\end{figure}

\subsection{Stable recovery}

We are now ready to present our main results; although they extend to
higher dimensions, they are stated in 1 and 2D for
simplicity. Throughout, we assume that the data $\vinpfilt$ is given
by \fref{eq:IO3altnoise1d}.

\begin{thm}[\bf Flat spectrum]
\label{thm:UB}
In 1D, take $\matQ=\matQ_{\mathrm{flat,1D}}$ and $\vinp\in
\rclassp{3.74 \rsp}{\rsp}$ with $\fc\ge 128\rsp$. Then the solution
$\hat\vinp$ to \fref{eq:find0} obeys
\begin{equation}
	\label{eq:mainbnd}
	\vecnorm{\hat\vinp-\vinp}_1\le C \cdot \left(\frac{\idim}{\odim-1}\right)^{2\rsp} \cdot \|\mathbf{z}\|_1 \approx C \cdot \SRF^{2 \rsp} \cdot  \|\mathbf{z}\|_1,
      \end{equation}
      where $C=C_1(\rsp)$, only depends on $\rsp$ (if $\SRF\ge
      3.03/\rsp$, it can be taken as in \fref{eq:c1def}).

      In 2D, take $\matQ=\matQ_{\mathrm{flat,2D}}$, $\vinp\in
      \rclasspdd{4.76 \rsp}{\rsp}$ with $\fc\ge 512\rsp$. Then
      \fref{eq:mainbnd} holds with a constant $C$ depending on $\rsp$
      only, which we do not specify for brevity.
\end{thm}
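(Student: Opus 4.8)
Because $\hat\vinp$ solves \eqref{eq:find0} and $\vinp\ge\veczero$ is feasible, $\onenorm{\vinpfilt-\matQ\hat\vinp}\le\onenorm{\vinpfilt-\matQ\vinp}=\onenorm{\projnoise}$. Writing $\matQ=\matQ_{\mathrm{flat,1D}}$, $h=\hat\vinp-\vinp$ and $\setT=\support(\vinp)$, the triangle inequality gives $\onenorm{\matQ h}\le 2\onenorm{\projnoise}$, while $\hat\vinp\ge\veczero$ forces $h_i\ge 0$ at every grid point outside $\setT$. So \fref{thm:UB} follows once I establish the purely deterministic bound
\[
  \onenorm{h}\ \le\ C_1(\rsp)\,\SRF^{2\rsp}\,\onenorm{\matQ h}\qquad\text{whenever }h_{\setT^c}\ge\veczero\text{ and }\setT\in\rclassp{3.74\rsp}{\rsp},
\]
i.e.\ that $\matQ$ is bounded below from $\ell_1$ to $\ell_1$ by $C_1(\rsp)^{-1}\SRF^{-2\rsp}$ on the cone of vectors nonnegative off $\setT$; this quantitative identifiability is where all the work is. I will split $\onenorm{h}=\onenorm{h_\setT}+\onenorm{h_{\setT^c}}$, the second term being $\langle\vecone,h_{\setT^c}\rangle$ by positivity.

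\textbf{Dual certificates.} Since $\hat\matQ_{\mathrm{flat,1D}}$ is a $0/1$ diagonal and $\matF$ is unitary, $\matQ=\herm{\matQ}=\matQ^2$ is the orthogonal projection onto the span of the DFT vectors of frequency in $[-\fc,\fc]$; hence for any real \emph{low-pass} vector $\dualp$ (meaning $\matQ\dualp=\dualp$) one has $\langle\dualp,h\rangle=\langle\dualp,\matQ h\rangle$, so $\abs{\langle\dualp,h\rangle}\le\infnorm{\dualp}\,\onenorm{\matQ h}$. The plan is to manufacture two low-pass certificates: (i) a \emph{sign certificate} $\dualp$ with $\dualp_i=\sign(h_i)$ on $\setT$ and $\infnorm{\dualp}\le 1$, which yields $\onenorm{h_\setT}=\langle\dualp,h\rangle-\langle\dualp_{\setT^c},h_{\setT^c}\rangle\le\onenorm{\matQ h}+\onenorm{h_{\setT^c}}$; and (ii) a \emph{curvature certificate} $\dualp^{\sharp}$ with $0\le\dualp^{\sharp}_i\le 1$ on the grid, $\dualp^{\sharp}_i=1$ on $\setT$, and $1-\dualp^{\sharp}_i\ge\eta$ for every $i\in\setT^c$. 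Since $(\vecone-\dualp^{\sharp})_\setT\equiv\veczero$, the latter gives $\langle\vecone-\dualp^{\sharp},h_{\setT^c}\rangle=\langle\vecone-\dualp^{\sharp},\matQ h\rangle\le\onenorm{\matQ h}$, hence $\eta\,\onenorm{h_{\setT^c}}\le\onenorm{\matQ h}$. Combining, $\onenorm{h}\le\onenorm{\matQ h}+2\onenorm{h_{\setT^c}}\le(1+2\eta^{-1})\,\onenorm{\matQ h}$, so the whole theorem reduces to producing these certificates with $\eta\ge C_1(\rsp)^{-1}\SRF^{-2\rsp}$ (using that every grid point off $\setT$ lies at distance $\ge\lambdac/\SRF$ from $\setT$, which is what turns a curvature estimate near $\setT$ into the uniform gap $\eta$).

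\textbf{Building the certificates from the Rayleigh decomposition.} I would use the partition $\setT=\setT_1\cup\dots\cup\setT_{\rsp}$ from \fref{def:rreg}, each layer $\setT_j$ being $\tfrac12\cdot 3.74\rsp\,\lambdac$-separated. For a single, well-separated layer the construction is classical: superpose translates (and, for $\dualp$, also first derivatives) of a squared Dirichlet/Fejér kernel centered at the points of $\setT_j$; the separation together with $\fc\ge128\rsp$ makes all cross terms negligible, giving interpolation with $\infnorm{}\le1$ and curvature of order $1$, i.e.\ a gap of order $\SRF^{-2}$. The cluster structure is what is expensive: inside a window of length $\sim\lambdac$ as many as $\rsp$ layers may be simultaneously active, so the certificate must equal $1$ (resp.\ the prescribed sign) at up to $\rsp$ grid points only $\lambdac/\SRF$ apart, forcing it to be very flat there. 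I would build it on each cluster by solving a small confluent Vandermonde system — the dual of annihilating the first $\sim2\rsp$ low-frequency moments carried by that cluster — whose condition number is polynomial in $\SRF$ of degree $\sim2\rsp$; equivalently, an induction on the number of active layers that peels off one layer at a time and pays a factor $\SRF^2$ per layer gives a certificate with off-support gap $\eta\ge C_1(\rsp)^{-1}\SRF^{-2\rsp}$. Because the single-layer kernels and their corrections decay away from their centers, contributions of distinct clusters are summable; a partition of the grid into the clusters of $\setT$ plus the complementary ``far'' region — where $h\ge\veczero$ and the sinc tails are uniformly small — then assembles the global estimate and hence \fref{eq:mainbnd}.

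\textbf{Main obstacle and the two-dimensional case.} The hard part will be the curvature certificate for a maximal cluster: showing that a low-pass trigonometric polynomial pinned at up to $\rsp$ grid points inside one Rayleigh cell can still be held in $[0,1]$ while leaving an off-support gap no smaller than $C_1(\rsp)^{-1}\SRF^{-2\rsp}$, and doing this with constants explicit enough to yield the stated thresholds $\rclassp{3.74\rsp}{\rsp}$ and $\fc\ge128\rsp$. This is a sharp Bernstein/Chebyshev-type problem for polynomials prescribed on clustered grids, and controlling the $\rsp$-dependence of $C_1(\rsp)$ through the induction requires care — the converse result quoted in the abstract shows the bound is essentially sharp, so there is little slack in the exponent. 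The 2D statement follows the same program with genuinely two-dimensional certificates: one cannot simply tensorize, since a 2D Rayleigh-regular support need not be a product set, which is why the 2D thresholds $\rclasspdd{4.76\rsp}{\rsp}$ and $\fc\ge512\rsp$ are larger and the constant there is left unspecified.
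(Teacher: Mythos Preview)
Your reduction and dual-certificate framework are sound, and the curvature certificate $\dualp^\sharp$ you describe is essentially (the flip $1-q$ of) the object the paper constructs. But the proposal diverges from the paper in two important ways, and the first is a genuine gap.

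\textbf{The sign certificate is the wrong tool here.} You set $\setT=\support(\vinp)$, which forces you to interpolate $\sign(h_i)\in\{\pm1\}$ on a possibly clustered set while keeping $\infnorm{\dualp}\le 1$. For a cluster of up to $\rsp$ grid points inside one Rayleigh cell carrying alternating signs, your confluent-Vandermonde sketch addresses only existence of an interpolant, not why it remains bounded by $1$ on the rest of the grid; and this is exactly the obstruction that makes $\ell_1$-minimization fail for complex signals without separation (see the discussion around~\eqref{eq:l1min}). The paper avoids the issue altogether by redefining $\setT=\{l:h_l<0\}$---still a subset of $\support(\vinp)$, hence still in $\rclass{3.74\rsp}{\rsp}$---so that $h\le 0$ on $\setT$ and $h\ge 0$ on $\setT^c$. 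Then a \emph{single} nonnegative low-pass $q$ with $q\equiv 0$ on $\setT$, $q\ge 2\rho$ on $\setT^c$, $\infnorm{q}\le 1$ does everything: the shift $\tilde q=q-\rho$ has $\sign(\tilde q_l)=\sign(h_l)$ for every $l$, so $\rho\,\onenorm{h}\le\lvert\langle\tilde q,h\rangle\rvert=\lvert\langle\tilde q,\matQ h\rangle\rvert\le(1-\rho)\onenorm{\matQ h}$ in one stroke (\fref{lem:fnd}). No sign interpolation, no second certificate.

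\textbf{The certificate is a product, not a Vandermonde solve.} Your plan to assemble the curvature certificate by local confluent-Vandermonde systems or a layer-peeling induction is replaced in the paper by something much more direct. Partition $\setT=\bigcup_{i=1}^\rsp\setT_i$ with each $\setT_i$ separated at scale $\tlambdac=\rsp\lambdac$; invoke the separated-support construction of~\cite{candes13} (\fref{lem:dual}) at the rescaled cutoff $\tilde\fc=\fc/\rsp$ to obtain, for each $i$, a degree-$\tilde\fc$ polynomial $q_i\in[0,1]$ vanishing on $\setT_i$ with the quadratic lower bound~\eqref{eq:LBq}; then set $q=\prod_{i=1}^\rsp q_i$. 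The product has degree $\rsp\tilde\fc=\fc$, vanishes on all of $\setT$, stays in $[0,1]$, and the pointwise bound $q(t)\ge\prod_i\phi_i(t)\ge c_1^\rsp\tilde\fc^{2\rsp}\,\mathrm{dist}(t,\setT)^{2\rsp}$ is immediate. Evaluating at the nearest off-support grid point (distance $\ge 1/\idim$) gives $\rho\ge \tfrac12 c_1^\rsp(2\rsp)^{-2\rsp}((\odim-1)/\idim)^{2\rsp}$ and hence the explicit constant in~\eqref{eq:c1def}. There is no cluster-by-cluster patching and no condition-number estimate: the exponent $2\rsp$ and its constant fall out of multiplying $\rsp$ quadratic lower bounds. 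The 2D statement is the identical product argument built on the 2D separated-support lemma (\fref{lem:dualdd}); the constants $4.76$ and $512$ are inherited from that lemma, not from any extra difficulty in the product step.
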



The result in~\fref{thm:UB} is not sensitive to the exact choice of
the kernel $\matQ$ and remains valid for just about any other
low-frequency kernel. To illustrate this point and to connect our
theory to super-resolution microscopy we now give the result for the
PSF discussed in Section \ref{sec:microscopy}.

\begin{thm}[\bf Triangular spectrum]
\label{thm:UBtriag}
Set $1/2\le\alpha<1$. In 1D, take $\matQ=\matQ_{\mathrm{tri,1D}}$ and
assume $\vinp\in \rclassp{3.74 \rsp/\alpha}{\rsp}$
with $\fc\ge 256\rsp$. Then the bound~\fref{eq:mainbnd} holds with a
finite constant $C=C_{1}(\rsp,\cutfn)$, namely,
\begin{equation}
\label{eq:boundtwo}
\vecnorm{\hat\vinp-\vinp}_1\le C \cdot \SRF^{2 \rsp} \cdot
\|\mathbf{z}\|_1. 
\end{equation}
(If $\SRF\ge 3.03/\rsp$, then the constant can be taken as
in~\fref{eq:c1ra}.)

In 2D, take $\matQ=\matQ_{\mathrm{tri,2D}}$, $\vinp\in
\rclasspdd{4.76\rsp/\alpha}{\rsp}$ with $\fc\ge 1024\rsp$.  Then
except for the numerical value of the constant, the same conclusion
holds.
\end{thm}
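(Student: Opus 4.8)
The plan is to deduce the theorem from the machinery behind \fref{thm:UB} rather than redo the argument from scratch; the triangular case differs only in one elementary point. Although the symbol $\hat q_k=1-|k|/(\fc+1)$ of \eqref{eq:filter2} is not flat, it is bounded away from zero on a $2\alpha$-fraction of its passband: for $|k|\le\alpha\fc$ and $1/2\le\alpha<1$ one has $1-\alpha\le\hat q_k\le1$. Hence $\matQ_{\mathrm{tri,1D}}$ ``contains'' a flat low-pass kernel of cutoff $\fc'\define\floor{\alpha\fc}$. Writing $\matG$ for the circular convolution whose Fourier symbol is $\mathbf 1\{|k|\le\fc'\}/\hat q_k$ — well defined and $\le(1-\alpha)^{-1}$ on its support — one has $\matG\,\matQ_{\mathrm{tri,1D}}=\matQ_0$, where $\matQ_0$ is the flat kernel of \eqref{eq:Q1dflat}--\eqref{eq:filter3p} with $\fc$ replaced by $\fc'$. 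Note that $\fc\ge256\rsp$ and $\alpha\ge1/2$ give $\fc'\ge128\rsp$, exactly the regime demanded by \fref{thm:UB}; that $\vinp\in\rclassp{3.74\rsp/\alpha}{\rsp}$ relative to $\lambdac=1/\fc$ is, by \fref{def:rreg}, the same as $\vinp\in\rclassp{3.74\rsp}{\rsp}$ relative to the rescaled $\lambdac'=1/\fc'\approx\lambdac/\alpha$ (up to the rounding in $\fc'$); and that the super-resolution factor for $\matQ_0$ is $\idim/(2\fc'+1)\approx\SRF/\alpha$.

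Given this, I would argue as follows. Let $\hat\vinp$ solve \eqref{eq:find0} with $\matQ=\matQ_{\mathrm{tri,1D}}$. Since $\vinp\ge\veczero$ is feasible with objective value $\onenorm{\projnoise}$, the optimal residual is at most $\onenorm{\projnoise}$, so $\onenorm{\matQ_{\mathrm{tri,1D}}(\hat\vinp-\vinp)}\le2\onenorm{\projnoise}$ by the triangle inequality; multiplying by $\matG$ gives $\onenorm{\matQ_0(\hat\vinp-\vinp)}\le2\vecnorm{\matG}_{1,op}\onenorm{\projnoise}$. The key observation is that the proof of \fref{thm:UB} never uses optimality of $\hat\vinp$ beyond exactly such a residual bound: it builds a dual certificate — a trigonometric polynomial, in the range of the adjoint kernel, that interpolates the correct signs on $\support(\vinp)$ and is suitably controlled off it — and from this certificate together with $\hat\vinp\ge\veczero$ and the Rayleigh regularity of $\support(\vinp)$ it concludes $\onenorm{\hat\vinp-\vinp}\lesssim\SRF^{2\rsp}\onenorm{\matQ(\hat\vinp-\vinp)}$. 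Instantiating this with $\matQ_0$ and the parameters above yields
\begin{equation*}
\begin{aligned}
\onenorm{\hat\vinp-\vinp}
&\le C_1(\rsp)\,(\SRF/\alpha)^{2\rsp}\,\onenorm{\matQ_0(\hat\vinp-\vinp)}\\
&\le 2\,C_1(\rsp)\,\alpha^{-2\rsp}\,\vecnorm{\matG}_{1,op}\,\SRF^{2\rsp}\,\onenorm{\projnoise},
\end{aligned}
\end{equation*}
which is \eqref{eq:boundtwo} with $C=C_1(\rsp,\alpha)\define2C_1(\rsp)\alpha^{-2\rsp}\vecnorm{\matG}_{1,op}$. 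The 2D statement follows the same path: $\matQ_{\mathrm{tri,2D}}=\matQ_{\mathrm{tri,1D}}\kron\matQ_{\mathrm{tri,1D}}$ by the mixed-product rule, so $\matG\kron\matG$ turns it into the flat kernel $\matQ_0\kron\matQ_0$, the hypotheses $\fc\ge1024\rsp$ and $\vinp\in\rclasspdd{4.76\rsp/\alpha}{\rsp}$ translate into the 2D hypotheses of \fref{thm:UB}, and the residual-plus-certificate argument is unchanged.

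The step I expect to be the real obstacle is controlling $\vecnorm{\matG}_{1,op}$ — the $\ell_1$ norm of the kernel with symbol $\mathbf 1\{|k|\le\fc'\}/\hat q_k$ — uniformly in $\idim$. A sharp frequency cutoff yields a Dirichlet-type kernel whose $\ell_1$ norm grows like $\log\fc$, which would leave an unwanted logarithmic factor in the constant. The fix, and what I believe is in fact the intended route, is not to extract a perfectly flat sub-band kernel but to rerun the dual-certificate construction of \fref{thm:UB} directly inside the band $[-\alpha\fc,\alpha\fc]$: there the certificate is a product of $O(\rsp)$ Fej\'er-type factors, each band-limited to about $\alpha\fc/\rsp$, so it is realizable as $\herm{\matQ_{\mathrm{tri,1D}}}\dualp$ with $\dualp$ inflated by only the bounded factor $(1-\alpha)^{-1}$ compared with the flat case. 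Using half the per-factor bandwidth (relative to the $\alpha\to1$ flat construction) is precisely what relaxes the separation requirement from $3.74\rsp$ to $3.74\rsp/\alpha$ and sharpens the hypothesis from $\fc\ge128\rsp$ to $\fc\ge256\rsp$. Carrying this out amounts to re-verifying that the certificate still satisfies the quantitative interpolation and curvature estimates at the relaxed separation $3.74\rsp/\alpha$, and that the $\alpha$-dependence propagates cleanly into the final constant $C_1(\rsp,\alpha)$; the remainder — the rounding in $\fc'=\floor{\alpha\fc}$, the DFT grid conventions, and the 2D tensorization — is routine bookkeeping.
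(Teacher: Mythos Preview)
Your overall architecture is right, and your diagnosis of the obstacle is exactly on target: the paper does reduce the triangular case to the flat one by constructing the dual certificate inside the sub-band $[-\alpha\fc,\alpha\fc]$ and then passing back through $\matQ_{\mathrm{tri,1D}}$ via a Fourier multiplier. But the specific claim in your ``fix'' --- that the preimage $\dualp'$ with $\matQ_{\mathrm{tri,1D}}\dualp'=q^{\mathrm{cert}}$ is ``inflated by only the bounded factor $(1-\alpha)^{-1}$'' --- is the gap. That factor is the bound on the Fourier multiplier $m(k)=1/\hat q_k$ on $|k|\le\alpha\fc$, hence it controls $\|\dualp'\|_2/\|q^{\mathrm{cert}}\|_2$, but what the duality argument needs is $\|\dualp'\|_\infty$, and a pointwise bound on a multiplier does not bound its $\ell_\infty\to\ell_\infty$ norm. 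The relevant quantity is the $\ell_1$ norm of the kernel of the multiplier, and this is exactly the same issue you flagged for the sharp-cutoff $\matG$: dividing the Fourier coefficients of a band-limited polynomial by $\hat q_k$ is itself a convolution, and you still need to show that convolution has uniformly bounded $\ell_1$ kernel.

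The paper resolves this not by dropping the filter idea but by \emph{smoothing} it: it defines $\matR$ whose symbol equals $1/\hat q_k$ on $|k|\le\alpha\fc$ and then decays \emph{linearly} to zero on $\alpha\fc<|k|\le\fc$ (rather than cutting off sharply). Because $q^{\mathrm{cert}}$ is supported on $|k|\le\alpha\fc$, the product $\matT=\matR\matQ_{\mathrm{tri,1D}}$ still fixes $q^{\mathrm{cert}}$, and the linear transition makes the second differences of the symbol summable with sum $O((1-\alpha)^{-2}/\fc)$, which (via a discrete integration-by-parts / Abel summation argument) gives $\|\matR\|_{1,op}\le C(\alpha)$ uniformly in $N$ --- roughly $C(\alpha)\asymp(1-\alpha)^{-2}$, not $(1-\alpha)^{-1}$. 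So your route and the paper's are the same once you recognize that ``realizing the certificate as $\matQ^*\dualp'$ with $\|\dualp'\|_\infty$ bounded'' is equivalent to ``bounding $\|\matR\|_{1,op}$ for a suitably smoothed inverse filter $\matR$''; the missing piece in your write-up is precisely that smoothing and the accompanying $\ell_1$-kernel estimate.
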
 

When $\alpha\to 1$, $C_{1}(\rsp,\cutfn)\to \infty$, which reflects the
fact that, as seen from \fref{eq:filter2}, the spectrum of
$\matQ_{\mathrm{tri,1D}}$ is very small at the border of the interval
$[-\fc,\fc]$. Hence, with noise, the spectral components of the signal
can only be observed away from this border, for example on the
interval $[-0.9\fc,0.9\fc]$, which corresponds to taking $\alpha=0.9$
in~\fref{thm:UBtriag}.

\paragraph{Implications for single-molecule microscopy.}
Consider~\fref{thm:UBtriag} in 2D and remember that $\onenorm{\vecz}$
is the cumulative difference in light intensity between noiseless
(ideal) and real observations.  Then the theorem tells us that the
cumulative error in light intensity in signal estimates is bounded by
the amplified version of the cumulative error in light intensity in
the data. The \ac{NAF} behaves as $\SRFn^{2\rsp}$, where $\rsp$ is the
parameter describing the regularity of the signal support.  If the
noise level is sufficiently small and the signal is sufficiently
regular ($\rsp$ is small), i.e., not too many molecules are clustered
close together, and $\SRFn$ is modest, then the
algorithm~\fref{eq:find0} is guaranteed to achieve excellent
super-resolution results. As we will explain in~\fref{sec:tight}, no
algorithm can perform substantially better.

\paragraph{Contribution.}  Theorems \ref{thm:UB} and \ref{thm:UBtriag}
are new, and while their proofs are given in~\fref{sec:direct1}, we
would like to discuss the main technical contribution of this paper.
When $\rsp=1$ or, equivalently, when the spikes are separated by at
least $1.87\lambdac$ and not necessarily positive, a result similar
to~\fref{thm:UB} was obtained in~\cite[Th.~1.5]{candes13} using a
different convex program, see also \cite{candes21-2} for a
continuous-space version; this program, given by (L1) below, requires
knowledge of an upper bound on $\|\vecz\|_1$.  The proof
in~\cite{candes13} is based on constructing a (dual) low-frequency
trigonometric polynomial that interpolates the sign of the spikes.
The crucial observation we make in this paper is that the technique
developed in~\cite{candes13} can be extended to the important setting
when the spikes are not separated and positive. The proof is based on
a simple idea: a Rayleigh-regular set may be partitioned into subsets
with points in each subset separated by at least $1.87\lambdac$;
therefore, each set comes with a (dual) low-frequency trigonometric
polynomial constructed in~\cite{candes13}; multiplying such
polynomials together gives a low-frequency polynomial interpolating
the signal.

In the noiseless setting $(\vecz = \mathbf{0})$, our results state
that the recovery is exact. In 1D this is well known,
see~\cite{donoho90-06,fuchs05} and the review in~\fref{sec:lit}. In
\ac{2D} and higher, this is new: as explained in~\fref{sec:direct1},
this result cannot be obtained by a straightforward generalization of
the techniques in~\cite{donoho90-06,fuchs05}.



\subsection{Tightness}
\label{sec:tight}

In this section we argue that our results in \fref{thm:UBtriag} are
nearly tight. In 1D, we are interested in answers to the following two
natural questions:
\begin{enumerate}[(i)]
\item Can the assumption $\setC=\rclassp{3.74 \rsp}{\rsp}$ be
  substituted with $\setC=\rclassp{d}{\rsp}$ with $d<3.74 \rsp$
  without changing the bound \eqref{eq:boundtwo}? 

\item Can the exponent $2\rsp$ in \eqref{eq:boundtwo} be made
  smaller?
\end{enumerate}

\subsubsection{Tightness of the length of the interval}  

To answer (i), we have already argued in~\fref{sec:reyleigh} that even
in the noiseless case it is not possible to recover many of the
signals $\vinp\in\rclassp{d}{\rsp}$ with $d< 2\rsp$. Hence, $d=
3.74\rsp$ is within a factor $1.87$ of the optimum.  This factor comes
from the key result from \cite{candes13} explained above, which
concerns the existence of low-frequency polynomials interpolating
complex scalars of unit magnitude separated by $1.87 \lambda_c$. Any
improvement in this technology would yield a corresponding improvement
here, see Section \ref{sec:construction} for additional details.

\subsubsection{Tightness of the exponent}
\label{sec:convfrej}
To answer question (ii) above, we need the concept of \ac{MC}.
\begin{dfn}[\bf Modulus of continuity]
\label{def:MC}
Let $\vecnorm{\cdot}$ be a norm, $\matQ$ a linear operator, and
$\setC$ a class of signals.\footnote{For example, $\setC$ may be a
  class of sparse signals, a class of Rayleigh regular signals, and so
  on.} The \ac{MC} is defined as
\begin{equation}
	\MC[\setC,\matQ]\define\sup_{\vinp_1,\vinp_2\in \setC} \frac{\inpnorm{\vinp_1-\vinp_2}}{\outnorm{\matQ(\vinp_1-\vinp_2)}}.
\end{equation}
\end{dfn}

We also introduce the simple notion of noise amplification. 
\begin{dfn}[\bf Noise amplification factor]
\label{def:NAF}
Let $\inpnorm{\cdot}$ be a norm, $\matB$ a linear operator, and
$\setC$ a signal class. Suppose an algorithm A produces an estimator
$\hat\vinp(\vinpfilt)$ from the model
$\vinpfilt=\matB\vinp+\projnoise$ obeying the uniform stability
guarantee
	\begin{align}
		\label{eq:stab}
		\inpnorm{\hat\vinp-\vinp} &\le \NAF[A,\setC,\matB]
                \cdot \delta
	\end{align}
	for all $\vinp\in\setC$ and all $\vecz$ with $\outnorm{\vecz}
        \le \delta$.
        Then we say that the \ac{NAF} of A is (at
        most)~$\NAF[A,\setC,\matB]$.
\end{dfn}

\noindent The \ac{MC} is related to the \ac{NAF} via the following simple facts.
\begin{enumerate}
	\item 
	If the \ac{NAF} of an algorithm A is at most  $\NAF[\mathrm{A},\setC,\matQ]$, then
	\begin{equation}
		\label{eq:NAFMC}
		\NAF[\mathrm{A},\setC,\matQ]\ge \MC[\setC,\matQ].
	\end{equation}
      \item Consider the exhaustive search (ES) algorithm (in general
        intractable) for super-resolving signals in $\setC$:
	\begin{align}
		\label{eq:comb}\tag{ES}
		\text{find}\quad \hat\vinp\in\setC \quad  \text{s.t.}  \quad \outnorm{\vinpfilt-\matQ\hat\vinp}\le \delta
	\end{align}
	with $\delta$ chosen so that $\vecnorm{\vecz}\le \delta$. The \ac{NAF} of this algorithm satisfies
	\begin{equation}
		\NAF[\mathrm{ES},\setC,\matQ]\le 2\MC[\setC,\matQ].
	\end{equation}
\end{enumerate}

We now provide a lower bound on the \ac{MC} showing that if the noise
is arbitrary, no algorithm can have a \ac{NAF} smaller than $\cnstnum
\SRF^{2\rsp-1}$. Therefore, the exponent $2\rsp$ in \fref{eq:mainbnd}
is nearly optimal.

\begin{thm}
\label{thm:conv}
Take $\matQ=\matQ_{\mathrm{tri,1D}}$.  Set $\rsp$ and $d$ to be
arbitrary numbers and $\setC=\rclassp{d}{\rsp}$ so that by taking $d =
\infty$ we would have at most $r$ spikes. Then there exist signals
$\vecx=\tp{[x_0\cdots x_{\idim-1}]}, \tilde\vecx=\tp{[\tilde x_0\cdots
  \tilde x_{\idim-1}]}\in \setC$ s.t.
	\begin{equation}
		\onenorm{\vecx-\tilde\vecx}=1
	\end{equation} and when $\idim,\odimn\to\infty$, $\idim/\odimn\to\SRFn$
	\begin{equation}
		\onenorm{\matQ(\vecx-\tilde\vecx)}\to \gfun{\rsp}{\SRFn} \SRFn^{2\rsp-1}.
	\end{equation}
	For $\SRFn\to\infty$,
	\begin{equation}
		\gfun{\rsp}{\SRFn}\to \cnstl{\rsp}
	\end{equation}
	where $\cnstl{\rsp}$ depends on $\rsp$ only and is given explicitly in~\fref{eq:crsp}. 
	Consequently, letting $\vecnorm{\cdot}$ be $\onenorm{\cdot}$ in~\fref{def:MC}, 
	\begin{equation}
		\label{eq:MCbnd}
		\MC[\setC,\Q]\ge \cnstl{\rsp} \SRFn^{2\rsp-1}
	\end{equation}
	when $\idim,\odim$ and $\SRFn=\idim/\odimn$ are large.
\end{thm}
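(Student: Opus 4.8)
The plan is to produce, for the given $r$ and $d$, two positive Rayleigh‑regular signals whose difference has unit $\ell_1$ norm but is almost annihilated by $\matQ=\matQ_{\mathrm{tri,1D}}$: the difference will live entirely at the finest scale $1/\idim$, a factor $\SRF$ below the scale $\lambdac$ at which $\matQ$ resolves anything. Concretely, take the $(2r-1)$‑st finite difference of a unit spike placed on the grid (by periodicity, anywhere, say at $0$),
\begin{equation*}
  v=\sum_{j=0}^{2r-1}(-1)^{j}\binom{2r-1}{j}\vece_{j},
\end{equation*}
where $\vece_{j}$ denotes the spike at the grid point $j/\idim$, and let $\vecx,\tilde\vecx$ be $2^{-(2r-1)}$ times the positive and negative parts of $v$, so $\vecx-\tilde\vecx=2^{-(2r-1)}v$. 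Both are nonnegative and have exactly $r$ spikes (the even‑ and odd‑index sets $\{0,2,\dots,2r-2\}$, $\{1,3,\dots,2r-1\}$ each have $r$ elements); any set of $r$ grid points lies in $\setR_1(d,r;\idim,\odim)$ for \emph{every} $d$ (partition it into singletons), so $\vecx,\tilde\vecx\in\rclassp{d}{r}=\setC$. Since the two supports are disjoint and $\sum_{j}\binom{2r-1}{j}=2^{2r-1}$, we get $\onenorm{\vecx-\tilde\vecx}=1$. The order $2r-1$ is forced: a $2r$‑th difference would require $r+1$ spikes of one sign and would leave $\setC$; it is exactly this $2r-1$ that becomes the exponent of $\SRF$.

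The second step is to evaluate $\matQ(\vecx-\tilde\vecx)$. By \eqref{eq:Q1dtri}--\eqref{eq:filter2}, $\matQ$ is circular convolution with the nonnegative, mass‑one discrete Fej\'er kernel $q_{m}=\tfrac1\idim F_{\fc}(2\pi m/\idim)$, $F_{\fc}(\theta)=\tfrac{1}{\fc+1}\bigl(\tfrac{\sin((\fc+1)\theta/2)}{\sin(\theta/2)}\bigr)^{2}$; since $v$ is a $(2r-1)$‑st finite difference of a spike, $\matQ(\vecx-\tilde\vecx)=2^{-(2r-1)}\Delta^{2r-1}q$ (finite difference with unit step in $m$), hence $\onenorm{\matQ(\vecx-\tilde\vecx)}=2^{-(2r-1)}\sum_{m}\bigl|(\Delta^{2r-1}q)_{m}\bigr|$. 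I would then pass to the limit $\idim,\odim\to\infty$, $\idim/\odim\to\SRF$. The Fourier symbol of $v$ is $(1-e^{-\iu2\pi k/\idim})^{2r-1}$, which vanishes to order exactly $2r-1$ at $k=0$, so the mass of $\Delta^{2r-1}q$ concentrates on indices $m=\landauO(\SRF)$; there $\sin(\pi m/\idim)\sim\pi m/\idim$ and $(\fc+1)/\idim\to 1/(2\SRF)$, whence $q_{m}\to\tfrac1{2\SRF}\sinc^{2}\!\bigl(\tfrac{m}{2\SRF}\bigr)$ pointwise, with a summable dominating sequence coming from the $1/m^{2}$ Fej\'er tail. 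Turning the finite difference into a derivative and the $\ell_1$‑sum into an integral (justified because $\sinc^{2}$ and all its derivatives decay like $1/u^{2}$, hence lie in $L^{1}(\reals)$, and the periodization on $\ucirc$ is harmless since the effective width is $\landauO(\SRF)\ll\idim$) gives
\begin{equation*}
  \onenorm{\matQ(\vecx-\tilde\vecx)}\ \longrightarrow\ \frac{1}{\gfun{r}{\SRF}\,\SRF^{2r-1}},\qquad
  \gfun{r}{\SRF}\ \xrightarrow{\ \SRF\to\infty\ }\ \cnstl{r}=\frac{4^{2r-1}}{\bigl\|(\sinc^{2})^{(2r-1)}\bigr\|_{L^{1}(\reals)}}\ \in(0,\infty),
\end{equation*}
the finiteness and positivity holding because $\sinc^{2}$ is not a polynomial; this identifies \eqref{eq:crsp}. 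Feeding $\onenorm{\vecx-\tilde\vecx}=1$ and $\vecx,\tilde\vecx\in\setC$ into \fref{def:MC} then gives $\MC[\setC,\matQ]\ge\onenorm{\vecx-\tilde\vecx}/\onenorm{\matQ(\vecx-\tilde\vecx)}\to\cnstl{r}\,\SRF^{2r-1}$, i.e.\ \eqref{eq:MCbnd}.

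I expect the only real work to be in that limiting step: controlling, uniformly in the scale, the three approximations — discrete Fej\'er kernel $\to$ continuous $\sinc^{2}$ profile, unit‑step $(2r-1)$‑st difference $\to$ $(2r-1)$‑st derivative, Riemann sum $\to$ integral — and in particular verifying that the slowly ($\sim1/m^{2}$) decaying tails of the Fej\'er kernel and of every derivative $(\sinc^{2})^{(j)}$ contribute negligibly, so that both the inner limit (which defines $\gfun{r}{\SRF}$) and the outer one ($\gfun{r}{\SRF}\to\cnstl{r}$) are legitimate. The remaining ingredients are routine: the Rayleigh‑regularity check, the identity $\matQ v=2^{-(2r-1)}\Delta^{2r-1}q$, and translating the modulus‑of‑continuity bound into the stated noise‑amplification statement through \eqref{eq:NAFMC}.
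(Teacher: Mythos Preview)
Your proposal is correct and follows essentially the same route as the paper: identical construction $\vecx-\tilde\vecx=2^{-(2r-1)}\sum_{j}(-1)^{j}\binom{2r-1}{j}\vece_{j}$, identification of $\matQ\vech$ as a $(2r-1)$-st finite difference of the discrete Fej\'er kernel, and the same two-stage limit (first $\idim,\odim\to\infty$ with $\SRF$ fixed, then $\SRF\to\infty$). The only cosmetic difference is that the paper carries out the first limit on the Fourier side---replacing the DFT sum defining $s_m$ by a continuous Fourier integral $t_m$ and bounding $\sum_m|s_m-t_m|$---whereas you work directly in the spatial domain by approximating the discrete Fej\'er kernel by $\tfrac{1}{2\SRF}\sinc^{2}(m/(2\SRF))$; your final constant $\cnstl{r}=4^{2r-1}/\|(\sinc^{2})^{(2r-1)}\|_{L^{1}}$ is exactly the paper's \eqref{eq:crsp} after recognising $\int_{-1}^{1}e^{\iu2\pi tf}(1-|f|)f^{2r-1}\,df=(i2\pi)^{-(2r-1)}(\sinc^{2})^{(2r-1)}(t)$.
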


The proof, given in~\fref{app:th3}, relies on an explicit construction
of nonnegative signals $\vinp$ and $\tilde\vinp$ with disjoint
supports and such that the spikes in $\vinp-\tilde\vinp$ cancel out as
much as possible after low-pass filtering.

Comparing Theorems \ref{thm:conv} and \ref{thm:UBtriag}, we see that
the exponent of $\SRF$ in the \ac{RHS} of \fref{eq:mainbnd} is within
one unit of the best possible. It is important to point out that the
convex optimization algorithm in~\fref{eq:find0} knows nothing at all
about the regularity of the signal class~$\setC$. Yet, it is adaptive
in the sense that it has nearly optimal stability guarantee whatever
the (usually unknown) value of $r$.

\fref{thm:conv} tells us that the \ac{MC} increases exponentially with
$\rsp$. For example, for a practically interesting case where
$\SRFn=8$, it is not difficult to estimate from~\fref{eq:MCbnd} and
the numerical value of the constant that super-resolution could only
be possible if $\rsp\le 5$. For $\rsp> 5$, the modulus of continuity
is greater than $10^5$, setting unrealistic constraints on noise
levels in practical applications. This is even an optimistic estimate
and, in reality, it is nearly impossible to separate more than three
sources packed in a Nyquist interval.

It is not known whether the exponent in the lower bound
\fref{eq:MCbnd} is sharp.  In the very special case where the signal
contains exactly one spike, it is not difficult to see that a simple
matched-filter will have a bounded ratio $\NAF/\SRF$, matching the
exponent in the \ac{RHS} of \fref{eq:MCbnd}. This can be used in the
setting where the spikes are guaranteed to be so far apart, that the
overlap between their images in the output space can be neglected;
this only happens when the distance between neighboring spikes far
exceeds $\lambdac/2$ and all the spikes have roughly the same
magnitude. In general, in the interesting case where the images of
neighboring spikes can overlap in the output space, it is not clear
how one could close the small gap between \fref{eq:MCbnd} and
\fref{eq:mainbnd}. In fact, it is possible that the exponent in
\fref{eq:MCbnd} can be made larger. As we shall see, to construct
adversarial signals $\vecx,\tilde\vecx$ in the proof
of~\fref{thm:conv}, we only use signals that contain exactly $\rsp$
spikes each. However, the signals in $\rclassp{d}{\rsp}$ can have more
than $\rsp$ spikes, of course, which could allow one to construct
pairs $\vecx,\tilde\vecx$ that give a larger bound than that in the
\ac{RHS} of~\fref{eq:MCbnd}. Please also see the recent
preprint~\cite{demanet14the-r}, where the question of calculating the
exact exponent for signals with a total of $r$ spikes is addressed.

\section{Literature review and innovations}
\label{sec:lit}

\subsection{Prior art}
\paragraph*{Algebraic methods.} 
Prony's method~\cite{prony95essai} is an algebraic approach for
solving the 1D super-resolution problem from noiseless data when the
number of sources is known a priori.  The data $\vinpfilt$ is used to
form a trigonometric polynomial, whose roots coincide with the spike
locations. The polynomial is then factored, thus revealing those
locations, and the amplitudes estimated by solving a system of linear
equations. In the noiseless case, Prony's method recovers $\vinp$
perfectly provided that $\zeronorm{\vinp}< \odim/2$. No further
Rayleigh regularity assumption on the signal support is needed.  With
noise, however, the performance of Prony's method degrades
sharply. The difficulty comes from the fact that the roots of a
trigonometric polynomial constructed by an algebraic method are
completely unstable and can shift dramatically even with small changes
in the data.

Many noise-aware versions of Prony's method are used frequently in
engineering applications, for example in radar (see~\cite[Chapter
6]{stoica05}). The most popular methods are MUSIC and its numerous
variations~\cite{barabell83impro, bienvenu79influ,
  schmidt86-03,pisarenko73theretr,tufts82estim,cadzow88signa},
matrix-pencil~\cite{hua90matri}, and
ESPRIT~\cite{paulraj86a-sub,roy89-07}. For more details on algebraic
methods we refer the reader to the excellent book~\cite[Chapter
4]{stoica05}. However, the stability of noise-aware algebraic methods
is not theoretically well-understood. Asymptotic results (at high SNR)
on the stability of MUSIC in the presence of Gaussian noise are
derived in~\cite{clergeot89perfo,stoica91stati}. More recently, some
steps towards analyzing MUSIC and matrix-pencil in a non-asymptotic regime have
been taken in~\cite{liao14music} and in~\cite{moitra14the-t}, respectively. Nevertheless, to the best of our
knowledge, no strong theoretical stability guarantees like those in
Theorems \ref{thm:UB} and \ref{thm:UBtriag} are available for
algebraic methods. Hence, the search for super-resolution methods that
perform well empirically and have sharp theoretical stability
guarantees is an important open problem.

Algebraic methods have been generalized to the multi-dimensional
case. Surprisingly, the generalizations are not straightforward and
many methods (\cite{clark94-06}, \cite{clark97-11},
\cite[Sec. 4.9.7]{stoica05}) have very restrictive sparsity
constraints: namely, at most $n$ spikes when we recall, that in 2D the
total number of observations is $\odim^2$. In~\cite{jiang01-09}, the
number of spikes can be as large as $\odim^2/4$ in the noiseless case,
as one would expect from dimension-counting considerations. 

\paragraph*{Fundamental limits.} 
In the pioneering work~\cite{donoho92-09}, Donoho studied limits of
performance for the 1D super-resolution problem.  His main findings 
can be summarized as follows.  Put $\inpnorm{\cdot}=\vecnorm{\cdot}_2$
in the definition of \ac{NAF} and $\matQ=\matQ_{\mathrm{flat,1D}}$.
\begin{itemize}
\item Let $\setC=\rclass{4 \rsp}{\rsp}$, then the \ac{NAF} of the
  exhaustive search algorithm~\fref{eq:comb} obeys
	\begin{equation}
		\label{eq:donohonafUB}
		\NAF[\mathrm{ES},\setC,\matQ]\le \cnst{\rsp} \SRF^{2\rsp+1},
	\end{equation}
	where $\cnst{\rsp}$ is a positive constant that might depend on $\rsp$ but not on~$\idim$ or~$\odim$.
      \item Take an arbitrary pair $(r,d)$ and set
        $\setC=\rclass{d}{\rsp}$. Then 
	\begin{equation}
		\label{eq:donohonafUB2}
		\MC[\setC,\matQ]\ge \cnst{\rsp} \SRF^{2\rsp-1},
	\end{equation}
	where $\cnst{\rsp}$ is a positive constant that might depend on $\rsp$ but not on $\idim$ or $\odim$.
\end{itemize}
To the best of our knowledge, the analysis in~\cite{donoho92-09} has
not been generalized to the multi-dimensional case.  Unfortunately,
The algorithm~\fref{eq:comb} is not feasible because $\setC$ is not
convex, and~\cite{donoho92-09} does not propose any tractable
algorithm that would have \ac{NAF} bounded above by the \ac{RHS} of
\fref{eq:donohonafUB}.  In this respect, the key question posed by
Donoho is whether a feasible algorithm that achieves stability
in~\fref{eq:donohonafUB} exists.

Other works \cite{stoica89stati,stoica89maxim,batenkov13accuracy}
study the stability of the super-resolution problem in the presence of
noise, but likewise do not provide a tractable algorithm to perform
recovery. Work in~\cite{shahram04-05,shahram05resolv,helstrom64-10}
analyzes the detection and separation of two closely-spaced spikes,
but does not generalize to the case when there are more than two
spikes in the signal.

\paragraph*{Super-resolution under minimum separation constraint.}
Progress towards resolving the question posed in \cite{donoho92-09} in
the general situation where $\vinp\in\complexset^\idim$---in this
paper we consider the case $\reals_+^\idim$ only---has recently been
made~\cite{candes13,candes21-2}. Put
$\inpnorm{\cdot}=\vecnorm{\cdot}_1$ in the definition of the \ac{NAF},
select the PSF with a flat spectrum, $\matQ=\matQ_{\mathrm{flat,1D}}$,
and consider $\setC=\rclass{4}{1}$.  It was shown that the \ac{NAF} of
the $\ell_1$-minimization algorithm
\begin{equation}
	\tag{L1}
	\label{eq:l1min}
	\min_{\hat\vinp} \onenorm{\hat\vinp} \quad  \text{s.t.}  \quad \vecnorm{\vinpfilt-\matQ\hat\vinp}_1\le \delta
\end{equation}
with $\delta$ chosen so that $\vecnorm{\vecz}\le \delta$ is at most
\begin{equation}
	\label{eq:carlosbnd}
	\NAF[\mathrm{L1},\setC,\matQ]\le \cnstnum \cdot \SRF^{2},
\end{equation}
where $\cnstnum$ is a positive numerical constant.  The condition
$\vinp\in\setC=\rclass{4}{1}$ is restrictive because it means that the
signal $\vinp$ cannot contain spikes that are at a distance less than
$2\lambdac$. [For real-valued signals $\vinp\in\reals^\idim$, a
minimum separation of $1.87 \lambdac$ suffices.]  This is a limitation
for many applications including single-molecule microscopy, as it is
usually understood that the goal of super-resolution is to distinguish
spikes that are (significantly) closer than the Rayleigh diffraction
limit, i.e.~at a fraction of $\lambdac$ apart. Unfortunately, if there
are spikes at a distance lower than this value, $\ell_1$ minimization
does not, in general, return the correct solution even if there is no
noise.  Results in~\cite{candes13,candes21-2} also cover the
multi-dimensional case under a minimum separation constraint. On a
similar line of research, see~\cite{tang13-1} and \cite{tang13compr}
for related results on the denoising of line spectra and on the
recovery of sparse signals from a random subset of their low-pass
Fourier coefficients. The accuracy of support detection under the minimum separation constraint is analyzed in~\cite{fernandez-granda13suppo,azais15spike}. 

\paragraph*{Super-resolution of noiseless nonnegative signals.} 
The case of 1D nonnegative signal, $\vinp\in\reals_+^\idim$, was
analyzed in~\cite{donoho90-06}, see also~\cite{fuchs05} for a shorter
exposition of the same idea. Adapting to our setting, the result
in~\cite{donoho90-06} can be summarized as follows: put
$\vecnorm{\cdot}=\onenorm{\cdot}$ in the definition of \ac{NAF} and
$\matQ=\matQ_{\mathrm{flat,1D}}$.  Let $\setC$ be the class of all
signals with $\zeronorm{\vinp}<\odim/2$. Then the
$\NAF^{\mathrm{cvx}}$ of the convex feasibility program
\begin{equation}
	\tag{F}
	\label{eq:f}
	\mathrm{find}\ \hat\vinp\ge\veczero \quad \text{s.t.}  \quad \vecnorm{\vinpfilt-\matQ\hat\vinp}_2\le \delta
\end{equation}
with $\delta$ chosen so that $\twonorm{\vecz}\le \delta$, is a finite
positive constant.  The exact dependence of $\NAF^{\mathrm{cvx}}$ on
$\idim$ and $\odim$ is not specified in~\cite{donoho90-06}. As we will
see, further examination of the proof from~\cite{donoho90-06} leads to
a bound of the form
\begin{equation}
	\label{eq:donjonbnd}
	\NAF^{\mathrm{cvx}}[\setC,\matQ]\le (\cnstnum \idim)^{2 \zeronorm{\vinp}},
\end{equation} 
where $\cnstnum$ is a numerical constant. First, this does not depend
on the Rayleigh regularity of $\vinp$ but on the sparsity. Second,
this does not depend on the SRF but on the grid size. By comparing
to~\fref{eq:donohonafUB} and~\fref{eq:carlosbnd} we see that the bound
\fref{eq:donjonbnd} is weak. Indeed, consider the interesting case
$\idim,\odim\to\infty$ with $\idim/\odim=\SRF$ kept constant. In this
case the bounds in~\fref{eq:donohonafUB} and~\fref{eq:carlosbnd}
remain finite, whereas the \ac{RHS} of~\fref{eq:donjonbnd} converges
to $+\infty$ very quickly. The bound in~\fref{eq:donjonbnd} does not
depend on the frequency cut-off $f_c$ or equivalently the number $n$
of pieces of information we are given.  Whether the frequency cut-off
is $10$ or $10^6$ the bound remains the same! This cannot capture the
right behavior. 

\subsection{Innovations}
The novelty of our results can be summarized as follows.

\begin{itemize}
\item As compared to algebraic methods, Theorems~\ref{thm:UB}
  and~\ref{thm:UBtriag} show that efficient algorithms can recover the
  signal in a provably stable fashion. As we discussed earlier, strong
  worst-case stability guarantees are not available for algebraic
  methods. The flipside is that our results crucially rely on
  non-negativity of the signal; algebraic methods do not need this
  assumption.

\item As compared to \cite{donoho92-09}, our recovery algorithm is a
  simple \ac{LP} and, hence, is tractable whereas the exhaustive search
  method of~\cite{donoho92-09} is intractable and cannot be used in
  practice. The difference between stability exponents
  in~\fref{eq:mainbnd} and in~\fref{eq:donohonafUB} stem from the fact
  that \cite{donoho92-09} works with the $\ell_2$ norm while we work
  with $\ell_1$. (The stability bounds for the exhaustive search
  algorithm in~\cite{donoho92-09} do not assume the signal to be
  nonnegative.)

\item As compared to \cite{candes13}, our results do not rely on the
  restrictive minimum-separation assumption. Having said this, the
  results in~\cite{candes13} hold for the general case of complex
  amplitudes, and our proofs borrow heavily from the tools developed
  in that work.

\item As compared to work in~\cite{donoho90-06}, our stability
  estimates are far stronger, for they depend on the super-resolution
  factor, and not on the spacing on the fine grid. Further, if one
  tries to use the proof technique used in~\cite{donoho90-06} to
  generalize the noiseless results in~\cite{donoho90-06,fuchs05} to
  the \ac{2D} case, one would need to assume that our image has at
  most $\odim/2$ spikes: this is too restrictive. In sharp contrast,
  we see from Theorems \ref{thm:UB} and~\ref{thm:UBtriag} that if the
  signal support is Rayleigh regular, we may have a number of sources
  on the order of~$\odim^2$, i.e.~on the order of the number of
  measurements.
\end{itemize}

\section{Proofs}
\label{sec:direct1}
\subsection{Proof of Theorem~\ref{thm:UB} in the 1D case}
\label{sec:proofp}

The proof of the theorem is based on the following lemma.
\begin{lem}
\label{lem:fnd}
Assume that the assumptions of~\fref{thm:UB} are satisfied. Set
\begin{equation}
	\label{eq:verror}
	\vinper=\tp{[h_0\cdots h_{\idim-1}]}=\hat\vinp-\vinp
\end{equation}
and 
\begin{equation}
	\label{eq:setT}
		\setT=\{l/\idim: h_l<0\}, 
\end{equation}
and suppose there exists $\dualp=\tp{[q_0\cdots
  q_{\idim-1}]}\in\reals^\idim$ and $0< \rho<1$ such that
$\matQ\dualp=\dualp$, $\infnorm{\dualp}\le 1$, and
\begin{equation}
	\begin{cases}
		\label{eq:qbound}
		q_l= 0, & l/\idim\in \setT\\
		q_l>2\rho, & \text{otherwise}. 
	\end{cases}	
\end{equation}
Then
\begin{equation}
		\label{eq:errorbound}
		\vecnorm{\hat\vinp-\vinp}_1\le \frac{2(1-\rho)}{\rho} \cdot \|\vecz\|_1.
\end{equation}
\end{lem}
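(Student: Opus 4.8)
The plan is a dual-certificate argument, tailored to the positivity constraint in (CVX). The only consequence of optimality I need is this: since $\vinp\ge\veczero$ is feasible and $\vinpfilt=\matQ\vinp+\vecz$, the minimizer obeys $\onenorm{\vecz-\matQ\vinper}\le\onenorm{\vinpfilt-\matQ\vinp}=\onenorm{\vecz}$, where $\vinper=\hat\vinp-\vinp$. Hence, for \emph{any} vector $\vecv$ with $\infnorm{\vecv}\le1$, the decomposition $\langle\vecv,\matQ\vinper\rangle=\langle\vecv,\vecz\rangle-\langle\vecv,\vecz-\matQ\vinper\rangle$ together with H\"older's inequality gives $\langle\vecv,\matQ\vinper\rangle\le 2\onenorm{\vecz}$. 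Now $\matQ$ is a real symmetric operator that fixes the all-ones vector (its symbol has $\hat p_0=1$) and fixes $\dualp$ by hypothesis, so it also fixes $\vecone$, $\dualp$, and $\vecv:=2\dualp-\vecone$; for any such fixed $\vecw$ one has $\langle\vecw,\matQ\vinper\rangle=\langle\matQ\vecw,\vinper\rangle=\langle\vecw,\vinper\rangle$. I would record the two instances I use: $\langle\dualp,\vinper\rangle\le 2\onenorm{\vecz}$ and $\langle 2\dualp-\vecone,\vinper\rangle\le 2\onenorm{\vecz}$ (the $\ell_\infty$ bound on $2\dualp-\vecone$ being valid because $q_l\in\{0\}\cup(2\rho,1]$, so $2q_l-1\in[-1,1]$; note that $q_l>2\rho$ and $\infnorm{\dualp}\le1$ force $\rho<1/2$ and $\dualp\ge\veczero$).

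Step two is to control the "positive part" of the error, $P:=\sum_{l:\,l/\idim\notin\setT}h_l$, observing that every such $h_l$ is $\ge0$ by the very definition of $\setT$. Since $q_l=0$ on $\setT$ and $q_l>2\rho$ off $\setT$,
\[
2\rho\,P\le\sum_l q_l h_l=\langle\dualp,\vinper\rangle\le 2\onenorm{\vecz},
\]
hence $P\le\onenorm{\vecz}/\rho$.

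Step three closes the argument with the single multiplier $\vecv=2\dualp-\vecone$. Set $\vecg=\sign(\vinper)$, i.e.\ $g_l=-1$ for $l/\idim\in\setT$ and $g_l=+1$ otherwise, so that $\onenorm{\vinper}=\langle\vecg,\vinper\rangle$. The difference $\vecg-\vecv$ vanishes on $\setT$ and equals $2(1-q_l)\in[0,2(1-2\rho))$ off $\setT$, where $h_l\ge0$; therefore $0\le\langle\vecg-\vecv,\vinper\rangle\le 2(1-2\rho)P$. Combining this with Steps one and two,
\[
\onenorm{\vinper}=\langle\vecv,\vinper\rangle+\langle\vecg-\vecv,\vinper\rangle\le 2\onenorm{\vecz}+2(1-2\rho)\,\frac{\onenorm{\vecz}}{\rho}=\frac{2(1-\rho)}{\rho}\,\onenorm{\vecz},
\]
which is \eqref{eq:errorbound}. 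The one non-obvious move — and the main obstacle if one instead tries to bound the negative part $\sum_{l/\idim\in\setT}\abs{h_l}$ separately, say via the multiplier $\vecone$, which only delivers the weaker constant $2(1+\rho)/\rho$ — is noticing that $2\dualp-\vecone$ is again an $\ell_\infty$-bounded vector fixed by $\matQ$; this single observation is what produces the sharp constant. Everything else is H\"older plus sign bookkeeping, enabled by the structural fact that $h_l\ge0$ for $l/\idim\notin\setT$, which in turn rests on $\hat\vinp,\vinp\ge\veczero$ (so the negative entries of $\vinper$ can only sit on $\support(\vinp)$, which is precisely why $\setT$ inherits the Rayleigh regularity needed to build the certificate $\dualp$ in the first place).
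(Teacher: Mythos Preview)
Your proof is correct and reaches the same constant, but the paper's route is shorter: it uses a \emph{single} shifted multiplier $\tilde\dualp := \dualp - \rho\vecone$. Since $\matQ$ fixes constants, $\matQ\tilde\dualp = \tilde\dualp$; since $q_l \in \{0\} \cup (2\rho,1]$, one has $\tilde q_l \in \{-\rho\} \cup (\rho,\,1-\rho]$, so $\infnorm{\tilde\dualp} \le 1-\rho$ and, crucially, $\sign(\tilde q_l) = \sign(h_l)$ with $\abs{\tilde q_l} \ge \rho$ everywhere. This yields in one stroke
\[
\rho\,\onenorm{\vinper} \;\le\; \langle\tilde\dualp,\vinper\rangle \;=\; \langle\tilde\dualp,\matQ\vinper\rangle \;\le\; (1-\rho)\cdot 2\onenorm{\vecz},
\]
with no separate positive-part estimate needed. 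Your route uses two multipliers ($\dualp$ to bound $P$, then $2\dualp-\vecone$ to approximate the sign vector); the paper's shift by $\rho$ rather than by $1/2$ is exactly what collapses those two steps into one. Both arguments rest on the same structural facts (self-adjointness of $\matQ$, $\matQ\vecone=\vecone$, optimality of $\hat\vinp$), so the difference is one of economy, not of idea.
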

\begin{proof}
  Set $\tilde \dualp=\tp{[\tilde q_0\cdots \tilde
    q_{\idim-1}]}=\dualp-\rho$ and note that
  $\infnorm{\tilde\dualp}\le 1-\rho$ since $\rho \le 1/2$. On the one
  hand,
\begin{align}
  \abs{\inner{\tilde\dualp}{\vinper}}
  =\abs{\inner{\matQ\tilde\dualp}{\vinper}}& =\abs{\inner{\tilde\dualp}{\matQ\vinper}}\nonumber\\
  &\le \infnorm{\tilde\dualp}\onenorm{\matQ\vinper}\nonumber\\
  &\le (1-\rho) \onenorm{\matQ\vinp-\vinpfilt+\vinpfilt-\matQ\vinpest}\nonumber\\
  &\le (1-\rho) (\onenorm{\matQ\vinp-\vinpfilt}+\onenorm{\vinpfilt-\matQ\vinpest})\nonumber\\
  &\le 2(1-\rho) \onenorm{\matQ\vinp-\vinpfilt} \\ & = 2(1-\rho) \cdot
  \|\vecz\|_1.
	\label{eq:pt1}
\end{align}
On the other hand, using $\sign(\tilde q_l)=\sign(h_l)$ for all $l$
gives
\begin{align}
	\label{eq:pt2}
	\abs{\inner{\tilde\dualp}{\vinper}}=\abs{\sum_{l=0}^{\idim-1} \tilde q_l \vinperc_l}=\sum_{l=0}^{\idim-1} \tilde q_l \vinperc_l=\sum_{l=0}^{\idim-1} \abs{\tilde q_l} 	\abs{\vinperc_l} \ge \rho \onenorm{\vinper},
\end{align}	
Combining \fref{eq:pt1} and \fref{eq:pt2} yields the conclusion. 
\end{proof}

\subsubsection{Localization of trigonometric polynomials}
\label{sec:intuit}
\fref{lem:fnd} shows that in order to obtain a tight bound we need to
construct a (dual) vector $\dualp$ obeying $\infnorm{\dualp}\le 1$ and
\fref{eq:qbound} with $\rho$ as large as possible. First, observe that
since $\vinp,\hat\vinp \ge \veczero$ it follows that $\setT$
in~\fref{eq:setT} satisfies $\abs{\setT}\le\zeronorm{\vinp}< \odim/2$.
The idea is to construct a real-valued trigonometric polynomial of
largest frequency $\fc$ (recall $\odim=2\fc+1$)
\begin{equation}
	\label{eq:trigpolylow}
        q(t)=\sum_{k=-\fc}^{\fc} \hat q_k e^{-\iu 2\pi k t}\in\reals \quad \text{for all} \quad t, 
\end{equation}
obeying $\infnorm{q}\le 1$, 
\begin{equation}
		\begin{cases}
		    q(t)=0,\quad \text{for all}\quad t\in\setT,\\
			q(t)>0,\quad \text{for all} \quad t\notin\setT, 
		\end{cases}\label{eq:polycond1}
\end{equation}
and set $\dualp=\{q(l/\idim) : l \in [0:N-1]\}$.  Observe that such a
$\dualp$ would obey the conditions of \fref{lem:fnd} with
$\rho=\frac{1}{2}\argmin_{l/\idim\notin\setT}\{q(l/\idim)\}$.

A classical approach to constructing such a polynomial $q(t)$ is
\begin{equation}
	\label{eq:badq}
	q(t)=\prod_{t_0\in\setT} \frac{1}{2}\left[\cos\lefto(2\pi (t-t_0) + \pi\right)+1 \right].
\end{equation}
This approach, used in~\cite{donoho90-06,fuchs05}, works whenever
$\abs{\setT}<\odim/2$ since the degree of $q(t)$ is then at most
$\fc$. 
The problem is that $q(t)$ in \fref{eq:badq} grows extremely slowly
around its zeros, making $\rho$ very small, which then translates into
highly suboptimal stability estimates. To demonstrate this, assume
that $\setT=\{0\}$, i.e., $\abs{\setT}=1$.  Then
(see~\fref{fig:donohopoly})
\begin{equation}
	\label{eq:badq1}
	q(t)=\frac{1}{2}\left[\cos\lefto(2\pi t + \pi\right)+1\right]
\end{equation}
so that
\begin{equation}
	q(1/\idim) \le \frac{\pi^2}{\idim^2}
\end{equation}
and $\rho\le \frac{\pi^2}{2\idim^2}$. Plugging this into \fref{eq:errorbound} we get an estimate no better than
\begin{equation}
	\label{eq:errorbounddon}
	\vecnorm{\hat\vinp-\vinp}_1\le \frac{2}{\pi^2}\idim^2 \cdot \|\vecz\|_1.
\end{equation}
This is weak. In the case when $\vinp$ has one spike, the separation
condition $\setT\in\rclass{3.74}{1}$ of~\cite{candes13} is
trivially satisfied. The results in~\cite{candes13} guarantee that
$\ell_1$ minimization achieves 
\begin{equation}
	\label{eq:errorbound2}
	\vecnorm{\hat\vinp-\vinp}_1\le \cnstnum \cdot \frac{\idim^2}{\odim^2} \cdot \|\vecz\|_1 =\cnstnum \cdot \SRF^2 \cdot \|\vecz\|_1,
\end{equation}
where $\cnstnum$ is a numerical constant.  The reason
why~\cite{candes13} provides stability guarantees far stronger
than~\fref{eq:errorbounddon} is that the trigonometric polynomial
$q(t)$ constructed in~\cite{candes13} grows around its zeros much
faster than $q(t)$ in~\fref{eq:badq1}. We review the behavior of
$q(t)$ constructed in~\cite{candes13} in~\fref{lem:dual} below and
illustrate the difference between this polynomial and that
in~\fref{eq:badq1}. Based on the results of~\cite{candes13}, we then
present a novel construction for~$q(t)$ that does not
rely on the minimal separation condition $\setT\in\rclass{3.74}{1}$
needed in~\cite{candes13} and works for all signals with
Rayleigh regular support of the type
$\setT\in\rclass{3.74\rsp}{\rsp}$. At the same time, the new
polynomial~$q(t)$ grows rapidly around its zeros, which allows us to
derive strong stability guarantees.

\renewcommand\cxi{0.0057}
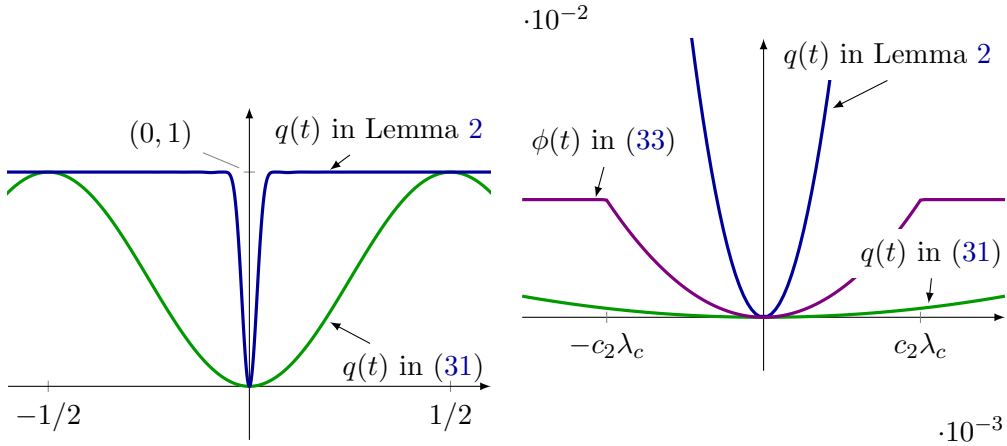
\begin{figure}[ht]
        \centering
\begin{subfigure}[b]{0.4\textwidth}
\begin{tikzpicture} 
	\begin{axis}[width=8cm,height=6cm,
    	ytick={1},
		xtick={-0.5,0.5},
    	axis y line=center,
    	axis x line=middle,
		xmin=-0.6,
		xmax=0.6,
		ymax=1.3,
		ymin=-0.25,
     	xticklabels={$-1/2$,$1/2$},
		yticklabels={}]
	
		\node[pin=165:{$(0,1)$}] at (axis cs:0,1) {};
    	\addplot[line width=1.2pt,green!60!black] table[x index=0,y index=1]{./figures/fig1poly.txt};
		\addplot[line width=1.2pt,blue!60!black] table[x index=0,y index=2]{./figures/fig1poly.txt};
		\draw[<-,>=latex] (axis cs: 0.2,0.3) -- (axis cs: 0.3,0.2) node[anchor=north] {\qquad\quad  $q(t)$ in \fref{eq:badq1}};
		\draw[<-,>=latex] (axis cs: 0.2,1.01) -- (axis cs: 0.25,1.08) node[anchor=south] {\qquad $q(t)$ in \fref{lem:dual}};
	\end{axis}
\end{tikzpicture}
\end{subfigure}%
~
\begin{subfigure}[b]{0.4\textwidth}
\begin{tikzpicture} 
	\begin{axis}[
		width=8cm,
		height=6cm,
    	axis y line=center,
    	axis x line=middle,
    	ytick=\empty,
		xtick=\empty,
		extra x ticks={-\cxi,\cxi},
		ymin=-0.00192,
		ymax=0.01,
		extra x tick labels={$-c_2\lambdac$,$c_2\lambdac$}
		]
		
		\addplot[line width=1.2pt,green!60!black] table[x index=0,y index=1]{./figures/fig2poly.txt};
		\addplot[line width=1.2pt,blue!60!black] table[x index=0,y index=2]{./figures/fig2poly.txt};
		\addplot[line width=1.2pt,violet] table[x index=0,y index=3]{./figures/fig2poly.txt};
		
		\draw[<-,>=latex] ( axis cs: -6.1e-3,4.3e-3)  -- ( axis cs: -5.8e-3,5.4e-3) node[anchor=south,fill=white] {$\phi(t)$ in \fref{eq:LBq}};
		\draw[<-,>=latex] ( axis cs: 2.5e-3,0.75e-2) -- ( axis cs: 4.5e-3,0.85e-2) node[anchor=south,fill=white] {$q(t)$ in \fref{lem:dual}};
		\draw[<-,>=latex] ( axis cs: 6e-3,0.5e-3) -- ( axis cs: 6.1e-3,1.4e-3) node[anchor=south,fill=white] {$q(t)$ in \fref{eq:badq1}};
		
	\end{axis}
\end{tikzpicture}
\end{subfigure}
\caption{Comparison of trigonometric polynomials at two different
  scales: the polynomial from \cite{candes13} (see~\fref{lem:dual})
  bounces off zero much faster than that used
  in~\cite{donoho90-06,fuchs05}.}
\label{fig:donohopoly}
\end{figure}

\subsubsection{Main building block: $q(t)$ under separation}
The following lemma is an immediate consequence
of~\cite[Lm. 2.5]{candes13} adapted to the case of real-valued signals
as explained in~\cite[Sec. 2.5]{candes13}.
\begin{lem}
\label{lem:dual}
Assume $\setT\in\setR_1({3.74},{1};{\idim},{\odim})$. As before,
$\fc=(\odim-1)/2, \lambdac=1/\fc$ and suppose $\fc\ge 128$. Then there
exists a real-valued polynomial $q(t) =\sum_{k=-\fc}^{\fc} \hat q_k
e^{-\iu 2\pi k t}$ with $\infnorm{q}\le 1$ such that
\begin{align}
	\begin{cases}
	q(t)=0, & \text{for all } t\in\setT,\\
	q(t)\ge \phi(t), & \text{for all } t,\\
	\end{cases}
\end{align}
where (see~\fref{fig:donohopoly})
\begin{align}
	\label{eq:LBq}
	\phi(t)=\begin{cases}
          c_1 \fc^2 (t_0-t)^2, & \text{for all}\  t\  \text{s.t.~} \exists t_0\in\setT\ \text{with}\ \abs{t-t_0}\le c_2\lambdac\\
          c_1 \fc^2 (c_2 \lambdac)^2=c_1 c_2^2, & \ \text{otherwise}, 
	\end{cases}
\end{align}
and $c_1=0.029$, $c_2=0.17$.		\end{lem}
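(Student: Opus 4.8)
The plan is to obtain $q$ by a one-line affine transformation of the low-frequency dual certificate constructed in \cite[Lm.~2.5]{candes13}; all of the analytic content is then imported, and the lemma is, as advertised, an immediate consequence.

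First I would invoke \cite[Lm.~2.5]{candes13} for the single cluster $\setT$ with the constant sign pattern $v_{t_0}\equiv 1$, $t_0\in\setT$. The hypothesis $\setT\in\rclass{3.74}{1}$ is, by \fref{def:rreg}, exactly the statement that every interval of length $3.74\lambdac/2=1.87\lambdac$ meets $\setT$ in at most one point, i.e.\ consecutive points of $\setT$ are more than $1.87\lambdac$ apart, which is the minimum-separation hypothesis required by \cite{candes13} in its real-valued form. The cited construction then produces a real trigonometric polynomial $\bar q$ with spectrum contained in $\{-\fc,\ldots,\fc\}$ and $\infnorm{\bar q}\le1$, of the form $\bar q(t)=\sum_{t_0\in\setT}\bigl(\alpha_{t_0}K(t-t_0)+\beta_{t_0}K'(t-t_0)\bigr)$, where $K$ is a Fej\'er-type kernel (real and even) and the real coefficients $\alpha_{t_0},\beta_{t_0}$ solve the interpolation system $\bar q(t_0)=1$, $\bar q'(t_0)=0$; since $K$ is real and even and the $v_{t_0}$ are real, $\bar q$ is automatically real on $\ucirc$, which is the real-valued specialization discussed in \cite[Sec.~2.5]{candes13} (also the source of the improved separation constant $1.87\lambdac$ in place of the $2\lambdac$ of the complex case). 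The threshold $\fc\ge128$ is precisely what makes the interpolation system invertible with the small off-diagonal error bounds that the estimates below rely on.

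Next I would set
\[
  q(t)=\tfrac{1}{2}\bigl(1-\bar q(t)\bigr).
\]
This $q$ is real, has spectrum in $\{-\fc,\ldots,\fc\}$ (the constant contributes only the zero mode), and obeys $0\le q\le1$ — in particular $\infnorm{q}\le1$ — because $-1\le\bar q\le1$; moreover $q(t_0)=\tfrac{1}{2}(1-v_{t_0})=0$ for every $t_0\in\setT$, which is the first assertion. For the bump bound $q\ge\phi$, I would read off the two pointwise estimates established inside the proof of \cite[Lm.~2.5]{candes13}: near an interpolation node the certificate detaches from its maximum value quadratically, $1-\bar q(t)\ge 2c_1\fc^2(t-t_0)^2$ whenever $\abs{t-t_0}\le c_2\lambdac$, and away from all nodes it stays uniformly below $1$, $1-\bar q(t)\ge 2c_1c_2^2$; here $c_1=0.029$ and $c_2=0.17$ are simply the constants that this construction yields. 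Halving these turns them into $q(t)\ge c_1\fc^2(t-t_0)^2$ near a node and $q(t)\ge c_1c_2^2$ elsewhere, which is exactly $q(t)\ge\phi(t)$ for all $t$ with $\phi$ as in \eqref{eq:LBq} — note that $2c_2\lambdac<1.87\lambdac$, so the quadratic windows about distinct nodes are disjoint, and the two estimates agree at $\abs{t-t_0}=c_2\lambdac$ since $\lambdac=1/\fc$. (See \fref{fig:donohopoly} for the shape of this $q$ against $\phi$ and against the classical polynomial \eqref{eq:badq1}.)

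The part that genuinely needs care is bookkeeping rather than new mathematics, and that is where I would expect to spend the effort: matching the degree normalization $\fc=(\odim-1)/2$ used here to the degree conventions of \cite{candes13}, checking that $\fc\ge128$ (and any implicit parity condition on $\fc$ coming from the kernel degree) is exactly the regime in which that construction and its error estimates are valid, and propagating the explicit numerical constants through the affine rescaling. The substantive ingredients — invertibility of the interpolation matrix, $\infnorm{\bar q}\le1$, and the quadratic and far-field bounds on $\bar q$ — are all already present in \cite{candes13}, so no new idea is needed.
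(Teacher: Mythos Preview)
Your proposal is correct and is exactly what the paper does: the paper states that \fref{lem:dual} ``is an immediate consequence of~\cite[Lm.~2.5]{candes13} adapted to the case of real-valued signals as explained in~\cite[Sec.~2.5]{candes13}'' and gives no further argument. Your affine transformation $q=\tfrac{1}{2}(1-\bar q)$ of the dual certificate $\bar q$ from \cite{candes13} (with constant sign pattern), together with the quadratic-detachment and far-field bounds proved there, is precisely the ``immediate consequence'' the paper has in mind.
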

The significance of this lemma is that the growth of $q(t)$ around its
zeros is nearly optimal.  Indeed, suppose we wish to construct a real
nonnegative polynomial with highest frequency $\fc$ of the
form~\fref{eq:trigpolylow} of magnitude at most one, and which grows
around its zeroes as fast as possible. How fast could it possibly
grow?  Since $q(t)$ is a superposition of harmonic functions, it
cannot outpace a pure harmonic---normalized to take on values in
$[0,1]$---at the highest available frequency. Hence, we cannot hope
for growth faster than
\begin{equation}
	\label{eq:optgrows}
	\frac{1}{2}\left[\cos\lefto(2\pi \fc (t - t_0) + \pi\right)+1\right]\approx \pi^2 \fc^2 (t-t_0)^2\quad \text{for small } t.
\end{equation}
Comparing~\fref{eq:optgrows} to~\fref{eq:LBq}, we see that
\fref{lem:dual} provides a construction that is optimal up to at most
a constant factor.

We now show how to extend the construction in~\fref{lem:dual} to the
case where the elements of $\setT$ are not necessarily well-separated,
but $\setT$ is Rayleigh regular. Together with~\fref{lem:fnd}, this
will prove~\fref{thm:UB}. The proof below is illustrated
on~\fref{fig:proof}, which the reader is encouraged to consult while
following the argument.  \renewcommand\cxi{0.1196}
\renewcommand\cxii{0.1739} \renewcommand\cxiii{0.5000}
\newcommand\cxiv{0.5761} \newcommand\cxv{0.8913}
\begin{figure}[ht]
        \centering
		\begin{tikzpicture} 
			\begin{axis}[width=12cm,scale only axis, height=3cm,
		   	xmin=-0.1,
				xmax=1.1,
				xtick={0,\cxi,\cxii,\cxiii,\cxiv,\cxv,1},
		    	ytick=\empty,
		   	ymax=4.3e-3,
				ymin=0,
		     	xticklabels={$0$,$t_1$,$t_2$,$t_3$,$t_4$,$t_5$,$1$},
				hide y axis,
				axis x line = middle]
	
		    	\addplot[line width=1.2pt,green!60!black] table[x index=0,y index=1]{./figures/figqlb.txt};
				\addplot[line width=1.2pt,blue!60!black,dashed] table[x index=0,y index=2]{./figures/figqlb.txt};
				\draw[<-,>=latex] (axis cs: 0.095,1e-3) -- (axis cs: 0.05,1e-3) node[anchor=east] {$\phi_1(t)$};
				\draw[<-,>=latex] (axis cs: 0.2,1e-3) -- (axis cs: 0.25,1e-3) node[anchor=west] {$\phi_2(t)$};
			\end{axis}
		\end{tikzpicture}
\caption{Illustration of the proof of~\fref{thm:UB} for $\rsp=2$; $\setT_1=\{t_1,t_2,t_3\}$; $\setT_2=\{t_2,t_4\}$. The trigonometric polynomials $q_1(t)$, $q_2(t)$ satisfy $q_1(t)=0$ for all $t\in \setT_1$ and $q_2(t)=0$ for all $t\in \setT_2$; they are not displayed. The lower bounds $\phi_1(t)$ and $\phi_2(t)$ defined in~\fref{eq:LBdef} are depicted.}
\label{fig:proof}
\end{figure}
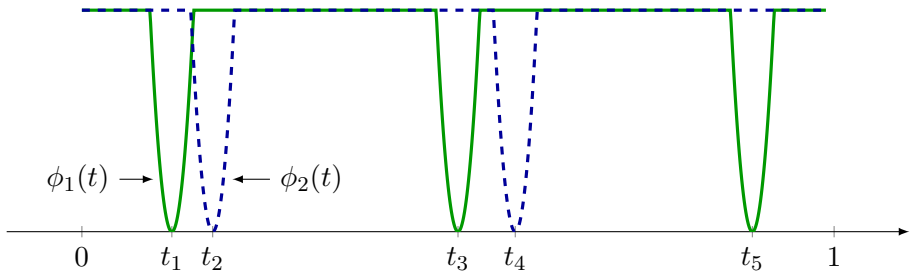

\subsubsection{Construction of $q(t)$ without separation}
\label{sec:construction}
Take $\vinp\in\rclassp{3.74\rsp}{\rsp}$ with a support of cardinality
$S$. Define $\vinper$ and $\setT=\{t_k\}_{k=1}^\sparsity$ with $t_1 <
t_2 < \ldots < t_S$ as in Lemma \ref{lem:fnd}. Since $h_l$ can only
take on negative values on $\support(\vinp)$, then
$\setT\in\setR_1(3.74\rsp,\rsp)$.  Consider the partition
$\setT=\union_{i=1}^\rsp\setT_i$, where $\setT_i=\{t_{\rsp
  k+i}\}_{k=0}^{\sparsity/\rsp-1}$.  Since $\setT \in
\rclass{3.74\rsp}{\rsp}$, $\setT_i\in \rclass{3.74 \rsp}{1}$ and by
rescaling, 
\begin{equation}
	\rclassidxf{3.74 \rsp}{1}{\idim}{\odim}=\rclassidxf{3.74}{1}{\idim}{\tilde \odim}, 
\end{equation}
where $\tilde \odim=(\odim-1)/\rsp+1$. Set\footnote{
  Strictly speaking, this requires $\fc/\rsp$ to be an integer. If
  $\fc/\rsp$ is not an integer, we can substitute $\fc$ with
  $\rsp \floor{\fc/\rsp}$ and repeat the argument for the new
  $\fc$. Since $\fc\ge 128\rsp$ by assumption, this transformation
  will result in less than a $1\%$ change meaning that
  $\rclass{3.74\rsp}{\rsp}$ would need to change into
  $\rclass{3.77\rsp}{\rsp}$. To keep things simple, we ignored this
  detail throughout the paper and implicitly assumed that $\fc/\rsp$
  is an integer.}
$\tilde\fc = (\tilde \odim-1)/2=(\odim-1)/(2\rsp)=\fc/\rsp$ and
$\tlambdac=1/\tilde\fc=\rsp/\fc$.  By~\fref{lem:dual}, there are
real-valued polynomials
$q_i(t; \idim, \tilde\odim)=\sum_{k=-\tilde\fc}^{\tilde\fc} \hat
q_{ik} e^{-\iu 2\pi k t}$ with $\infnorm{q_i}\le 1$ and
\begin{equation}
	\begin{cases}
		q_i(t)=0, &  \text{for all $t\in\setT_i$},\\
		q_i(t)\ge \phi_i(t), & \text{for all $t$}, 
	\end{cases}
\end{equation}
where (see~\fref{fig:proof})
\begin{align}
	\label{eq:LBdef}
	\phi_i(t)=\begin{cases}
          c_1 \tilde\fc^2 (t_0-t)^2, &  \text{for all $t$ s.t.~ $\exists t_0 \in \setT_i$ with $\abs{t-t_0}\le c_2\tlambdac$},\\
          c_1 \tilde\fc^2 (c_2 \tlambdac)^2=c_1 c_2^2, &
          \text{otherwise}.
	\end{cases}
\end{align}%
The trigonometric polynomial $q$ is obtained by taking the product of
the $q_i$'s:
\begin{equation}
	q(t)=\prod_{i=1}^{\rsp} q_i(t; \idim, \tilde\odim).
\end{equation}
By construction, $q(t)$ is band-limited, i.e.,
$q(t)=\sum_{k=-\fc}^{\fc} \hat q_k e^{-\iu 2\pi k t}$, $\infnorm{q}\le
1$, and
\begin{equation}
	\begin{cases}
		q(t_0)=0, & \text{for all } t_0\in\setT,\\
		q(t)\ge \prod_{i=1}^{\rsp} \phi_i(t), & \text{for all } t. 
	\end{cases}\label{eq:qbnd0}
\end{equation}
Next we further lower-bound $\prod_{i=1}^{\rsp} \phi_i(t)$. Fix $t$
and let $\setN=\{t_1,\ldots,t_{\hat\rsp}\}=\{\hat t\in\setT:
\abs{t-\hat t}\le c_2 \tlambdac\}$. Since
$\setT\in\rclass{3.74\rsp}{\rsp}$, it follows that $\hat
\rsp\le\rsp$. Let $t_0$ be the closest element of $\setN$ to $t$ so
that $(t_0-t)^2\le (t_i-t)^2$ for all $i=1,\ldots,\hat\rsp$. By the
definition of $\setN$, $c_1 \tilde\fc^2 (t_i-t)^2\le c_1
c_2^2$. Using~\fref{eq:LBdef} and these inequalities we may write
\begin{align}
		\prod_{i=1}^{\rsp} \phi_i(t)
		&\ge
		(c_1 c_2^2)^{\rsp-\hat\rsp} c_1^{\hat\rsp} \tilde f_c^{2\hat\rsp} \prod_{i=1}^{\hat\rsp} (t_i-t)^2 \\
		&\ge 
		\begin{cases}
		c_1^\rsp \tilde\fc^{2\rsp} (t_0-t)^{2\rsp},\ \text{for all}\  t\  \text{s.t.}\ \hat\rsp>0 \\
		(c_1 c_2^2)^\rsp, \ \text{otherwise}.
		\end{cases}
		\label{eq:qbnd1}
\end{align}%

The assumption $\SRF\ge 3/\rsp$ implies $\SRF= \idim/\odim>1/(2\rsp
c_2)\approx 2.94/\rsp$, which is equivalent to ${1}/{\idim}<c_2
\rsp {2}/{\odim}$ so that ${1}/{\idim}<c_2 \tlambdac$.
Therefore, from~\fref{eq:qbnd0} and \fref{eq:qbnd1}, it follows that
\begin{equation}
	\rho
	=\frac{1}{2}\argmin_{l/\idim\notin\setT}\{q(l/\idim)\}
	\ge c_1^\rsp \frac{1}{2} \tilde\fc^{2\rsp} \frac{1}{\idim^{2\rsp}}
	= c_1^\rsp \frac{1}{2 \rsp^{2\rsp}} \fc^{2\rsp} \frac{1}{\idim^{2\rsp}}
	= c_1^\rsp \frac{1}{2 (2 \rsp)^{2\rsp}}  \left(\frac{\odim-1}{\idim}\right)^{2\rsp}.
	\label{eq:rho1}
\end{equation}
Plugging this into~\fref{eq:errorbound} gives
\begin{equation}
	\label{eq:c1def}
	\vecnorm{\hat\vinp-\vinp}_1\le \underbrace{4 c_3^\rsp \rsp^{2\rsp}}_{C_{1}(\rsp)}  \left(\frac{\idim}{\odim-1}\right)^{2\rsp}\cdot \|\vecz\|_1,
\end{equation}
where $c_3=4/c_1=67.79$.  This completes the proof.  \qed

\begin{rem}[Possible Improvement]
  The constant $3.74$ in $\vinp\in \rclassp{3.74 \rsp}{\rsp}$ comes
  from the fact that our construction is built on top
  of~\fref{lem:dual} borrowed from~\cite[Sec. 2.5]{candes13}. If the
  constant $3.74$ in~\fref{lem:dual} is reduced, all our results
  automatically improve without any modification. Carlos
  Fernandez-Granda privately shared with
  us~\cite{fernandez-granda15super} that it is possible to substitute
  $3.74$ by $2.52$ in~\fref{lem:dual}.
\end{rem}

\begin{rem}[Extension]
  Consider a signal consisting of spike clusters as shown in
  \fref{fig:clust}, and violating the separation constraint. Suppose
  that within each cluster, the spikes have the same sign. Then our
  proof technique can be used to show that if the clusters are
  sufficiently separated, the signal can be recovered stably by convex
  programming. We omit the details.

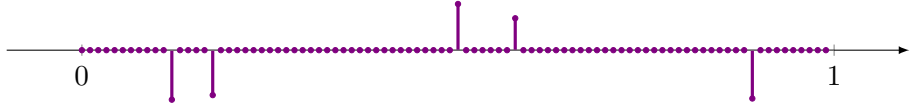
\begin{figure}[ht]
        \centering
\begin{tikzpicture}
\begin{axis}[width=12cm,scale only axis, height=1.4cm,
    name= plot1,
    ytick=\empty,
	xtick={0,1},
	axis x line = middle,
	hide y axis,
	ymin=-1,
	ymax=1,
	xmin=-0.1,
	xmax=1.1]
 	\addplot[ycomb,range=0:1,line width=0.1pt,violet,mark options={fill=violet,scale=1},mark=*, mark size=1pt,only marks] 
 		table[x index=0,y index=2]{./figures/figrclassminus.txt};
 	\addplot[ycomb,range=0:1,line width=1.2pt,violet] 
  		table[x index=0,y index=2]{./figures/figrclassminus.txt};
   \draw[help lines] (axis cs:0,0) -- (axis cs:1,0);
\end{axis}
\end{tikzpicture}
\caption{ Signal with clustered spikes that have the same sign within
  a cluster but different signs across clusters.}
\label{fig:clust}
\end{figure}
\end{rem}

\subsection{Proof of Theorem~\ref{thm:UBtriag} in the 1D case} 
\label{sec:frej}

Our strategy is to reduce the problem to that in which we have a flat
spectrum. Choose $1/2\le\cutfn< 1$ (a parameter we can optimize) so
that $\cutfn\fcn$ is an integer, and define the filter
\begin{equation}
	\label{eq:filter3}
	\hat r_k\define
	\begin{cases}
		\frac{\fcn+1}{\fcn+1-\abs{k}}, &  k=-\cutfn\fcn,\ldots, \cutfn\fcn,\\
		(\fcn+1)\left(a \abs{k}+b\right), & \abs{k}=\cutfn\fcn+1, \ldots, \fcn,\\
		0, & \text{otherwise}.
	\end{cases}
\end{equation}
with
\begin{equation}
	a\define -\frac{1}{\fcn(1-\cutfn)+1}\frac{1}{\fcn(1-\cutfn)}\quad \text{and}\quad
	b\define \frac{1}{\fcn(1-\cutfn)+1}\frac{1}{1-\cutfn}.
\end{equation}
Set $\hat\matR\define\diag(\hat\vecr)$ with $\hat\vecr\define\tp{[\hat
  r_{-\idim/2+1}\cdots \hat r_{\idim/2}]}$ and let
$\matR\define\herm\matF\hat\matR\matF$. The point is that
\[
\matT\define\matR\matQ
	=\herm\matF\hat \matR\matF \herm\matF\hat \matQ\matF
	=\herm\matF\hat \matR\hat \matQ\matF
	=\herm\matF\hat\matT\matF
\]
has a spectrum $\hat\matT\define\diag(\tp{[\hat t_{-\idim/2+1}\cdots
  \hat t_{\idim/2}]})$ given by
\begin{equation}
	\hat t_k\define
	\begin{cases}
		1,& k=-\cutfn\fcn,\ldots,  \cutfn\fcn,\\
		\left(\fcn+1-\abs{k}\right) \left(a \abs{k}+b\right),& \abs{k}=\cutfn\fcn+1, \ldots, \fcn,\\
		0,& \text{otherwise}.
	\end{cases}
\end{equation}
Note that the spectrum of $\matT$ is flat in the region
$-\cutfn\fcn\le k\le \cutfn\fcn$.  Next, construct $\vecq$ as
in~\fref{sec:proofp} with $\fc$ replaced by $\alpha\fc$. Because
$\dualp$ is band-limited to $\alpha\fc$, $\matT\dualp=\dualp$.  On the
one hand,
\begin{align}
  \abs{\inner{\dualp}{\vinper}}
  =\abs{\inner{\matT\dualp}{\vinper}} & =\abs{\inner{\dualp}{\matT\vinper}}\nonumber\\
  &\le \infnorm{\dualp}\onenorm{\matT\vinper}\nonumber\\
  &\le (1-\rho) \onenorm{\matR\matQ\vinp-\matR\vinpfilt+\matR\vinpfilt-\matR\matQ\vinpest}\nonumber\\
  &\le (1-\rho) \vecnorm{\matR}_{1,op} \left(\onenorm{\matQ\vinp-\vinpfilt}+\onenorm{\vinpfilt-\matQ\vinpest}\right)\nonumber\\
  &\le 2(1-\rho) \vecnorm{\matR}_{1,op}\onenorm{\matQ\vinp-\vinpfilt}\nonumber\\
  &\le 2 (1-\rho) \calphan \cdot \|\vecz\|_1.
			\label{eq:pt1mod}
\end{align}
The last step follows from~\fref{app:technbnd}, where we show that for all $\idim$ and all $\SRF$, 
\begin{equation}
	\label{eq:calpha}
	\vecnorm{\matR}_{1,op}\le\calphan\define 2\cutfn+\frac{2}{1-\cutfn}+\frac{1.11}{2(1-\cutfn)^2}.
\end{equation}
Note that $\calphan$ is finite as long as $\cutfn<1$. For
$\cutfn=1/2$, $\calphan=7.22$. For $\cutfn=0.75$, $\calpha=18.38$.  On
the other hand, $\abs{\inner{\dualp}{\vinper}} \ge \rho
\onenorm{\vinper}$ as before, where $\rho$ is given in~\fref{eq:rho1}
with the substitution $\odim-1=2\fc\to 2\alpha\fc$.  In conclusion,
\begin{equation}
	\label{eq:c1ra}
	C=C_{1}(\rsp,\cutfn)=C_{1}(\rsp) \calphan\left(\frac{1}{\cutfn}\right)^{2\rsp}.
\end{equation}

\subsection{Remarks on Theorems \ref{thm:UB} and \ref{thm:UBtriag} in
  2D}
\label{sec:2d}
The proof of the 2D version of~\fref{thm:UB} closely mimics that in
the 1D case. Since $\vinp\in \rclasspdd{4.76 \rsp}{\rsp}$, we can
work with a partition $\setT = \cup_{1 \le i \le r} \setT_i$ with
$\setT_i\in \rclasstwo{4.76 \rsp}{1}$. This is illustrated
in~\fref{fig:2dsets} for $\rsp=2$. The proof follows the same steps as
in~\fref{sec:proofp}. The dual trigonometric polynomial is constructed
as a product of $\rsp$ polynomials. The $i$-th term in the product has
zeros on $\setT_i$ and is constructed using~\fref{lem:dualdd} given
in~\fref{app:proof2d} for completeness; this lemma is a \ac{2D}
version of~\fref{lem:dual}, and its proof can be found in
\cite[Prop. C.1]{candes13}, \cite[Sec. D.1]{candes21-2}. The proof
of~\fref{thm:UBtriag} in the \ac{2D} case follows the steps outlined
in~\fref{sec:frej} with slight modifications, which are omitted.
\begin{figure}
	\centering
	    \begin{subfigure}[b]{0.5\textwidth}
	        \centering
	        \includegraphics[width=0.55\textwidth]{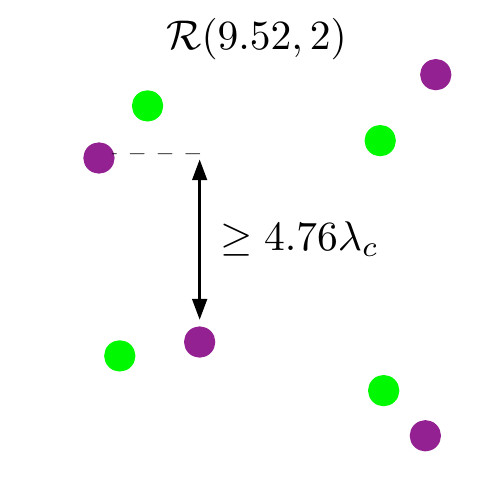}
	        \caption{}
	    \end{subfigure}%
	    ~ 
	    \begin{subfigure}[b]{0.55\textwidth}
	        \centering
	        \includegraphics[width=0.5\textwidth]{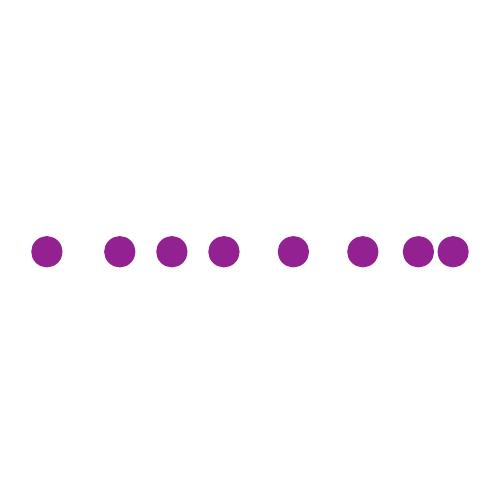}
	    	  \caption{}
	    \end{subfigure}
	
            \caption{In (a), $\setT\in\rclasstwo{9.52}{2}$ has eight
              elements and can be decomposed as
              $\setT=\setT_1\union\setT_2$ with
              $\setT_1,\setT_2\in\rclasstwo{9.52}{1}$. The points in
              $\setT_1$ are in blue and those in $\setT_2$ are in
              green. In (b), $\setT$ has eight points belonging to a
              line parallel to a coordinate axis; this is a worst-case
              scenario.}
\label{fig:2dsets}
\end{figure}

To the best of our knowledge, \fref{thm:UB} is the first result, in
the noisy and in the noiseless setting, showing that the 2D
super-resolution problem can be solved via convex optimization when
the signal is nonnegative, without assuming a separation
condition. It is instructive to discuss this point in details as the
discussion reveals interesting insights about the super-resolution
problem in higher dimensions.

Suppose one would like to obtain a noiseless result for nonnegative
signals similar to that in~\cite{donoho90-06,fuchs05}.  Following
\fref{sec:intuit} one could take a \ac{2D} version of the
 polynomial in \fref{eq:badq1},
\begin{equation}
	\label{eq:2dcos}
	\frac{1}{4}\left[\cos\lefto(2\pi t_1+ \pi\right)+1\right]\left[\cos\lefto(2\pi t_2+ \pi\right)+1\right]
\end{equation} 
and then form a product of such terms to build a low-frequency
polynomial $q(t_1,t_2)$ as done in \fref{eq:badq}. What
is the largest number of terms the product in \fref{eq:badq} could
contain in the \ac{2D} setting? Since each term of the form
\fref{eq:2dcos} costs two units in the frequency domain in each
variable, to be able to guarantee that $q(t_1,t_2)$ has frequency no
larger than $\fc$ in both variables, one can have no more that
$\odim/2$ terms of the form \fref{eq:2dcos}. Because each term in the
product is zero only at one point of the support, this technique would
not guarantee recovery of signals with more than $n/2$ spikes.  This is
discouraging since we have have $\odim^2$ observations. It is easy to
see that $\zeronorm{\vinp}<\odim/2$ is a tight bound in the worst
case: think about the situation where the support is located along a
line parallel to a coordinate axis as in~\fref{fig:2dsets}(b). In this
case, even though we have $\odim^2$ observations, the problem is
essentially one-dimensional with $\odim$ observations and no more that
$\odim/2$ spikes can possibly be resolved.

However, what happens in the typical situation where the spikes are
Rayleigh regularly spread over the domain as in~\fref{fig:2dsets}(a)?
In this case, we construct the trigonometric polynomial, which is a
product of $\rsp$ terms as in the \ac{2D} version of
\fref{eq:qbnd1}. Each term in the product has frequencies at most
$n/r$ and vanishes at $\sim\odim^2/\rsp^2$ points of the support
simultaneously. For example, in~\fref{fig:2dsets}(a) all the elements
in $\setT_1$ are roots of the first term and all those in $\setT_2$
are roots of the second. Hence, as \fref{thm:UB} shows, the number of
spikes can be as large
as $$\frac{(\odim-1)^2}{4.76^2\rsp^2}\rsp\approx
\frac{\odim^2}{4.76^2\rsp};$$ i.e.~for a fixed value of $r$,
$\zeronorm{\vinp}$ may grow linearly with the number of observations.
This is a much stronger result compared to what would be achievable
via the method from~\cite{donoho90-06,fuchs05}.

\section{Numerical results}
This section introduces a numerical simulation to illustrate the
effectiveness of our method in super-resolution microscopy.  Set
$\fc=19$, $\SRF=10$, so that $N=390$ and consider the 2D model with
$\matQ=\matQ_{\mathrm{tri,2D}}$. 
\begin{itemize}
\item The image $\vinp$ of dimensions $390\times 390$ dimensional is
  shown in \fref{fig:numb}. This image contains five different regions
  with different source densities: (i) the top-left quarter is a
  signal from $\rclasstwo{4.28}{1}$; (ii) the top-right quarter is
  from $\rclasstwo{2.14}{1}$; (iii) the lower-left quarter is from
  $\rclasstwo{4.28}{2}$; (iv) the lower-right quarter towards the
  center is from $\rclasstwo{2.24}{2}$; (v) and the lower-right
  quarter towards the corner contains three closely co-located
  spikes. All spikes were chosen to have equal magnitude set to
  $10,000$.  To be clear, we are performing one large experiment in
  which different regions of $\vinp$ exhibit different spike
  densities; we run the reconstruction algorithm only once. (Overall,
  the signal would need to belong to $\setR_2(\cdot,3)$ since it
  contains three spikes in a Nyquist cell.)

\item The observations, displayed in \fref{fig:numa}, are generated
  according to the model $\vecs=\mathrm{Pois}\left(\matQ\vinp\right)$.
\end{itemize}

We solve the \ac{LP} \fref{eq:find0} by smoothing the
  objective into $h_\mu(\vecs - \matQ \hat\vecx)$, where $h_\mu$ is
  the Huber function defined as $h_\mu(\vecy)= \sum_i h_\mu(y_i)$,
  where
\begin{equation}
	h_\mu(t)=\begin{cases}
	\frac{1}{2} t^2/\mu, & \abs{t}\le \mu,\\
	\abs{t}-\mu/2, & \abs{t}> \mu. 
	\end{cases}
\end{equation}
This is a smooth approximation to the $\ell_1$ norm, and is tight when
$\mu$ is small \cite{nesterov05smoot}.
To make sure our approximation is really tight, we set $\mu_0 = 0.1
\cdot \onenorm{\sqrt{\vecs}}/N^2\approx 0.1 \cdot
\onenorm{\vecs-\matQ\vecx}/N^2$. We then solve the smooth problem
using Lan/Lu/Monteiro's primal-dual first order
method~\cite{lan11prima} with a solver written in the framework
provided by TFOCS~\cite{becker11templ}. There are two implementation
details worth mentioning. First, we start the algorithm from an
initial guess obtained by the frequently used continuation method.
That is, we solve a series of three smoother problems (so that
convergence is faster) with $\mu \in \{10^3 \mu_0, 10^2 \mu_0, 10
\mu_0\}$, each time taking the solution to the previous problem as an
initial guess. To solve these intermediate problems, the stopping
criterion is a relative $\ell_2$ error between two consecutive
iterations below $10^{-5}$ or a number of iterations reaching $1000$,
whichever occurs first. Second, for the value of $\mu = \mu_0$, we
perform $15,000$ iterations of the Lan/Lu/Monteiro's method to obtain
a precise solution. This is an overkill but at the same time, this
guarantees that we are solving (CVX). For information, the total
computational cost is about 40,000 2D \acp{FFT} of size $390\times
390$.


%
  
The signal estimate is displayed in \fref{fig:numc}. In
\fref{fig:numd} we zoomed-in to six interesting domains of the images
in \fref{fig:numa}--\fref{fig:numc}, which are marked by white boxes
in \fref{fig:numa}. In each series of three images in~\fref{fig:numd}
we present the data, the original signal, and the estimate produced by
\fref{eq:find0}.

As we can see, in the regions (i), (ii), (iii) the algorithm performs
very well, resolving even the closely located pairs of spikes in
region (iii) (please see the zoomed-in vignettes). In region (iv) the
algorithm fails in many places, and region (v) is very poorly
resolved. The reason for the poor resolution in regions (iv) and (v)
is that in (iv), the average density of spikes is too high. In region
(v) there are too many spikes located within one Nyquist cell. 

\begin{figure}
	\centering
   	\begin{subfigure}[b]{0.32\textwidth}
       	\centering
       	\includegraphics[width=\textwidth]{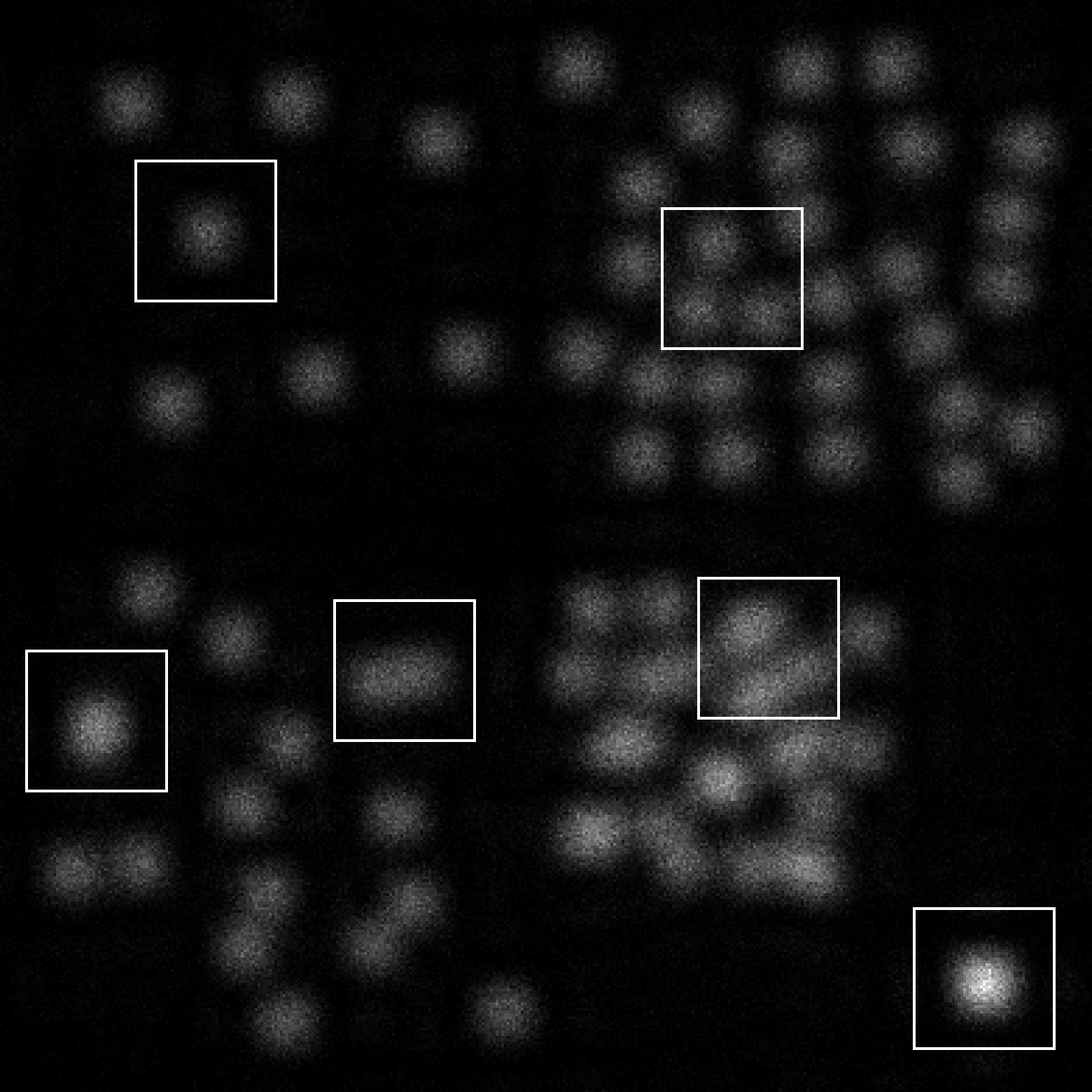}
   	  	\caption{}
			\label{fig:numa}
   	\end{subfigure}%
	   ~
		\begin{subfigure}[b]{0.32\textwidth}
	        \centering
	        \includegraphics[width=\textwidth]{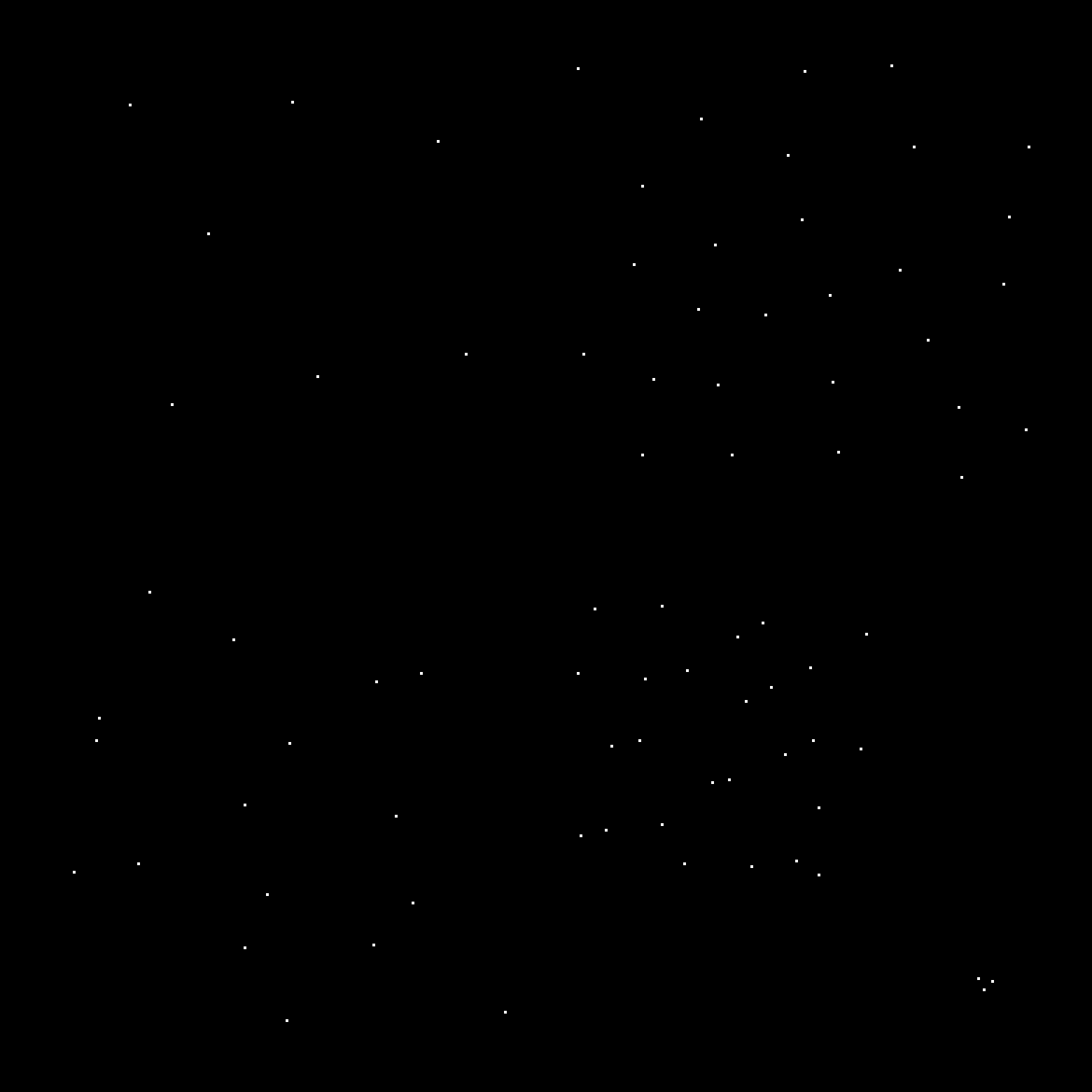}
	        \caption{}
	  			\label{fig:numb}
	    \end{subfigure}%
	    ~ 
	    \begin{subfigure}[b]{0.32\textwidth}
	        \centering
	        \includegraphics[width=\textwidth]{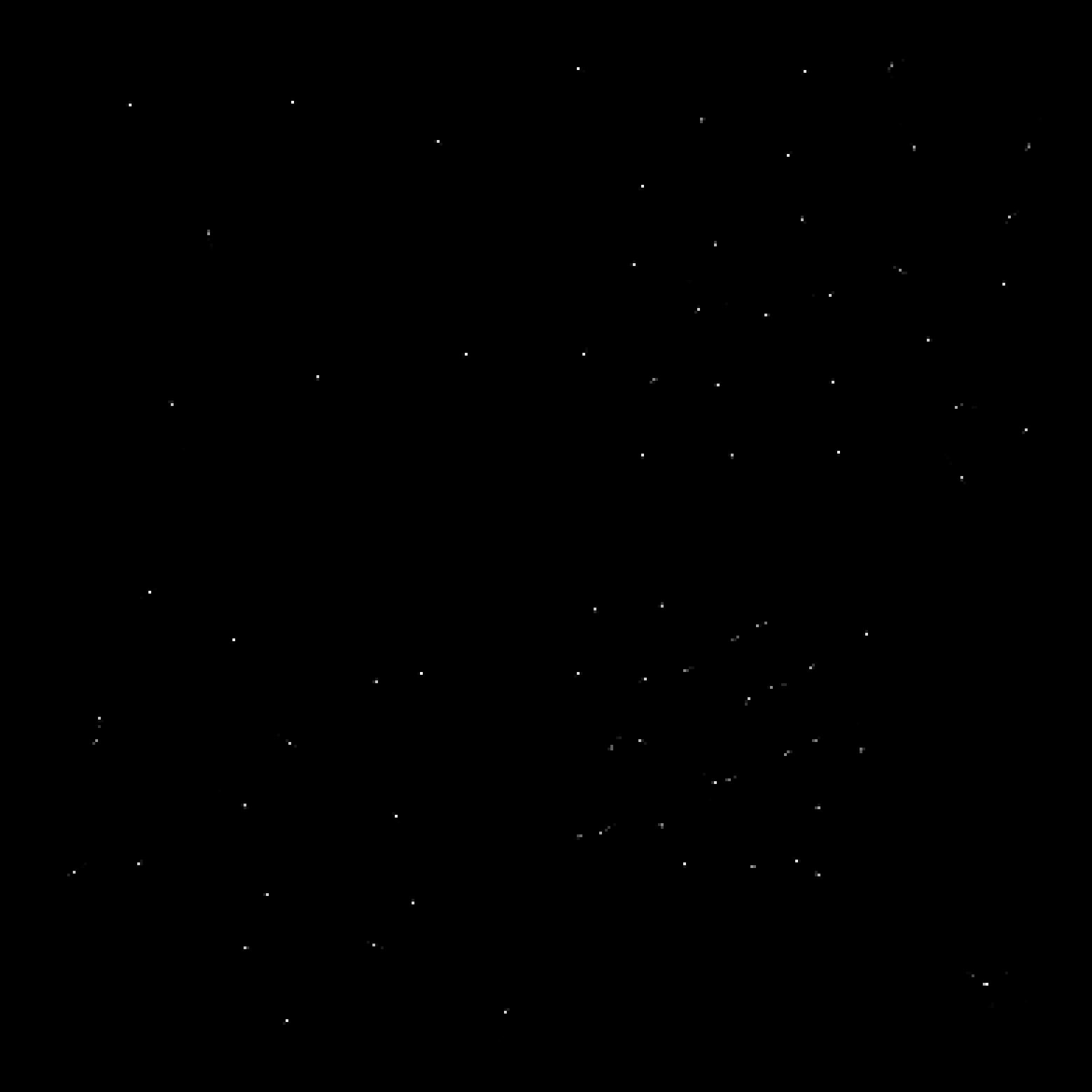}
	        \caption{}
			  \label{fig:numc}
	    \end{subfigure}
		 
	    \begin{subfigure}[b]{\textwidth}
	        \centering
	        \includegraphics[width=\textwidth]{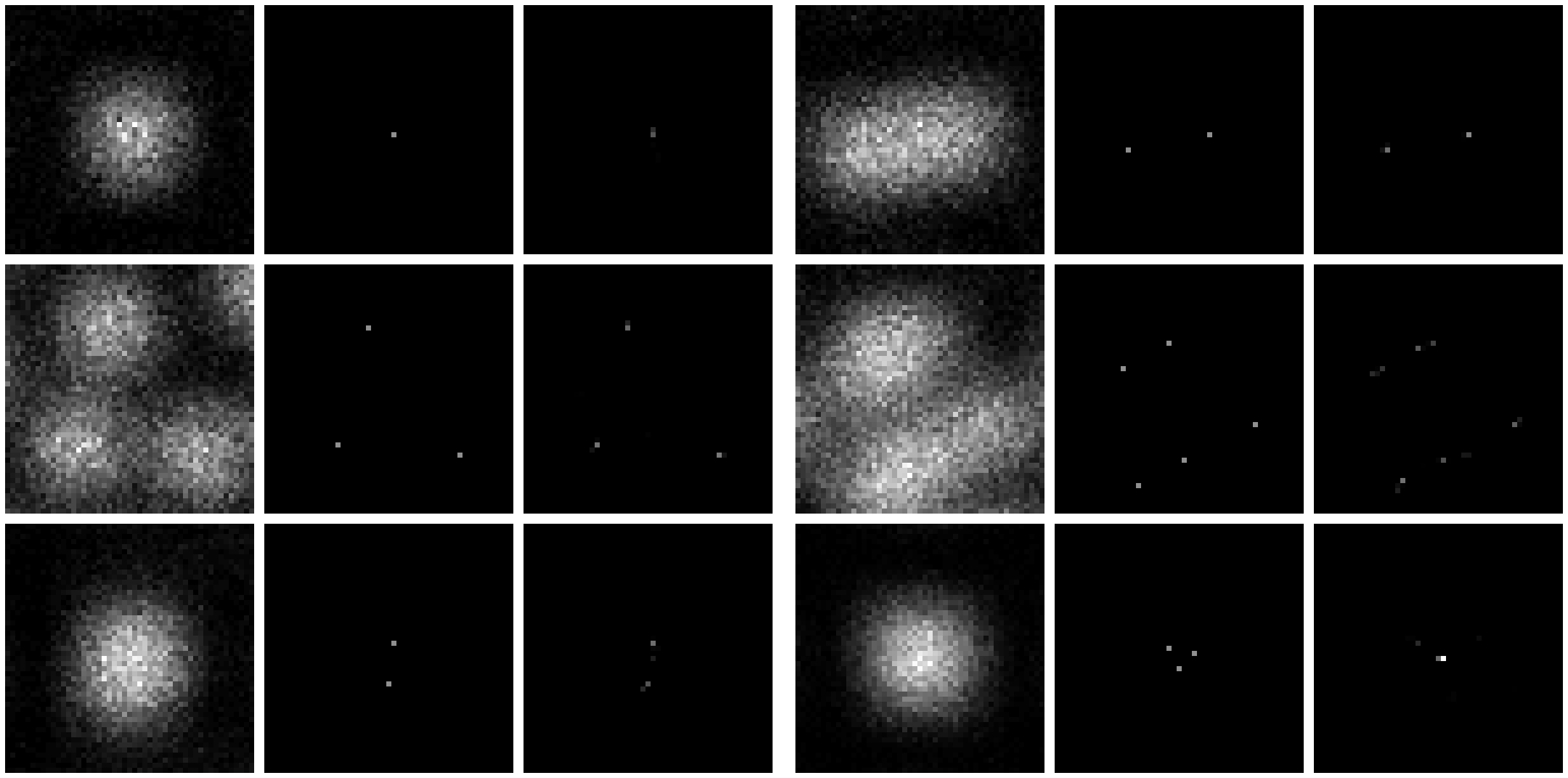}
	        \caption{}
			  \label{fig:numd}
	    \end{subfigure}%
            \caption{(a) Observed data
              $\mathrm{Pois}\left(\matQ_{\mathrm{tri,2D}}\vinp\right)$;
              (b) true signal $\vinp$; (c)  estimate produced
              by \fref{eq:find0}; (d) six zoomed-in vignettes
              corresponding to the boxes in (a); the
              rows-of-three in (d) show the observed data,
              the true signal and the estimate in this order.}
\label{fig:num}
\end{figure}

\section{Conclusion}
\label{sec:concl}

When a signal is positive and Rayleigh regular, then linear
programming solves the super-resolution problem with near-optimal
worst-case performance. Although the results presented in this paper
assume that the signal is supported on a discrete grid, extensions to
the continuum can be found in the companion paper~\cite{candes-14-2}.

A widely open research problem concerns the super-resolution of
complex-valued signals. In 1D, \cite{donoho92-09} shows that if the
signal belongs to $\rclass{4 \rsp}{\rsp}$, then stable
super-resolution is possible via exhaustive search. If the signal
belongs to $\rclass{4}{1}$, \cite{candes13} proves that stable
super-resolution can be achieved via $\ell_1$-minimization. Is there a
computationally feasible algorithm that achieves stable
super-resolution for signals in $\rclass{4 \rsp}{\rsp}$ with $\rsp>1$?
If no such algorithm is found, is it possible to show that this
problem is in some sense fundamentally difficult from a computational
viewpoint? 

\section*{Acknowledgments}

E.~C. is partially supported by NSF under grant CCF-0963835 and by the
Math + X Award from the Simons Foundation.  V.~M.~was supported by the
Swiss National Science Foundation fellowship for advanced researchers
under grant PA00P2\_139678. He is now supported by the Simons
Foundation. 

\appendix
	
\section{Proof of Theorem~\ref{thm:conv} }
\label{app:th3}
The proof uses the idea of \cite[Th.~4]{donoho90-06} with an important
difference: there, the authors provide a lower bound on a modulus of
continuity defined as $\sup_{\vinp_1,\vinp_2\in \setC}
{\onenorm{\vinp_1-\vinp_2}}/{\twonorm{\matQ_{\mathrm{flat,1D}}(\vinp_1-\vinp_2)}}$.
In our case, we are interested in a lower bound on
$\MC[\setC,\matQ]\define\sup_{\vinp_1,\vinp_2\in \setC}
{\onenorm{\vinp_1-\vinp_2}}/{\onenorm{\matQ(\vinp_1-\vinp_2)}}$.

Put $\vech=\tp{[h_0\cdots h_{\idim-1}]}=\vecx-\tilde\vecx$ and 
$\vecs=\tp{[s_0\cdots s_{\idim-1}]}=\matQ\vech$.  Then a standard
calculation shows that $\vinpfiltc_m$ can be written as
\begin{align}
	\label{eq:io4fref}
	\vinpfiltc_m=\sum_{l=0}^{\idim-1} \glow\lefto(\frac{m-l}{\idim}\right)  \vinperc_l,
\end{align}
where 
\begin{equation}
	\glow(t)= \frac{1}{(1+\fcn)\idim} \left(\frac{\sin((1+\fcn)\pi t)}{\sin(\pi t)}\right)^2
\end{equation}
is the Fej{\'e}r kernel.  The idea is to construct $\vech$ with at
most $2\rsp$ nonzero elements in such a way that for each $m$, the
terms in the sum~\fref{eq:io4fref} cancel each other out as much as
possible. One way to do this in a systematic way is to set
\[
	x_k  =
	\begin{cases}
          \frac{1}{2^{2\rsp-1}}{2\rsp-1 \choose k}, & k \in \{0, 2, \ldots, 2(r-1)\}\\
          0,& \text{otherwise}. 
	\end{cases}, \quad \tilde{x}_k = x_{k-1}
\] 
(with the periodic convention).  Obviously,
$\zeronorm{\vecx}=\zeronorm{\tilde \vecx}=\rsp$ and setting
$\rspt=2\rsp-1$ for convenience,
\begin{equation}
  \label{eq:Bfrej0}
	\onenorm{\vech}=\onenorm{\vecx-\tilde \vecx}=\frac{1}{2^\rspt}\sum_{l=0}^\rspt {\rspt \choose l}= 1. 
\end{equation}
With this, 
\begin{equation}
	\vinpfiltc_m
	=\sum_{l=0}^{\rspt}  \underbrace{ (-1)^l  \frac{1}{2^\rspt} {\rspt \choose l}}_{\vinperc_l} \glow\lefto(\frac{m-\rspt}{\idim}+\frac{\rspt-l}{\idim}\right) 
	= \frac{1}{2^\rspt}\FDOarg{1/\idim}{\frac{m-\rspt}{\idim}},
\end{equation}
where
\begin{equation}
	\FDO{\delta} = \sum_{l=0}^{\rspt} (-1)^l {\rspt \choose l} \glow\lefto(t+(\rspt-l)\delta\right)
\end{equation}
is the finite-difference operator of order $\rspt$ applied to the
kernel $\glow(\cdot)$. For large $\idim$ and large $\SRF$,
$\FDO{1/\idim} \approx \frac{1}{\idim^\rspt} \frac{d^\rspt \glow}{d
  t^\rspt}(t)$, a crucial fact allowing us to obtain closed-form
estimates on $\onenorm{\vecs}$.  Formally, write $s_m$ as a Fourier series 
\begin{equation}
		s_m=\frac{1}{\idim}\sum_{k=-\fcn}^{\fcn} e^{\iu 2\pi m k/\idim} q\lefto(\frac{k}{\fcn+1}\right) p_\rspt\lefto(\frac{k}{\idim}\right),
\end{equation}
where
\begin{equation}
		q(f)=\begin{cases}
			1-\abs{f}, &  f\in [-1,1],\\
			0,\ & \text{otherwise}, 
		\end{cases}
\end{equation}
and 
\begin{equation}
		p_\rspt(f)=\sum_{l=0}^\rspt e^{-\iu 2\pi l f} h_l.
\end{equation}
Now let 
\begin{equation}
	\label{eq:tm}
	t_m=\int_{-(\fcn+1)/\idim}^{(\fcn+1)/\idim} e^{\iu 2\pi m f} q\lefto(f\frac{\idim}{\fcn+1}\right) p_\rspt(f) df.
\end{equation}
It is not difficult to see that for all $\idim$
\begin{equation}
	\sum_{m=-\idim/2+1}^{\idim/2}\abs{t_m-s_m}\le \sum_{\abs{m}\ge \idim/2} \abs{t_m}
\end{equation}
and, therefore, since the series $\sum_{m=-\infty}^{\infty}\abs{t_m}$
converges, 
\begin{equation}
	\onenorm{\vecs}=\sum_{m=0}^{\idim-1}\abs{s_m}=\sum_{m=-\idim/2+1}^{\idim/2} \abs{s_m}\to \sum_{m=-\infty}^{\infty}\abs{t_m}
\end{equation}
when $\idim,\odimn\to\infty$ with $\idim/\odimn=\SRFn$ fixed. 
Using the fact that $q(f)=0$ for $\abs{f}>1/2$ and changing variables in the integral in~\fref{eq:tm} we can write
\begin{align}
	t_m&=\int_{-(\fcn+1)/\idim}^{(\fcn+1)/\idim} e^{\iu 2\pi m f} q\lefto(\frac{f\idim}{\fcn+1}\right) p_\rspt(f) df\\
	&=\frac{\fcn+1}{\idim} \int_{-1}^{1} e^{\iu 2\pi  ( m (\fcn+1)/\idim) f} q\lefto( f\right) p_\rspt\lefto( \frac{f (\fcn+1)}{\idim}\right) df.
\end{align}
We conclude that as $\idim,\odimn\to\infty$ with $\idim/\odimn=\SRFn$
fixed,
\begin{equation}
	\onenorm{\vecs}\to \left(\frac{1}{\SRFn}\right)^{\rspt}\frac{1}{\gfun{\rsp}{\SRFn}}
\end{equation}
where, 
\begin{equation}
	\frac{1}{\gfun{\rsp}{\eta}}\define \frac{1}{2^\rspt}\frac{1}{2\eta} \sum_{m=-\infty}^{\infty}\abs{ \int_{-1}^{1} e^{\iu 2\pi  (m/(2\eta)) f} q\lefto(f\right) (2\eta)^{2\rsp-1} p_{2\rsp-1}(f /(2\eta)) df}.
\end{equation}
Since the finite difference operator converges to the derivative operator as $\delta\to 0$:
\begin{equation}
	\frac{1}{\delta^\rspt}\FDOarg{\delta}{\cdot}\to  \frac{d^\rspt (\cdot)}{d t^\rspt },\quad \delta\to 0,
\end{equation}
it follows that for every fixed $f$,
\begin{equation}
	\eta^\rspt p_\rspt(f /\eta)\to \frac{1}{2^\rspt} (\iu 2 \pi f)^\rspt,\quad \eta\to\infty.
\end{equation}
Therefore, when $\SRFn\to\infty$,
\begin{equation}
	\gfun{\rsp}{\SRFn}\to \cnstl{\rsp}
\end{equation}
where 
\begin{equation}
	\label{eq:crsp}
	\cnstl{\rsp}\define 2^{2\rsp-1} \left(\pi^{2\rsp-1}  \int_{-\infty}^{\infty}\abs{ \int_{-1}^{1} e^{\iu 2\pi t f} q\lefto(f\right) f^{2 \rsp-1} df}d t\right)^{-1}.
\end{equation}
Direct numerical computation reveals
\begin{equation}
	\cnstl{\rsp}\ge
	\begin{cases}
		1.66 , & \rsp=1\\
		1.44 , & \rsp=2\\
		0.92, & \rsp=3\\
		0.48, & \rsp=4\\
		0.24, & \rsp=5.
	\end{cases}
\end{equation}

\section{Coherent Optics}
\label{app:cohopt}
When the illumination is perfectly coherent, the time-varying phasor amplitudes across the object plane differ only by complex constants so that we can write
\begin{equation}
	\Phi(\isp,\otime)=\Phi(\isp)\frac{\Phi(0,\otime)}{\sqrt{\left<\abs{\Phi(0,\otime)}^2\right>}}.
\end{equation}
Plugging this into~\fref{eq:quadratic1} we obtain
\begin{equation}
	\label{eq:quadratic2}
	\tilde s_\mathrm{coh}(\osp)\propto \left|\int h(\osp-\isp) \Phi(\isp) d\isp\right|^2.
\end{equation}
We see that in a coherent imaging system, the directly observable
received intensity, $\tilde s_\mathrm{coh}(\osp)$, is a nonlinear
(quadratic) function~\fref{eq:quadratic2} of the signal $\Phi(\isp)$.

\section{Proof of~\ref{eq:calpha}}
\label{app:technbnd}
By definition, $\matR=[\vecr_0\cdots\vecr_{\idim-1}]$ is a circulant matrix, and, hence, $\vecnorm{\matR}_{1,op}=\onenorm{\vecr_0}$. Further, by properties of circulant matrices, 
\begin{equation}
	\hat\vecr=\sqrt{\idim} \matF \vecr_0	
\end{equation}
or, equivalently,
\begin{equation}
	\vecr_0=\frac{1}{\sqrt{\idim}} \herm\matF\hat\vecr
\end{equation}
so that
\begin{equation}
	\vecnorm{\matR}_{1,op}=\frac{1}{\sqrt{\idim}}\onenorm{\herm\matF\hat\vecr}.
\end{equation}
We use the following lemma.
\begin{lem}
\label{lem:ftderiv}
Assume $\xseq_l$ is a discrete periodic signal with period
$\idim$. For each $l$, let
\begin{align}
	\yseq_l&=\xseq_l-\xseq_{l-1}, \label{eq:ydef}\\
	\zseq_l&=\yseq_l-\yseq_{l-1} \label{eq:zdef}
\end{align}
be the first and second differences of $\xseq_l$. Let $\xseqf_k,
\yseqf_k, \zseqf_k$ be $\idim$-periodic sequences of inverse \ac{DFT}
coefficients of $\xseq_l, \yseq_l, \zseq_l$, respectively. For
example,
\begin{equation}
	\xseqf_k = \frac{1}{\sqrt{\idim}} \sum_{l=-\idim/2+1}^{\idim/2} \xseq_l e^{2\pi\iu l k/\idim}. 
\end{equation}
Assume that 
\begin{equation}
	\sum_{k=-\idim/2+1}^{\idim/2}{\abs{\zseq_k}}\le A.
\end{equation} 
Then for all $k\ne 0 \mod \idim$,
\begin{equation}
	\abs{\xseqf_k}\le \frac{1}{\sqrt{\idim}}\frac{A}{2-2\cos\lefto(2 \pi k/\idim\right)}\quad \text{for all} \quad k\ne 0 \mod \idim.
\end{equation}
\end{lem}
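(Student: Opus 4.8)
The plan is to run the elementary ``finite difference $\leftrightarrow$ multiplication by $1-e^{2\pi\iu k/\idim}$'' correspondence twice. Write the inverse DFT as $\xseqf_k=\frac{1}{\sqrt\idim}\sum_{l}\xseq_l e^{2\pi\iu lk/\idim}$, the sum being over any window of $\idim$ consecutive indices. First I would record the shift identity: since $\yseq_l=\xseq_l-\xseq_{l-1}$, substituting $l\mapsto l+1$ in the shifted sum (legitimate because $\xseq$ is $\idim$-periodic) gives $\yseqf_k=\bigl(1-e^{2\pi\iu k/\idim}\bigr)\,\xseqf_k$ for every $k$. Applying the same identity to $\zseq_l=\yseq_l-\yseq_{l-1}$ yields $\zseqf_k=\bigl(1-e^{2\pi\iu k/\idim}\bigr)^2\,\xseqf_k$.

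Next I would use the elementary computation $\bigl|1-e^{2\pi\iu k/\idim}\bigr|^2=2-2\cos(2\pi k/\idim)$, which is strictly positive exactly when $k\ne 0\bmod\idim$. Taking absolute values therefore gives the exact identity $\abs{\xseqf_k}=\abs{\zseqf_k}\big/\bigl(2-2\cos(2\pi k/\idim)\bigr)$ on that range of $k$. Finally the triangle inequality bounds the numerator, $\abs{\zseqf_k}\le\frac{1}{\sqrt\idim}\sum_l\abs{\zseq_l}\le A/\sqrt\idim$, by hypothesis. Combining the last two steps produces $\abs{\xseqf_k}\le\frac{1}{\sqrt\idim}\,\frac{A}{2-2\cos(2\pi k/\idim)}$, which is the assertion.

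There is no genuinely hard step here: the lemma is essentially a one-line consequence of the DFT shift rule together with the triangle inequality. The only two points deserving a moment's care are the harmless re-indexing of the asymmetric summation window $\natseg{-\idim/2+1}{\idim/2}$, which is justified by $\idim$-periodicity, and the exclusion of $k\equiv 0\bmod\idim$, which is precisely what keeps the denominator $2-2\cos(2\pi k/\idim)$ from vanishing (for $k\equiv 0$ the quantity $\xseqf_0$ is just the mean of $\xseq$ and carries no information about its second difference).
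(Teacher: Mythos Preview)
Your proof is correct and follows essentially the same route as the paper: derive $\zseqf_k=(1-e^{2\pi\iu k/\idim})^2\,\xseqf_k$ from the DFT shift rule, invert on $k\not\equiv 0$, and bound $\abs{\zseqf_k}$ by the triangle inequality. The paper simply cites the shift identity as a standard fact rather than writing it out, but the argument is otherwise identical.
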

\begin{proof}
	By the theorem about the \ac{DFT} of first differences~\cite[p.~223]{oppenheim96},
	\begin{equation}
          \zseqf_k = \left(1-e^{2\pi\iu k/\idim}\right)^2  \xseqf_k	\end{equation}
	and, consequently,
	\begin{equation}
		\label{eq:frm1}
		\xseqf_k = \frac{1}{\left(1-e^{2\pi\iu k/\idim}\right)^2} \zseqf_k\quad \text{for all}\quad k\ne 0, \pm \idim, \pm 2\idim, \ldots.
	\end{equation}
	Next, observe that 
	\begin{equation}
		\label{eq:frm2}
		\abs{\zseqf_k} \le \frac{1}{\sqrt{\idim}} \sum_{l=-\idim/2+1}^{\idim/2} \abs{\zseq_l} \le\frac{A}{\sqrt{\idim}}. 
	\end{equation}
	Substituting \fref{eq:frm2} into \fref{eq:frm1} and using that
        $|1-\exp(2\pi\iu k/\idim)|^2= 2 - 2\cos( 2 \pi k/\idim)$
        concludes the proof.
\end{proof}

Set $\xseq_l=\hat r_l$, continued periodically with period $\idim$,
and define $\yseq_l,\zseq_l$ as in~\fref{eq:ydef} and~\fref{eq:zdef}.
Observe the following facts:
\[
\begin{array}{ll} 
  \yseq_{k}=0, & \quad k\in\natseg{-\idim/2+1}{-\shld\fc},\\
  \yseq_{k}=(\fcn+1)\abs{a}, & \quad k\in\natseg{-\fcn+1}{-\cutfn\fcn},\\
  \yseq_{-\cutfn\fcn+1}=-\frac{\fcn+1}{((1-\cutfn)\fcn+1)((1-\cutfn)\fcn+2)}=-\yseq_{\cutfn\fcn}, & \\ 
  \yseq_k=-(\fcn+1)\abs{a}, & \quad k\in\natseg{\cutfn\fcn+1}{\fcn},\\
  \yseq_k=0, & \quad k\in \natseg{\fcn+1}{\idim/2}
\end{array}
\]
and note that $\yseq_k$ is monotonically increasing on the intervals
$\natseg{-\idim/2+1}{-\cutfn\fcn}$,
$\natseg{-\cutfn\fcn+1}{\cutfn\fcn}$ and
$\natseg{\cutfn\fcn+1}{\idim/2}$.  From this it immediately follows
that
\begin{align}
	\sum_{k={-\idim/2+1}}^{-\cutfn\fcn} \abs{\zseq_k} & =\sum_{k=-\fcn+1}^{-\cutfn\fcn} \abs{\zseq_k}=\yseq_{-\cutfn\fcn}-\yseq_{-\fcn}=(1+\fcn)\abs{a},\\
	\abs{\zseq_{-\cutfn\fcn+1}} & = \frac{\fcn+1}{((1-\cutfn)\fcn+1)((1-\cutfn)\fcn+2)}+(\fcn+1)\abs{a}=\abs{\zseq_{\cutfn\fcn+1}},\\
	\sum_{k=-\cutfn\fcn+2}^{\cutfn\fcn} \abs{\zseq_k} & =\yseq_{\cutfn\fcn}-\yseq_{-\cutfn\fcn+1}=2\frac{\fcn+1}{((1-\cutfn)\fcn+1)((1-\cutfn)\fcn+2)},\\
	\sum_{k=\cutfn\fcn+2}^{\idim/2} \abs{\zseq_k} & =\yseq_{\fcn+1}-\yseq_{\cutfn\fcn+1}=(\fcn+1)\abs{a}
\end{align}
so that 
\begin{equation}
	\sum_{k=-\idim/2+1}^{\idim/2} \abs{\zseq_k}
	=4\frac{\fcn+1}{((1-\cutfn)\fcn+1)((1-\cutfn)\fcn+2)}+4(\fcn+1)\abs{a}\le A
\end{equation}
where
\begin{equation}
	A\define\frac{1}{(\fcn+1)}\underbrace{\frac{4}{(1-\cutfn)^2}}_D+\frac{2}{\fcn}  \underbrace{\frac{2}{(1-\cutfn)^2} }_E.
\end{equation}
Furthermore, a direct calculation reveals that 
\begin{align}
		\abs{ \xseqf_k} 
		&\le \frac{1}{\sqrt{\idim}} \sum_{l=-\idim/2+1}^{\idim/2} \abs{\xseq_l}\\
		&=\frac{1}{\sqrt{\idim}} \left[1+2\left(\sum_{l=1}^{\cutfn\fcn} \abs{\xseq_l}+\sum_{l=\cutfn\fcn+1}^{\fcn} \abs{\xseq_l}\right)\right]\\
		&\le\frac{1}{\sqrt{\idim}} \left[1+2\left(\frac{\cutfn\fcn}{2}\left(1+\frac{\fcn+1}{(1-\cutfn)\fcn+1}\right) + \frac{\fcn+1}{2((1-\cutfn)\fcn+2)}-\frac{\fcn+1}{((1-\cutfn)\fcn+1)((1-\cutfn)\fcn+2)}\right)\right]\\
		&\le \frac{\fcn}{2\sqrt{\idim}} \underbrace{\left(2\cutfn+\frac{2}{1-\cutfn}\right)}_B.
\end{align}
Finally, \fref{eq:calpha} follows from
\begin{align}
	\frac{1}{\sqrt{\idim}} \onenorm{\herm\matF\hat\vecr}
	&=\frac{1}{\sqrt{\idim}}\sum_{k=-\idim/2+1}^{\idim/2}\abs{\xseqf_k}\\
	&\le \frac{2}{\sqrt{\idim}}\sum_{k=0}^{\idim/(\fcn+1)}\abs{ \xseqf_k}+\frac{2}{\sqrt{\idim}}\sum_{k=\idim/(\fcn+1)+1}^{\idim/2}\abs{\xseqf_k}\\
	&\le \frac{\fcn}{\idim} \frac{\idim}{\fcn+1} B +\frac{2A}{\idim}\sum_{k=\idim/(\fcn+1)+1}^{\idim/2}\frac{1}{2-2\cos\lefto(k 2 \pi/\idim\right)}\\
	&\le  B+\frac{D}{2 \pi^2} +\frac{E}{2 \pi^2} \frac{2(\fcn+1)}{\fcn}\\
	&\le  B+\frac{D}{2 \pi^2} +3\frac{E}{2 \pi^2}, \label{eq:auxbndproof}
\end{align}
where we used that 
\begin{align}
	\sum_{k=\idim/(\fcn+1)+1}^{\idim/2}\frac{1}{2-2\cos\lefto(k 2 \pi/\idim\right)}
	&\le \int_{\idim/(\fcn+1)}^{\idim/2}\frac{1}{2-2\cos\lefto(f 2 \pi/\idim\right)} df\\
	&=\frac{\idim \cot(\pi/(\fcn+1) )}{4 \pi}\\
	&=\frac{\idim (\fcn+1)}{4 \pi} \frac{\cot( \pi/(\fcn+1))}{(\fcn+1)}
\end{align}
and 
\begin{equation}
	\frac{\cot( \pi/(\fcn+1))}{(\fcn+1)}\le \frac{1}{\pi} \quad\text{for all}\quad (\fcn+1)\ge 0.
\end{equation}
\section{Basic Lemma in the~\ac{2D} case}
\label{app:proof2d}
\begin{lem}
	\label{lem:dualdd}
        Fix $\idim,\odim$ and assume
        $\setT\in\rclasstwo{4.76}{1}{\idim}{\odim}$. Set
        $\fc=(\odim-1)/2, \lambdac=1/\fc$ and suppose $\fc\ge 512$.
        Then there exists a real-valued trigonometric polynomial
		 \begin{equation}
                   \dualpc(\vect; \idim, \odim)=\sum_{k_1=-\fc}^{\fc}\sum_{k_2=-\fc}^{\fc} \hat \dualpc_{k_1,k_2} e^{-\iu 2\pi (k_1 t_1+k_2 t_2)},\quad \vect=\tp{[t_1, t_2]},
		 \end{equation}
                 such that $\infnorm{q}\le 1$, and
\begin{equation}
	\begin{cases}
		q(\vect)=0,\quad \text{for all} \quad \vect\in\setT\\
		q(\vect)\ge \phi(\vect),\quad \text{for all} \quad \vect,
	\end{cases}
\end{equation}
where
	\begin{align}
		\label{eq:LBqdd}
		\phi(\vect)=\begin{cases}
                  c_1 \fc^2 \twonorm{\vect_0-\vect}^2,\ \text{for all}\  \vect\  \text{s.t.~} \exists \vect_0\in\setT\ \text{with}\ \infnorm{\vect-\vect_0}\le c_2\lambdac\\
                  c_3 \ \text{for all} \ \vect\in\left\{\tilde \vect:
                    \infnorm{\tilde \vect-\vect_0}\ge c_2 \lambdac\
                    \text{for all} \ \vect_0\in\setT\right\}.
		\end{cases}
	\end{align}
	Above, $c_1$, $c_2$, and $c_3$ are numerical constants.
\end{lem}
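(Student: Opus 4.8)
The plan is to deduce Lemma~\ref{lem:dualdd} from the two-dimensional dual-certificate machinery of \cite[Prop.~C.1]{candes13} and \cite[Sec.~D.1]{candes21-2}, in exact analogy with how \fref{lem:dual} is obtained from \cite[Lm.~2.5]{candes13} in 1D. The reduction is: first produce a real-valued trigonometric polynomial $\tilde q$ of degree at most $\fc$ in each variable, with $\infnorm{\tilde q}\le 1$, satisfying the interpolation conditions $\tilde q(\vect)=1$ and $\nabla\tilde q(\vect)=\veczero$ for every $\vect\in\setT$, together with the quantitative bounds $\tilde q(\vect)\le 1-c\,\fc^2\twonorm{\vect-\vect_0}^2$ whenever $\infnorm{\vect-\vect_0}\le c_2\lambdac$ for some $\vect_0\in\setT$, and $\tilde q(\vect)\le 1-c'$ for all other $\vect$. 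Then set $q\define\tfrac12(1-\tilde q)$: since the constant $1$ has zero frequency, $q$ is still band-limited to $[-\fc,\fc]^2$; $q\in[0,1]$ gives $\infnorm{q}\le 1$; $q$ vanishes precisely on $\setT$; and the two upper bounds on $\tilde q$ become exactly the lower bound $q(\vect)\ge\phi(\vect)$ of~\fref{eq:LBqdd} after renaming the numerical constants. Note that the $c_2\lambdac$-cubes around distinct points of $\setT$ are disjoint (they are separated by more than $2.38\lambdac$ in $\ell_\infty$), so the ``near'' and ``otherwise'' regimes genuinely partition $\ucirc^2$ and no seam bookkeeping is required.

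The polynomial $\tilde q$ is built by kernel interpolation. Let $\bar K(\vect)=K(t_1)\,K(t_2)$ be the tensor product of the one-dimensional low-pass kernel $K$ of \cite{candes13} (a fixed power of the Fej{\'e}r kernel, of degree at most $\fc$, normalized so that $K(0)=1$ and with rapidly decaying $K,K',K''$), and seek
\begin{equation}
	\tilde q(\vect)=\sum_{\vect_k\in\setT}\Bigl(\alpha_k\,\bar K(\vect-\vect_k)+\inner{\beta_k}{\nabla\bar K(\vect-\vect_k)}\Bigr),\qquad \alpha_k\in\reals,\ \beta_k\in\reals^2,
\end{equation}
with the $3\abs{\setT}$ coefficients fixed by the constraints $\tilde q(\vect_j)=1$ and $\nabla\tilde q(\vect_j)=\veczero$. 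The resulting linear system has a $3\times 3$-block structure whose diagonal blocks are governed by $\bar K(\veczero)$, $\nabla\bar K(\veczero)=\veczero$ and $\nabla^2\bar K(\veczero)$ (negative definite, eigenvalues of order $-\fc^2$), and whose off-diagonal blocks involve $\bar K,\nabla\bar K,\nabla^2\bar K$ evaluated at the rescaled differences $\vect_j-\vect_k$. The hypothesis $\setT\in\rclasstwo{4.76}{1}$ forces all pairwise $\ell_\infty$-separations to exceed $2.38\lambdac$; summing the kernel-decay estimates over the ensuing separated two-dimensional point set controls the off-diagonal part in operator norm, and once $\fc\ge 512$ this is small enough for a Neumann-series argument to make the system invertible and to yield $\alpha_k=1+O(\text{small})$ and $\beta_k=O(\lambdac)$ with explicit constants.

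With these coefficient bounds in hand, the last step is to estimate $\tilde q$ on $\ucirc^2$ by splitting into the $c_2\lambdac$-cubes around the $\vect_k$ and their complement. In a near cube about $\vect_k$, a second-order Taylor expansion, using $\tilde q(\vect_k)=1$, $\nabla\tilde q(\vect_k)=\veczero$, the fact that the self-term makes $\nabla^2\tilde q(\vect_k)\preceq -c\,\fc^2 I$, and a uniform bound on the third derivatives of $\tilde q$ over a $\lambdac$-cube, gives the quadratic estimate $\tilde q(\vect)\le 1-c\,\fc^2\twonorm{\vect-\vect_k}^2$; in the complementary region the decay of $\bar K$ and $\nabla\bar K$ together with the coefficient bounds gives $\tilde q(\vect)\le 1-c'$ directly. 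I expect the conditioning argument of the second paragraph to be the main obstacle: compared with the 1D case each support point now contributes a $3\times 3$ block, the interfering neighbours form a two-dimensional configuration, and pinning down the explicit thresholds $4.76$ and $\fc\ge 512$ requires summing tensor-product kernel and derivative bounds over that configuration with care — which is precisely the content of \cite[Prop.~C.1]{candes13} and \cite[Sec.~D.1]{candes21-2}, and which I would invoke rather than reprove here.
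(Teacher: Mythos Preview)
Your proposal is correct and matches the paper's own treatment: the paper does not prove \fref{lem:dualdd} at all but simply states it in \fref{app:proof2d} and defers to \cite[Prop.~C.1]{candes13} and \cite[Sec.~D.1]{candes21-2} for the proof, exactly as you do. Your sketch of the kernel-interpolation construction, the $q=\tfrac12(1-\tilde q)$ reduction, and the near/far splitting accurately summarizes the content of those references, so you have in fact supplied more detail than the paper itself.
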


\bibliographystyle{hieeetr}
\bibliography{IEEEabrv,publishers,confs-jrnls,vebib}

\end{document}